\newtheorem{theorem}{Theorem}[section]
\newtheorem{lemma}[theorem]{Lemma}
\newtheorem{proposition}[theorem]{Proposition}
\theoremstyle{remark}
\newtheorem{definition}[theorem]{Definition}
\newtheorem*{notation*}{Notations} 
\newtheorem{assumption}[theorem]{Assumption}
\newtheorem{remark}[theorem]{Remark}
\def \cF{{\mathcal F}}
\def \cD{{\mathcal{D}}}
\def \R{\mathbb{R}}
\def\half{\frac{1}{2}}
\def\D{\text{d}}
\def\hp{H_+}
\def\Oo{\mathcal{O}}
\def\NN{\mathbb{N}}
\def\NNt{\widetilde{\NN }}
\def \E{\mathrm{e}}
\def \I{\mathtt{i}}
\newcommand{\VIX}{\mathrm{VIX}}
\newcommand{\dd}{\boldsymbol{\mathrm{d}}}
\newcommand{\err}{\boldsymbol{\varepsilon}}
\newcommand{\Hf}{\mathfrak{H}}
\newcommand{\Hft}{\widehat{\mathfrak{H}}}
\begin{document}

 \author[O.~Bonesini]{Ofelia Bonesini}
\address{Ofelia Bonesini, Department of Mathematics, University of
Padova}\email{obonesin@ic.ac.uk}

\author[G.~Callegaro]{Giorgia Callegaro}
\address{Giorgia Callegaro, Department of Mathematics, University of
Padova, Via Trieste 63, 35121 Padova, Italy.}\email{gcallega@math.unipd.it}

\author[A.~Jacquier]{Antoine Jacquier}
\address{Antoine Jacquier, Department of Mathematics, Imperial College London, London SW7 1NE, UK.}\email{a.jacquier@imperial.ac.uk}

\title[]{Functional quantization of rough volatility and applications to volatility derivatives}

\date{\today}

\begin{abstract}
		We develop a product functional quantization of rough vo\-la\-ti\-li\-ty.
		Since the optimal quantizers can be computed offline, this new technique, 
		built on the insightful works by Luschgy and Pag\`es~\cite{Pages29, Sharp, Pages},
		becomes a strong competitor in the new arena of numerical tools for rough volatility.
		We concentrate our numerical analysis on the pricing of options on the VIX and realized variance in the rough Bergomi model~\cite{Bayer}
		and compare our results to other benchmarks recently suggested.
\end{abstract}
	
\maketitle

{\bf Keywords}: Riemann-Liouville process, Volterra process, functional quantization, series expansion, rough volatility, VIX options.

%

%
%
%
%
%

\section{Introduction}\label{intro}
Gatheral, Jaisson and Rosenbaum~\cite{Vol_is_r} recently introduced a new framework for financial modelling. 
To be precise --- according to the reference website \url{https://sites.google.com/site/roughvol/home} --- almost twenty-four hundred days have passed since instantaneous volatility was shown to have a rough nature, 
in the sense that its sample paths are $\alpha$-H\"{o}lder-continuous with $\alpha < \frac{1}{2}$.
Many studies, both empirical~\cite{Bennedsen, Vol_is_r?,Fukasawa} and theoretical~\cite{Fukasawa_T, Alos_T},
have confirmed this, showing that these so-called rough volatility models are a more accurate fit to the implied volatility surface and to estimate historical volatility time series.

On equity markets, the quality of a model is usually measured by its ability to calibrate not only to the SPX implied volatility but also VIX Futures and the VIX implied volatility.
The market standard models had so far been Markovian, 
in particular the double mean-reverting process~\cite{Gatheral_Mark_Model, Huh}, Bergomi's model~\cite{Bergomi} and, to some extent, jump models~\cite{Carr, Kokholm}.
However, they each suffer from several drawbacks, which the new generation of rough volatility models seems to overcome.
For VIX Futures pricing, the rough version of Bergomi's model was thoroughly investigated in~\cite{Muguruza}, showing accurate results.
Nothing comes for free though and the new challenges set by rough volatility models lie on the numerical side,
as new tools are needed to develop fast and accurate numerical techniques.
Since classical simulation tools for fractional Brownian motions are too slow for realistic purposes, new schemes have been proposed to speed it up, 
among which the Monte Carlo hybrid scheme~\cite{Bennedsen, McCrickerd}, 
a tree formulation~\cite{RoughTrees}, quasi Monte-Carlo methods~\cite{Bayer_MC}
and Markovian approximations~\cite{AbiJaber, Chen}. 

We suggest here a new approach, based on product functional quantization~\cite{Pages}.
Quantization was originally conceived as a discretization technique to approximate a continuous signal by a discrete one~\cite{Sheppard}, 
later developed at  Bell Laboratory in the 1950s for signal transmission~\cite{Gersho}. 
It was however only in the 1990s that its power to compute (conditional) expectations of functionals of random variables~\cite{GrafLus}
was fully understood.
Given an $\R^d$-valued random vector on some probability space, 
optimal vector quantization investigates how to select an $\R^d$-valued random vector~$\widehat{X}$, 
supported on at most~$N$ elements, that best approximates~$X$ according to a given criterion
(such as the $L^r$-distance, $r\geq 1$).
Functional quantization is the infinite-dimensional version, 
approximating a stochastic process with a random vector taking a finite number of values in the space of trajectories for the original process.
It has been investigated precisely~\cite{Pages29, Pages} in the case of Brownian diffusions,
in particular for financial applications~\cite{PP2005}.
However, optimal functional quantizers are in general hard to compute numerically and instead product functional quantizers provide a rate-optimal (so, in principle, sub-optimal) alternative often admitting closed-form expressions~\cite{Sharp, PP2005}.
\\

In Section~\ref{Sect_RL} we briefly review important properties of \emph{Gaussian Volterra processes},
 displaying a series expansion representation,
and paying special attention to the \emph{Riemann-Liouville} case in Section~\ref{Sect_RL_1}.
This expansion yields, in Section~\ref{section_quant}, a product functional quantization of the processes, 
that shows an $L^2$-error of order $\log(N)^{-H}$, with~$N$ the number of paths and~$H$ a regularity index.
We then show, in Section~\ref{Sect_staz}, that these functional quantizers, although sub-optimal, are stationary.
We specialise our setup to the generalized rough Bergomi model in Section~\ref{VIX_F} 
and show how product functional quantization applies to the pricing of VIX Futures and VIX options, proving in particular precise rates of convergence.
Finally, Section~\ref{Sect_Num_res} provides a numerical confirmation of 
the quality of our approximations for VIX Futures  and Call Options on the VIX in the rough Bergomi model,
benchmarked against other existing schemes.
In this Section, we also discuss how product functional quantization of the Riemann-Liouville process itself can be exploited to price options on realized variance.

\begin{notation*}
	We set~$\NN$ as the set of strictly positive natural numbers. We denote by~$\mathcal C[0,1]$ the space of real-valued continuous functions over $[0,1]$ and by $L^2[0,1]$ 
	 the Hilbert space of real-valued square integrable functions on $[0,1]$, with inner product   $\langle f, g \rangle_{L^2[0,1]}:= \int_0^1 f(t)g(t) dt$,
inducing the norm $\| f \|_{L^2[0,1]} := ( \int_0^1 |f(t)|^2 dt)^{1/2}$, for each $ f, g \in L^2[0,1]$.
$L^2(\mathbb P)$ denotes the space of square integrable (with respect to $\mathbb P$) random variables.
\end{notation*}

\section{Gaussian Volterra processes on $\mathbb R_+$}\label{Sect_RL}

For clarity, we restrict ourselves to the time interval $[0,1]$.
Let $\{W_{t}\}_{t\in[0,1]}$ be a standard Brownian motion on a filtered probability space $(\Omega,\cF,\{\cF_t\}_{t \in[0,1]},\mathbb{P})$, with $\{\cF_t\}_{t \in[0,1]}$ its natural filtration.
On this probability space we introduce the Volterra process
\begin{equation}\label{GV}
	Z_t
	:= \int_0^t K(t-s)dW_s, \qquad t\in [0,1],
\end{equation}
and we consider the following assumptions for the kernel~$K$:
\begin{assumption}\label{Assumpt_Bennedsen}\ 
There exist $\alpha \in \left( -\half, \half\right)\setminus \{0\}$ and $L:(0,1] \to (0, \infty)$ continuously differentiable, slowly varying at $0$, that is, for any $t>0$, $\lim_{x \downarrow 0} \frac{L(t x)}{L(x)} = 1$, and bounded away from $0$ function with
$|L'(x)| \leq C(1+x^{-1})$, for $x \in (0,1]$, for some $C>0$, such that
$$
K(x)= x^{\alpha}L(x), \qquad x \in (0,1].
$$
\end{assumption}

This implies in particular that $K \in L^2[0,1]$, so that the stochastic integral~\eqref{GV} is well defined.
The Gamma kernel, with $K(u)= e^{-\beta u} u^{\alpha}$, for $\beta > 0$
and $\alpha \in (-\frac{1}{2}, \frac{1}{2})\setminus \{0\}$,  is a classical example satisfying Assumption~\ref{Assumpt_Bennedsen}.  
Straightforward computations show that the covariance function of~$Z$ 
reads
\begin{equation}\label{cov}
	R_{Z}(s,t)=\int_0^{t \wedge s}K(t-u)K(s-u)du, \quad s, t \in [0,1].
\end{equation}

Under Assumption~\ref{Assumpt_Bennedsen}, $Z$ is a Gaussian process admitting a version which is 
$\varepsilon$-H\"{o}lder continuous for any $\varepsilon < \frac{1}{2} + \alpha=H$ 
and hence also admits a continuous version~\cite[Proposition 2.11]{Bennedsen}.


\subsection{Series expansion}\label{RL:series_repr}
We introduce a series expansion representation for the centered Gaussian process~$Z$ in~\eqref{GV},
which will be key to develop its functional quantization. 
Inspired by~\cite{Sharp}, introduce the stochastic process
\begin{equation}\label{series}
	Y_t{:=}\sum_{n\geq 1} \mathcal{K}[\psi_n](t) \xi_n, \qquad t \in [0,1],
\end{equation}
where $\{\xi_n\}_{n\geq1}$ is a sequence of i.i.d. standard Gaussian random variables,
$\{\psi_n\}_{n\geq1}$ denotes the orthonormal basis of $L^2[0,1]$:
\begin{equation}\label{eq_psi_n}
	\psi_n(t)=\sqrt{2}\cos\left(\frac{t}{\sqrt{\lambda_n}}\right),
	\quad \text{ with }\lambda_n=\frac{4}{(2n-1)^2 \pi^2},
\end{equation}
and the operator $\mathcal{K} : {L}^2[0,1] \to  \mathcal{C}[0,1]$ is defined for $f \in {L}^2[0,1]$ as 
\begin{equation}\label{def_ker_KH}
	\mathcal{K}[f](t)
	:=\int_0^t K(t-s)f(s)ds,
\qquad\text{for all }t \in [0,1].
\end{equation}

\begin{remark}\label{gau}
The stochastic process $Y$ in~\eqref{series} is defined as a weighted sum of independent centered Gaussian variables, so for every $t \in [0,1]$ the random variable $Y_t$ is a centered Gaussian random variable and the whole process $Y$ is Gaussian with zero mean. 
\end{remark}


We set the following assumptions on the functions $\{\mathcal{K}[\psi_n]\}_{n \in \NN}$:
\begin{assumption}\label{AssumptionAB}
There exists $H \in (0,\half)$ such that
\begin{itemize}
\item[\textbf{(A)}] there is a constant $C_1 > 0$ for which, for any $ n \geq 1$, $ \mathcal{K}[\psi_n]$ is $(H+\frac{1}{2})$-H\"older continuous, with 
$$
\sup_{s,t \in [0,1], s\neq t}\frac{| \mathcal{K}[\psi_n] (t)- \mathcal{K}[\psi_n] (s) |}{|t-s|^{H+\frac{1}{2}}} \leq C_1 n;
$$
		\item[\textbf{(B)}] 
		there exists a  constant $C_2 > 0$ such that
		$$
		\sup_{t \in [0,1]}|\mathcal{K}[\psi_n](t)| \leq C_2 n^{-(H+\frac{1}{2})}, \quad \text{ for all } n \geq 1.
		$$
	\end{itemize}
\end{assumption}

Notice that under these assumptions, the series~\eqref{series} converges both almost surely and in $L^2(\mathbb{P})$ for each $t \in [0,1]$ by Khintchine-Kolmogorov Convergence Theorem~\cite[Theorem 1, Section 5.1]{Chow}.

It is natural to wonder whether Assumption~\ref{Assumpt_Bennedsen} implies
Assumption~\ref{AssumptionAB} given the basis functions~\eqref{eq_psi_n}. This is far from trivial in our general setup and we provide examples and justifications later on for models of interest. Similar considerations with slightly different conditions can be found in~\cite{Sharp}.
We now focus on the variance-covariance structure of the Gaussian process $Y$.
\begin{lemma}\label{lem:covY}
	For any $s,t \in [0,1]$, the covariance function of~$Y$ is given by
	$$
	R_{Y}(s,t):= \mathbb{E}[Y_s Y_t] = \int_0^{t \wedge s} K(t-u)K(s-u) du. 
	$$
\end{lemma}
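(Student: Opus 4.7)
The plan is to interchange the expectation with the infinite sums and then recognise the result as a Parseval identity applied to the orthonormal basis $\{\psi_n\}_{n \geq 1}$ of $L^2[0,1]$. First, I would observe that for each fixed $t \in [0,1]$, the function $u \mapsto K(t-u)\mathbf{1}_{[0,t]}(u)$ lies in $L^2[0,1]$ (this follows from Assumption~\ref{Assumpt_Bennedsen}, which ensures $K \in L^2[0,1]$), and its $n$-th Fourier coefficient with respect to $\{\psi_n\}$ is precisely
$$
\int_0^1 K(t-u)\mathbf{1}_{[0,t]}(u)\,\psi_n(u)\,\dd u = \int_0^t K(t-u)\psi_n(u)\,\dd u = \mathcal{K}[\psi_n](t).
$$

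Next, I would verify that the series defining~$Y_t$ converges in $L^2(\mathbb{P})$, so that termwise computation of covariances is legitimate. Since the $\xi_n$ are i.i.d.\ standard Gaussians, the partial sums $Y_t^{(N)} := \sum_{n=1}^N \mathcal{K}[\psi_n](t)\,\xi_n$ form a Cauchy sequence in $L^2(\mathbb{P})$ as soon as $\sum_{n\geq 1} \mathcal{K}[\psi_n](t)^2 < \infty$. This in fact follows directly from Parseval's identity applied to $K(t-\cdot)\mathbf{1}_{[0,t]}$, but for a cleaner a~priori bound one can invoke Assumption~\ref{AssumptionAB}(B), which gives $|\mathcal{K}[\psi_n](t)| \leq C_2\, n^{-(H+1/2)}$ with $H > 0$, ensuring summability of the squares uniformly in $t$.

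With $L^2(\mathbb{P})$-convergence in hand, independence and $\mathbb{E}[\xi_m \xi_n]=\delta_{mn}$ yield
$$
R_Y(s,t) = \mathbb{E}[Y_s Y_t] = \lim_{N \to \infty} \mathbb{E}\!\left[Y_s^{(N)}Y_t^{(N)}\right] = \sum_{n\geq 1} \mathcal{K}[\psi_n](s)\,\mathcal{K}[\psi_n](t).
$$
Applying Parseval's identity in $L^2[0,1]$ to the two functions $u\mapsto K(s-u)\mathbf{1}_{[0,s]}(u)$ and $u\mapsto K(t-u)\mathbf{1}_{[0,t]}(u)$, whose Fourier coefficients were just identified, then gives
$$
\sum_{n \geq 1} \mathcal{K}[\psi_n](s)\,\mathcal{K}[\psi_n](t)
= \int_0^1 K(s-u)K(t-u)\,\mathbf{1}_{[0,s]}(u)\mathbf{1}_{[0,t]}(u)\,\dd u
= \int_0^{s \wedge t} K(s-u)K(t-u)\,\dd u,
$$
which is the desired identity.

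The only mildly delicate point is the $L^2(\mathbb{P})$-convergence that licenses exchanging sum and expectation; once this is in place (via Assumption~\ref{AssumptionAB}(B), or equivalently via square-integrability of $K(t-\cdot)$), the rest is a direct application of Parseval. I would also mention in passing that the same computation shows $Y_t - Y_t^{(N)}$ is orthogonal, in $L^2(\mathbb{P})$, to every $\xi_m$ with $m \leq N$, which will be useful for the functional quantization arguments later on.
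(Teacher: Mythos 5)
Your proposal is correct and follows essentially the same route as the paper's proof: identify $\mathcal{K}[\psi_n](t)$ as the Fourier coefficient $\langle K(t-\cdot)\mathbf{1}_{[0,t]},\psi_n\rangle_{L^2[0,1]}$, use independence of the $\xi_n$ to reduce $\mathbb{E}[Y_sY_t]$ to $\sum_{n}\mathcal{K}[\psi_n](s)\mathcal{K}[\psi_n](t)$, and conclude by Parseval. The only difference is that you explicitly justify the interchange of expectation and summation via $L^2(\mathbb{P})$-convergence, a point the paper's proof leaves implicit (and addresses separately later via Assumption~\ref{AssumptionAB}\textbf{(B)}); this is a welcome refinement, not a different argument.
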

\begin{proof}
	Exploiting the definition of $Y$ in~\eqref{series}, the definition of $\mathcal{K}$ in~\eqref{def_ker_KH} and the fact that the random variable $\xi_n$'s are i.i.d. standard Normal, 
	we obtain
	\begin{eqnarray*}
		R_{Y}(s,t)&=&\mathbb{E}[Y_s Y_t]
		=\mathbb{E}\Bigg[\Big(\sum_{n\geq 1} \mathcal{K}[\psi_n](s) \xi_n\Big)\Big(\sum_{m\geq 1} \mathcal{K}[\psi_m](t) \xi_m\Big)\Bigg]
		=\sum_{n\geq 1} \mathcal{K}[\psi_n](s)\mathcal{K}[\psi_n](t)\\&
		= & \sum_{n \geq 1} \Big( \int_0^1 K(s-u) \bold 1_{[0,s]}(u) \psi_n(u) du  \int_0^1 K(t-r) \bold 1_{[0,t]}(r) \psi_n(r) dr \Big)\\&
		= & \sum_{n \geq 1}  \langle K(s-\cdot) \bold 1_{[0,s]}(\cdot), \psi_n \rangle_{L^2[0,1]}  \cdot \langle K(t-\cdot)\bold 1_{[0,t]}(\cdot), \psi_n \rangle_{L^2[0,1]}\\&
		= & \sum_{n \geq 1}  \Big\langle K(t-\cdot) \bold 1_{[0,t]}(\cdot),\langle K(s-\cdot)\bold 1_{[0,s]}(\cdot), \psi_n \rangle_{L^2[0,1]}  \psi_n \Big\rangle_{L^2[0,1]}\\&
		= & \Big\langle K(t-\cdot) \bold 1_{[0,t]}(\cdot),\sum_{n \geq 1} \langle K(s-\cdot) \bold 1_{[0,s]}(\cdot), \psi_n \rangle_{L^2[0,1]}  \psi_n \Big\rangle_{L^2[0,1]}\\&
		= & \langle K(t-\cdot) \bold 1_{[0,t]}(\cdot) , K(s-\cdot) \bold 1_{[0,s]}(\cdot) \rangle_{L^2[0,1]}\\&
		= & \int_0^1  K(s-u) \bold 1_{[0,s]}(u) K(t-u) \bold 1_{[0,t]}(u) du 
		= \int_0^{t \wedge s} K(t-u) K(s-u) du. 
	\end{eqnarray*}
\end{proof}

\begin{remark}\label{Cont_v_Y}
	Notice that the centered Gaussian stochastic process $Y$ admits a continuous version, too. 
	Indeed, we have shown that $Y$ has the same mean and covariance function as $Z$ and, consequently, that the increments of the two processes share the same distribution. 
	Thus,~\cite[Proposition 2.11]{Bennedsen} applies to $Y$ as well, yielding that the process admits a continuous version.
	This last key property of~$Y$ can be alternatively proved directly as done in Appendix~\ref{App_P_Cont_v_Y}.
\end{remark}

Lemma~\ref{lem:covY} implies that $\mathbb{E}[Y_s Y_t]=\mathbb{E}[Z_s Z_t],$ for all $ s,t \in [0,1]$. 
Both~$Z$ and~$Y$ are continuous, centered, Gaussian with the same covariance structure,
so from now on we will work with~$Y$, using
\begin{equation}\label{eq:Zseries}
Z	= \sum_{n \geq 1} \mathcal{K}[\psi_n] \xi_n,\quad \mathbb{P}\text{-a.s.}
\end{equation}

\subsection{The Riemann - Liouville case}\label{Sect_RL_1}

For $K(u) = u^{H-\frac{1}{2}} $, with $H \in (0, \frac{1}{2})$, 
the process~\eqref{GV} takes the form
\begin{equation}\label{RL}
	Z^H_t
	:= \int_0^t (t-s)^{H-\frac{1}{2}}dW_s
	, \qquad t\in [0,1],
\end{equation}
where we add the superscript~$H$ to emphasise its importance. 
It is called a \emph{Riemann-Liouville process} (henceforth RL)
(also known as \emph{Type II fractional Brownian motion} or \emph{ L\'evy fractional Brownian motion}), 
as it is obtained by applying the Riemann-Liouville fractional operator to the standard Brownian motion, and is an example of a Volterra process.
This process enjoys properties similar to those of the fractional Brownian motion
(fBM), in particular being $H$-self-similar and centered Gaussian.
However, contrary to the fractional Brownian motion, its increments are not stationary.
For a more detailed comparison between the fBM and $Z^H$ we refer to~\cite[Theorem 5.1]{Picard}.
In the RL case, the covariance function $R_{Z^H}(\cdot,\cdot)$ is available~\cite[Proposition~2.1]{Jacquier} explicitly as
\begin{displaymath}
	R_{Z^H}(s,t)= \frac{1}{H+ \frac{1}{2}}(s \land t)^{H+\frac{1}{2}}(s \lor t)^{H-\frac{1}{2}} \ {}_2 F_1 \left( 1,\frac{1}{2}-H;  2H+1;  \frac{s \land t}{s \vee t} \right), \quad s, t \in [0,1],
\end{displaymath}
where ${}_2 F_1(a,b;c;z)$ denotes the Gauss hypergeometric function ~\cite[Chapter~5, Section~9]{Olver}.
More generally,~\cite[Chapter~5, Section~11]{Olver}, the generalized Hypergeometric functions ${}_p F_q(z)$ are defined as 
\begin{equation} \label{eq:Fpq}
	{}_p F_q(z) 
	= {}_p F_q( a_1, a_2, \dots, a_p;c_1,c_2, \dots, c_q;z) 
	:= \sum_{k=0}^\infty \frac{ {(a_1)}_k {(a_2)}_k \cdots {(a_p)}_k }{ {(c_1)}_k {(c_2)}_k \cdots {(c_q)}_k} \frac{z}{k !},
\end{equation}
with the Pochammer's notation ${(a)}_0:=1$ and ${(a)}_k := a (a+1) (a+2) \cdots (a+k-1)$, for $k\geq 1$, where none of the~$c_k$ are negative integers or zero. 
For $p \le q$ the series~\eqref{eq:Fpq} converges for all~$z$ and when $p=q+1$ convergence holds for $|z| < 1$ 
and the function is defined outside this disk by analytic continuation. 
Finally, when $p > q+1$ the series diverges for nonzero~$z$ unless one of the $a_k$'s is zero or a negative integer.

Regarding the series representation~\eqref{series}, we have, for $t \in [0,1]$ and $ n \geq 1$,
\begin{align}\label{eq_rep_KH}
	\mathcal{K}_H[\psi_n](t)
	 : & =  \sqrt{2} \int_0^t (t-s)^{H-\frac{1}{2}} \cos\Big( \frac{s}{\sqrt{\lambda_n}} \Big) ds\\
	& =  \frac{2\sqrt{2}}{1+2H} \ t^{H+\frac{1}{2}} \ {}_1F_2\left(1;\frac{3}{4}+\frac{H}{2},\frac{5}{4}+\frac{H}{2};-\frac{t^2}{4 \lambda_n} \right).
\end{align}

Assumption~\ref{AssumptionAB} holds in the RL case here using~\cite[Lemma 4]{Sharp}
(identifying $\mathcal{K}_H[\psi_n]$ to~$f_n$ from~\cite[Equation (3.7)]{Sharp}).
Assumption~\ref{AssumptionAB} \textbf{(B)} implies that,
for all $t \in [0,1]$,
\[
	\sum_{n\geq 1} \mathcal{K}_H[\psi_n](t)^2
	\leq \sum_{n \geq 1} \left(\sup_{t \in [0,1]}|\mathcal{K}_H[\psi_n](t)| \right)^2 
	\leq C_2^2 \sum_{n\geq 1 } \frac{1}{n^{1+2H}}
	< \infty,
\]
and therefore the series~\eqref{series} converges both almost surely and in $L^2(\mathbb{P})$ for each $t \in [0,1]$ by Khintchine-Kolmogorov Convergence Theorem~\cite[Theorem 1, Section 5.1]{Chow}.

\begin{remark}
The expansion~\eqref{series} is in general not a Karhunen-Lo\`eve decomposition~\cite[Section 4.1.1]{PP2005}. 
In the RL case, it can be numerically checked that the basis $\{\mathcal{K}_H[\psi_n]\}_{n \in \NN}$ is not orthogonal in $L^2[0,1]$ and does not correspond to eigenvectors for the covariance operator of the Riemann-Liouville process. 
In his PhD Thesis~\cite{Corlay}, Corlay exploited  a numerical method to obtain approximations of the first terms in the K-L expansion of processes for which an explicit form is not available.
\end{remark}

\section{Functional quantization and error estimation}\label{section_quant}
Optimal (quadratic) vector quantization was conceived to approximate a square integrable random vector $X: (\Omega, \mathcal F, \mathbb P) \rightarrow \mathbb R^d$ 
by another  one~$\widehat X$, taking at most a finite number~$N$ of values, 
on a grid $\Gamma^N := \{ x_1^N, x_2^N, \dots, x_N^N \}$, with $x_i^N \in \mathbb R^d, i=1,\dots, N$. 
The quantization of $X$ is defined as $\widehat X:= \mathrm{Proj}_{\Gamma^N}(X) $, where $\mathrm{Proj}_{\Gamma^N}: \mathbb R^d \rightarrow \Gamma^N$ denotes the nearest neighbour projection. 
Of course the choice of the $N$-quantizer $\Gamma^N$ is based on a given optimality criterion: in most cases $\Gamma^N$ minimizes the distance $\mathbb{E}[ | X - \widehat X |^2]^{1/2}$. 
We recall basic results for one-dimensional standard Gaussian, 
which shall be needed later, and refer to~\cite{GrafLus} for a comprehensive introduction to quantization.

\begin{definition}\label{def_quant_normal_rv}
	Let $\xi$ be a one-dimensional standard Gaussian on a probability space $(\Omega,\cF,\mathbb{P})$. 
For each $n \in \NN$, we define the \emph{optimal quadratic $n$-quantization of $\xi$}  as the random variable $\widehat{\xi}^n:= \mathrm{Proj}_{\Gamma^n}(\xi) = \sum_{i=1}^n x_i^n 1_{C_i(\Gamma^n)}(\xi)$, where $\Gamma^n = \{x_1^n, \dots,x_n^n\}$ is the unique optimal quadratic $n$-quantizer of $\xi$, namely the unique solution to the minimization problem
	\[
		\min_{\Gamma^n \subset \R, \mathrm{Card}(\Gamma^n)=n} \mathbb{E}[|\xi - \mathrm{Proj}_{\Gamma^n}(\xi)|^2],
	\]
	and $\{C_i(\Gamma^n)\}_{i \in \{1,\dots,n\}}$ is a Voronoi partition of $\R$, that is a Borel partition of $\R$ that satisfies
	\[
	C_i(\Gamma^n) 
	\subset \left\{ y  \in \R: |y-x_i^n| = \min_{1 \leq j \leq n}|y-x_j^n|\right\} 
	\subset \overline{C}_i(\Gamma^n),
	\]
where the right-hand side denotes the closure of the set in~$\R$.
\end{definition}

The unique optimal quadratic $n$-quantizer $\Gamma^{n} = \{ x_1^{n}, \dots, x_{n}^{n}  \}$ and the corresponding quadratic error are available online, at \url{http://www.quantize.maths-fi.com/gaussian_database} for $n\in \{1, \dots, 5999 \}$.

Given a stochastic process, viewed as a random vector taking values in its trajectories space, such as $L^2[0,1]$, functional quantization does the analogue to vector quantization in an infinite-dimensional setting, approximating the process with a finite number of trajectories. 
In this section, we focus on product functional quantization of the centered Gaussian process~$Z$ 
from~\eqref{GV} of order~$N$ (see~\cite[Section 7.4]{Pages} for a general introduction to product functional quantization). 
Recall that we are working with the continuous version of~$Z$ in the series~\eqref{eq:Zseries}.
For any $m, N \in \NN$, we introduce the following set, which will be of key importance all throughout the paper:
\begin{equation}\label{eq:SetD}
\cD_m^N := \left\{ \dd \in \NN^m : \prod_{i=1}^m d(i) \leq N\right\}.
\end{equation}
\begin{definition}\label{def_quant_Z_H}
	A  \emph{product functional quantization of $Z$ of order $N$} is defined as
	\begin{equation}\label{p-quant}
		\widehat{Z}^{\dd}_t 
		:= \sum_{n=1}^{m} \mathcal{K}[\psi_n](t) \widehat{\xi}_n^{d(n)}, \qquad t \in [0,1],
	\end{equation}
	where $\dd\in\cD_m^N$, for some $m \in \mathbb{N}$, and for every $ n \in \{1,\dots,m\},$ $\widehat{\xi}_n^{d(n)}$ is the (unique) optimal quadratic quantization of the standard Gaussian random variable $\xi_n$ of order $d(n)$, according to Definition~\ref{def_quant_normal_rv}. 
\end{definition}

\begin{remark}\label{remark_on_quant}
The condition $\prod_{i=1}^m d(i) \leq N$ in Equation~\eqref{eq:SetD} motivates the wording `product' functional quantization. Clearly, the optimality of the quantizer also depends on the choice of~$m$ and~$\dd$, for which we refer to  Proposition~\ref{prop:choice_m_d} and Section ~\ref{sect_m_d_in_pract}.
\end{remark}

Before proceeding, we need to make precise the explicit form for the product functional quantizer of the stochastic process $Z$:

\begin{definition}\label{def_quantizer_Z}
	The \emph{product functional 
	$\dd$-quantizer of $Z$} is defined as 
	\begin{equation} 
		\chi_{\underline i}^{\dd}(t) 
		:= \sum_{n=1}^{m} \mathcal{K}[\psi_n](t) \ x_{i_n}^{d(n)}, \qquad t \in [0,1], \quad  \underline i = (i_1, \dots, i_m), 
	\end{equation}
	for $\dd\in\cD_m^N$ and $1\leq i_n \leq d(n)$ for each $ n=1,\dots,m.$
\end{definition}

\begin{remark}\label{re_quantizer_Z}
	Intuitively, the quantizer is chosen as a Cartesian product of grids of the one-dimensional standard Gaussian random variables. 
	So, we also immediately find the probability associated to every trajectory $\chi_{\underline i}^{\dd}$:
	for every $\underline i = (i_1, \dots, i_m) \in \prod_{n=1}^m \{ 1, \dots, d(n)  \}$,
	\begin{equation}
		\mathbb P(\widehat{Z}^{\dd}  			= \chi_{\underline i}^{\dd}) = \prod_{n=1}^m \mathbb 	P(\xi_n \in C_{i_n}(\Gamma^{d(n)}) ),
	\end{equation}
	where $C_j(\Gamma^{d(n)})$ is the $j$-th Voronoi cell relative to the $d(n)$-quantizer $\Gamma^{d(n)}$ in Definition~\ref{def_quant_normal_rv}.
\end{remark}

The following, proved in Appendix~\ref{App_P_prop:choice_m_d}, 
deals with the quantization error estimation and its minimization 
and provides hints to choose $(m, \dd)$.
A similar result on the error can be obtained applying~\cite[Theorem 2]{Sharp} to the first example provided in the reference. 
For completeness we preferred to prove the result in an autonomous way in order to further characterize the explicit expression of the rate optimal parameters. 
Indeed, we then compare these rate optimal parameters with the (numerically computed) optimal ones  in Section~\ref{sect_m_d_in_pract}.
The symbol $\lfloor \cdot \rfloor$ denotes the lower integer part.

\begin{proposition}\label{prop:choice_m_d}
Under Assumption~\ref{AssumptionAB}, 
for any $N \geq 1$, there exist $m^*(N) \in \NN$ and  $C>0$ such that
	\[
\mathbb{E} \left[ \left\| \widehat{Z}^{\dd^*_N} - Z \right\|^2_{L^2[0,1]} \right]^{\half} 
\leq C \log(N)^{-H},
	\]
where $\dd^*_N\in \cD_{m^*(N)}^{N}$ and with, for each $n=1, \dots, m^*(N)$,
	\begin{equation}
	d^*_N(n) 
	= \Big\lfloor  N^{\frac{1}{m^*(N)}} n^{-(H+\frac{1}{2})} \left(m^*(N)!\right)^{\frac{2H+1}{2m^*(N)}}  \Big\rfloor.
	\end{equation}
	Furthermore $ m^*(N)= \mathcal{O}(\log(N))$.
\end{proposition}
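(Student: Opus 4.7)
The plan is to decompose the mean-squared $L^2$-error into a quantization part (for the first $m$ Gaussian coefficients) and a truncation-tail part, optimize the allocation of the quantization budget $N$ across the first $m$ levels via Lagrange multipliers, and finally balance the two contributions by choosing $m$ of order $\log N$.

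First, write
\[
Z - \widehat{Z}^{\dd} \;=\; \sum_{n=1}^{m} \mathcal{K}[\psi_n]\bigl(\xi_n - \widehat{\xi}_n^{d(n)}\bigr) \;-\; \sum_{n > m} \mathcal{K}[\psi_n]\,\xi_n .
\]
The coefficients $\xi_n - \widehat{\xi}_n^{d(n)}$ (for $n\leq m$) are independent across $n$ and, by the stationarity property of optimal quadratic Gaussian quantizers, have mean zero; moreover they are independent of $\{\xi_{n'}\}_{n'>m}$, which themselves are i.i.d.\ centered. Computing $\mathbb{E}\|\cdot\|^2_{L^2[0,1]}$ and using Fubini, the cross-terms therefore vanish, leaving
\[
\mathbb{E}\bigl[\|Z - \widehat{Z}^{\dd}\|^2_{L^2[0,1]}\bigr] = \sum_{n=1}^{m} \|\mathcal{K}[\psi_n]\|^2_{L^2[0,1]}\,\mathbb{E}\bigl[|\xi_n - \widehat{\xi}_n^{d(n)}|^2\bigr] \;+\; \sum_{n > m} \|\mathcal{K}[\psi_n]\|^2_{L^2[0,1]}.
\]

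Next, Assumption~\ref{AssumptionAB}\textbf{(B)} yields $\|\mathcal{K}[\psi_n]\|^2_{L^2}\leq C_2^2\, n^{-(2H+1)}$, so the tail is controlled by a convergent series: $\sum_{n>m} n^{-(2H+1)} \lesssim m^{-2H}$. For the quantization part, the classical one-dimensional Zador bound for a standard Gaussian gives $\mathbb{E}[|\xi_n - \widehat{\xi}_n^{d(n)}|^2] \leq K\, d(n)^{-2}$ for some $K>0$. Putting these together we obtain a clean upper bound of the form
\[
\mathbb{E}\bigl[\|Z - \widehat{Z}^{\dd}\|^2_{L^2}\bigr] \;\lesssim\; \sum_{n=1}^{m} \frac{n^{-(2H+1)}}{d(n)^{2}} \;+\; m^{-2H}.
\]

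Now I minimize the first sum over $\dd\in\cD_{m}^{N}$. Replacing the combinatorial constraint $\prod_{n}d(n)\leq N$ by the continuous one $\sum_n \log d(n) \leq \log N$ and applying Lagrange multipliers, the first-order conditions force $d(n)^{2}\propto n^{-(2H+1)}$, i.e.\ $d(n)\propto n^{-(H+\frac12)}$. Pinning down the proportionality constant from $\prod_{n=1}^{m} d(n) = N$ recovers exactly
\[
d_N^{*}(n) \;=\; \bigl\lfloor N^{1/m}\,n^{-(H+\frac12)}\,(m!)^{(2H+1)/(2m)}\bigr\rfloor.
\]
Substituting back (and using Stirling $(m!)^{1/m}\sim m/e$) gives
\[
\sum_{n=1}^{m} \frac{n^{-(2H+1)}}{d_N^{*}(n)^{2}} \;\asymp\; m\,N^{-2/m}\,(m!)^{-(2H+1)/m} \;\asymp\; m^{-2H}\,N^{-2/m},
\]
so the total squared error is $\lesssim m^{-2H}\bigl(1 + N^{-2/m}\bigr)$. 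Choosing $m = m^{*}(N) := \lfloor c\log N\rfloor$ with any $c\in(0,1/(H+\tfrac12))$ makes $N^{-2/m}$ bounded while keeping $d_N^{*}(m)\geq 1$ (which is needed for the floors to make sense), and produces $m^{*}(N)^{-2H}\asymp(\log N)^{-2H}$. Taking square roots delivers the announced rate $\log(N)^{-H}$.

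The main technical care goes into (i) justifying that the formal Lagrange optimum remains essentially optimal once the integrality constraint and the floors are reinstated, which requires showing that $d_N^{*}(n)\geq 1$ throughout $n\leq m^{*}(N)$ (this is where the upper bound $m^{*}(N) = \mathcal{O}(\log N)$ must be chosen compatibly with $H$), and (ii) tracking the Stirling asymptotics to make sure the constants arising from $(m!)^{(2H+1)/m}$ combine with $N^{-2/m}$ and the tail $m^{-2H}$ to yield a single $(\log N)^{-2H}$ factor with no hidden $\log\log N$ losses. Everything else is a bookkeeping exercise.
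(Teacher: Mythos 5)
Your proof follows essentially the same route as the paper's: the same orthogonal decomposition into a quantized head plus a truncation tail, the same Pierce-type bound $\mathbb{E}[|\xi-\widehat{\xi}^{d}|^{2}]\lesssim d^{-2}$ combined with Assumption~\ref{AssumptionAB}\textbf{(B)}, the same optimal allocation $d(n)\propto n^{-(H+\frac12)}$ (you derive it by Lagrange multipliers where the paper uses the arithmetic--geometric inequality of Lemma~\ref{lem_arithm_geom}, but the optimizer is identical), and the same balancing $m\asymp\log N$ subject to $d_N^{*}(n)\geq 1$, which is exactly the paper's admissibility set $I(N)$. The two items you flag as requiring care are handled in the paper by the elementary bound $\bigl(\widetilde{d}(j)+1\bigr)/\widetilde{d}(j)\leq 2$ for the floors and by the monotone sequence $a_{m}$ characterizing $I(N)=\{1,\dots,m^{*}(N)\}$, and your sketch of both is consistent with that argument.
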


\begin{remark}\label{rem_choice_m_and_d}
	In the RL case, the trajectories of $\widehat{Z}^{H, \dd}$ are easily computable and they are used in the numerical implementations  to approximate the process $Z^H$.
In practice, the parameters~$m$ and $\dd=(d(1),\dots,d(m))$ are chosen as explained in Section~\ref{sect_m_d_in_pract}. 
\end{remark}

\subsection{Stationarity}\label{Sect_staz}
We now show that the quantizers we are using are stationary. 
The use of stationary quantizers is motivated by the fact that their expectation provides a lower bound for the expectation of convex functionals of the process (Remark~\ref{rem_staz_quant}) and they yield a lower (weak) error in cubature formulae~\cite[page 26]{Pages}.
We first recall the definition of stationarity for the quadratic quantizer of a random vector~\cite[Definition~1]{Pages}.
\begin{definition}
Let $X$ be an $\R^d$-valued random vector on $(\Omega, \cF, \mathbb{P})$. 
A quantizer~$\Gamma$ for~$X$ is \emph{stationary} if  the nearest neighbour projection $\widehat{X}^\Gamma = \mathrm{Proj}_\Gamma(X)$ satisfies 
	\begin{equation}\label{eq_stat_quant}
		\mathbb{E}\left[X | \widehat X^\Gamma\right] = \widehat X^\Gamma.
	\end{equation}
\end{definition}

\begin{remark}\label{rem_staz_quant}
	Taking expectation on both sides of~\eqref{eq_stat_quant} yields $
	\mathbb{E}[X]
	=\mathbb{E}[\mathbb{E}[X | \widehat X^\Gamma]] 
	= \mathbb{E}[\widehat X^\Gamma].
	$
	Furthermore, for any convex function $f: \R^d \to \R$, the identity above, 
the conditional Jensen's inequality and the tower property yield
	\[
	\mathbb{E}[f(\widehat X^\Gamma)]
	=\mathbb{E}[ f(\mathbb{E}[X | \widehat X^\Gamma])] 
	\leq \mathbb{E}[ \mathbb{E}[f(X) | \widehat X^\Gamma]] 
	= \mathbb{E}[f(X)].
	\]
\end{remark}

While an optimal quadratic quantizer of order~$N$ of a random vector is always stationary~\cite[Proposition 1(c)]{Pages}, the converse is not true in general. 
  We now present the corresponding definition for a stochastic process.
\begin{definition}
Let $\{X_{t}\}_{t \in [T_1,T_2]}$ be a stochastic process on $(\Omega, \cF,\{\cF_t\}_{t \in [T_1,T_2]}, \mathbb{P})$. We say that an $N$-quantizer $\Lambda^N := \{\lambda_1^N, \cdots, \lambda_N^N \} \subset L^2[T_1,T_2] $, inducing the quantization $\widehat{X}= \widehat X^{\Lambda^N}$, is \emph{stationary} if 
$\mathbb{E}[X_t | \widehat X_t] = \widehat X_t$, for all $t \in [T_1,T_2]$.
\end{definition}

\begin{remark}
To ease the notation, we omit the grid  $\Lambda^N$ in $ \widehat X^{\Lambda^N}$, while the dependence on the dimension~$N$ remains via the superscript $\dd\in\cD_m^N$ (recall \eqref{p-quant}). 
\end{remark}

As was stated in Section~\ref{RL:series_repr}, we are working with the continuous version of the Gaussian Volterra process $Z$ given by the series expansion~\eqref{eq:Zseries}.
This will ease the proof of stationarity below (for a similar result in the case of the Brownian motion~\cite[Proposition 2]{Pages}).
\begin{proposition}\label{Prop_staz_quant}
	The product functional quantizers inducing~$\widehat{Z}^{\dd}$ in~\eqref{p-quant} are stationary.
\end{proposition}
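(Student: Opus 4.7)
The plan is to exploit the series representation $Z = \sum_{n \geq 1} \mathcal{K}[\psi_n]\,\xi_n$ (valid $\mathbb P$-a.s. in $\mathcal{C}[0,1]$ by Lemma~\ref{Cont_v_Y} and equation~\eqref{eq:Zseries}) together with the independence of the $\xi_n$'s and the well-known stationarity of the one-dimensional optimal quadratic Gaussian quantizer.

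First, I would identify the conditioning $\sigma$-algebra. Since $\widehat{Z}^{\dd}_t = \sum_{n=1}^{m} \mathcal{K}[\psi_n](t)\,\widehat{\xi}_n^{d(n)}$ is a finite linear combination of the quantized Gaussians $\widehat{\xi}_1^{d(1)}, \ldots, \widehat{\xi}_m^{d(m)}$, and since $\{\mathcal{K}[\psi_n]\}_{n=1}^m$ is a linearly independent family in $\mathcal{C}[0,1]$ (this can be argued from the strict positivity of the eigenvalues of $\mathcal K$ on the span of the cosine basis, or more elementarily by noting that the Volterra kernel $\mathcal K$ is injective on $L^2[0,1]$ when $K$ satisfies Assumption~\ref{Assumpt_Bennedsen}), one can recover each coefficient $\widehat{\xi}_n^{d(n)}$ from the trajectory $\{\widehat{Z}^{\dd}_t\}_{t \in [0,1]}$. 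Hence
$$
\sigma\bigl(\widehat{Z}^{\dd}\bigr) = \sigma\bigl(\widehat{\xi}_1^{d(1)}, \ldots, \widehat{\xi}_m^{d(m)}\bigr).
$$

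Next, fix $t \in [0,1]$ and compute $\mathbb{E}[Z_t \mid \widehat{Z}^{\dd}]$. Using the $L^2(\mathbb P)$-convergence of the series (which follows from Assumption~\ref{AssumptionAB}\textbf{(B)}), I can exchange the infinite sum with the conditional expectation to obtain
$$
\mathbb{E}\bigl[Z_t \mid \widehat{Z}^{\dd}\bigr]
= \sum_{n \geq 1} \mathcal{K}[\psi_n](t)\,\mathbb{E}\bigl[\xi_n \mid \widehat{\xi}_1^{d(1)}, \ldots, \widehat{\xi}_m^{d(m)}\bigr].
$$
For $n > m$, $\xi_n$ is independent of $(\xi_1,\ldots,\xi_m)$ and therefore of $(\widehat{\xi}_1^{d(1)},\ldots,\widehat{\xi}_m^{d(m)})$, so the conditional expectation vanishes. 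For $n \leq m$, the independence of the family $\{\xi_k\}_k$ together with the fact that $\widehat{\xi}_k^{d(k)}$ is a deterministic (Borel) function of $\xi_k$ implies that $\xi_n$ is independent of $(\widehat{\xi}_k^{d(k)})_{k \neq n}$, and consequently
$$
\mathbb{E}\bigl[\xi_n \mid \widehat{\xi}_1^{d(1)}, \ldots, \widehat{\xi}_m^{d(m)}\bigr]
= \mathbb{E}\bigl[\xi_n \mid \widehat{\xi}_n^{d(n)}\bigr].
$$
Finally, because $\widehat{\xi}_n^{d(n)}$ is the unique optimal quadratic $d(n)$-quantizer of the standard Gaussian $\xi_n$, it is stationary by~\cite[Proposition 1(c)]{Pages}, giving $\mathbb{E}[\xi_n \mid \widehat{\xi}_n^{d(n)}] = \widehat{\xi}_n^{d(n)}$. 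Collecting the non-zero terms yields exactly $\widehat{Z}^{\dd}_t$, as required.

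The main obstacle is the technical justification of exchanging the conditional expectation with the infinite series and the identification of $\sigma(\widehat Z^{\dd})$ with $\sigma(\widehat \xi_1^{d(1)},\ldots,\widehat \xi_m^{d(m)})$; the former is handled by invoking $L^2$-continuity of conditional expectation and the $L^2$-convergence of the tail of the expansion (controlled via Assumption~\ref{AssumptionAB}\textbf{(B)}), while the latter rests on the linear independence of the functions $\{\mathcal{K}[\psi_n]\}_{n=1}^m$, which is a mild but essential ingredient. Everything else is an assembly of pointwise independence arguments together with the one-dimensional Gaussian stationarity, so the proof should be short once these two points are made rigorous.
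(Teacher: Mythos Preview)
Your argument is correct in spirit and reaches the same conclusion, but it takes a slightly more laborious route than the paper. The essential difference lies in how the conditioning $\sigma$-algebra is handled. You work to establish the \emph{equality} $\sigma(\widehat{Z}^{\dd}) = \sigma(\widehat{\xi}_1^{d(1)},\ldots,\widehat{\xi}_m^{d(m)})$, which forces you to justify the linear independence of $\{\mathcal{K}[\psi_n]\}_{n=1}^m$ (and to argue that the Volterra operator is injective). The paper sidesteps this entirely: it first computes $\mathbb{E}[Z_t \mid \{\widehat{\xi}_n^{d(n)}\}_{1\le n\le m}] = \widehat{Z}^{\dd}_t$ exactly as you do, and then invokes the tower property using only the trivial \emph{inclusion} $\sigma(\widehat{Z}^{\dd}_t) \subset \sigma(\{\widehat{\xi}_n^{d(n)}\}_{1\le n\le m})$ to obtain $\mathbb{E}[Z_t \mid \widehat{Z}^{\dd}_t] = \mathbb{E}[\widehat{Z}^{\dd}_t \mid \widehat{Z}^{\dd}_t] = \widehat{Z}^{\dd}_t$. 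This avoids any appeal to linear independence or injectivity of~$\mathcal{K}$.

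Note also that the paper's definition of stationarity for a functional quantizer requires $\mathbb{E}[X_t \mid \widehat{X}_t] = \widehat{X}_t$ (conditioning on the value at time~$t$), whereas you condition on the full trajectory $\widehat{Z}^{\dd}$. Your conclusion still implies the required statement by one more application of the tower property (since $\sigma(\widehat{Z}^{\dd}_t)\subset\sigma(\widehat{Z}^{\dd})$), but you should make that step explicit. In short: your proof works, but the paper's tower-property shortcut is cleaner and spares you the two technical ``obstacles'' you yourself flagged.
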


\begin{proof}
	For any $t \in [0,1]$, by linearity, we have the following chain of equalities:
$$
\mathbb{E}\left[Z_t |\{\widehat{\xi}_n^{d(n)}\}_{1 \le n \le m}\right] =\mathbb{E}\left[ \sum_{k\geq 1} \mathcal{K}[\psi_k](t) \xi_k  \Big|\{\widehat{\xi}_n^{d(n)}\}_{1 \le n \le m }\right]
=\sum_{k \ge 1} \mathcal{K}[\psi_k](t) \mathbb{E}\left[\xi_k \Big|\{\widehat{\xi}_n^{d(n)}\}_{1 \le n \le m }\right].
$$
	
Since the $\mathcal N(0,1)$-Gaussian $\xi_n$'s are i.i.d., by definition of  optimal quadratic quantizers (hence stationary), we have
$\mathbb{E}[\xi_k | \widehat{\xi}_i^{d(i)}] = \delta_{i k}\widehat{\xi}_i^{d(i)}$,
for all $i, k \in \{1,\dots,m\}$,
and therefore
	\[
		\mathbb{E}\left[\xi_k \Big|\{\widehat{\xi}_n^{d(n)}\}_{1 \le n \le m }\right]
		=\mathbb{E}\left[\xi_k \Big|  \widehat{\xi}_k^{d(k)}\right]
		= \widehat{\xi}_k^{d(k)},
		\text{ for all }k \in \{1,\dots,m\}.
	\]
	Thus, we obtain
$$
\mathbb{E}\left[Z_t \Big|\{\widehat{\xi}_n^{d(n)}\}_{1 \le n \le m}\right ]
		=\sum_{k \ge 1} \mathcal{K}[\psi_k](t) \widehat{\xi}_k^{d(k)}
		= \widehat{Z}^{\dd}_t.
$$
	Finally, exploiting the tower property and the fact that the $\sigma$-algebra generated by $\widehat{Z}^{\dd}_t$ is included in the $\sigma$-algebra generated by $\{\widehat{\xi}_n^{d(n)}\}_{n \in \{1, \dots,m\}}$ by Definition~\ref{def_quant_Z_H}, we obtain
	\begin{eqnarray*}
		\mathbb{E}\left[Z_t \Big|  \widehat{Z}^{\dd}_t \right]&
		= \mathbb{E}\left[\mathbb{E}\left[Z_t \Big|\{\widehat{\xi}_n^{d(n)}\}_{n \in \{1, \dots,m\}} \right]\Big|  \widehat{Z}^{\dd}_t \right]
		= \mathbb{E}\left[\widehat{Z}^{\dd}_t\Big|  \widehat{Z}^{\dd}_t \right]
		= \widehat{Z}^{\dd}_t,
	\end{eqnarray*}
	which concludes the proof.
\end{proof}

\section{Application to VIX derivatives in rough Bergomi} \label{VIX_F}
We now specialize the setup above to the case of rough volatility models.
These models are extensions of classical stochastic volatility models, introduced to better reproduce the market implied volatility surface.
The  volatility process is stochastic and driven by a rough process,
by which we mean a process whose trajectories are $H$-H\"older continuous with $H\in (0, \half)$. 
The empirical study~\cite{Vol_is_r}  was the first to suggest such a rough behaviour for the volatility, and ignited tremendous interest in the topic.
The website
\url{https://sites.google.com/site/roughvol/home}
contains an exhaustive and up-to-date review of the literature on rough volatility.
Unlike continuous Markovian stochastic volatility models, which are not able to fully describe the steep implied volatility skew of short-maturity options in equity markets, 
rough volatility models have shown accurate fit for this crucial feature. 
Within rough volatility, the rough Bergomi model~\cite{Bayer} is one of the simplest, 
yet decisive frameworks to harness the power of the roughness for pricing purposes.
We show how to adapt our functional quantization setup to this case.

\subsection{The generalized Bergomi model}\label{subsect_Bergomi}
We work here with a slightly generalised version of the rough Bergomi model, defined as
\begin{equation}\label{eq:gBergomi}
	\left\{
	\begin{array}{rcll}
		X_t & = & \displaystyle -\frac12 \int_0^t {{\mathcal{V}_s}} ds + \int_0^t \sqrt{{{\mathcal{V}_s}}} dB_s, & X_0=0,\\
		\mathcal{V}_t & = & \displaystyle v_0(t) \exp \left\{\gamma Z_t - \frac{\gamma^2}{2}\int_0^t K(t-s)^2 ds \right\}, & \mathcal{V}_0 >0,
	\end{array}
	\right.
\end{equation}
where $X$ is the log-stock price, $\mathcal{V}$ the instantaneous variance process
driven by the Gaussian Volterra process $Z$ in~\eqref{GV}, 
$\gamma>0$ and~$B$ is a Brownian motion defined as $B := \rho W + \sqrt{1-\rho^2} W^{\perp}$
for some correlation $\rho \in [-1,1]$ and $W, W^\perp$ orthogonal Brownian motions.
The filtered probability space is therefore taken as  $\cF_t=\cF_t^W \lor \cF_t^{W^\perp}$, $t \geq 0$.
This is a non-Markovian generalization of Bergomi's second generation stochastic volatility model~\cite{Bergomi}, letting the variance be driven by a Gaussian Volterra process instead of a standard Brownian motion.
Here, $v_T(t)$ denotes the forward variance for a remaining maturity~$t$, observed at time~$T$. 
In particular, $v_0$ is the initial forward variance curve, assumed to be $\cF_0$-measurable.
Indeed, given market prices of variance swaps $\sigma_T^2(t)$ at time~$T$ with remaining maturity~$t$, 
the forward variance curve can be recovered as 
$v_T(t)=\frac{d}{dt}\left(t\sigma_T^2(t)\right)$, for all $t \geq 0$,
and the process $\{ v_s(t-s)\}_{0 \leq s \leq t}$ is a martingale for all fixed $t>0$.

\begin{remark}\label{rem:RLKernelBergomi}
With $K(u) = u^{H-\frac{1}{2}}$, $\gamma = 2\nu C_H$, for $\nu>0$, and $C_H:= \sqrt{\frac{2H\Gamma(3/2-H)}{\Gamma(H+1/2)\Gamma(2-2H)}}$,  we recover the standard rough Bergomi model~\cite{Bayer}. 
\end{remark}

\subsection{VIX Futures in the generalized Bergomi}\label{subsect_VIX_F}
We consider the pricing of VIX Futures (\url{www.cboe.com/tradable_products/vix/}) in the rough Bergomi model.
They are highly liquid Futures on the Chicago Board Options Exchange Volatility Index,
introduced on March 26, 2004, to allow for trading in the underlying VIX.
Each VIX Future represents the expected implied volatility for the 30 days following the expiration date of the Futures contract itself.
The continuous version of the VIX at time~$T$ is determined by the continuous-time monitoring formula 
\begin{align}\label{Eq:VIX^2_dyn}
	\VIX_T^2 
	: & =  \mathbb{E}_{T}\left[\frac{1}{\Delta} \int_{T}^{T+\Delta} d\langle X_s, X_s\rangle\right]
	= \frac{1}{\Delta} \int_T^{T+\Delta} \mathbb{E}[\mathcal{V}_s | \cF_T]ds\\
&	=  \frac{1}{\Delta} \int_T^{T+\Delta} \mathbb{E}_{T}\left[v_0(s) 
e^{\gamma Z_s - \frac{\gamma^2}{2}\int_0^s K(s-u)^2 du }\right]ds\\
&	=  \frac{1}{\Delta} \int_T^{T+\Delta} v_0(s) e^{\gamma \int_0^T K(s-u)dW_u -\frac{\gamma^2}{2}\int_0^s K(s-u)^2 du } \mathbb{E}_{T}\left[e^{\gamma\int_T^s K(s-u) dW_u}\right]ds\\
&	=  \frac{1}{\Delta} \int_T^{T+\Delta} v_0(s) e^{\gamma \int_0^T K(s-u)dW_u -\frac{\gamma^2}{2}\int_0^s K(s-u)^2 du } e^{\frac{\gamma^2}{2}\int_T^s K(s-u)^2 du}ds,	
\end{align}
similarly to~\cite{Muguruza}, where~$\Delta$ is equal to~$30$ days,
and we write $\mathbb{E}_{T}[\cdot] := \mathbb{E}[\cdot |\cF_T]$
(dropping the subscript when $T=0$).
Thus, the price of a  VIX Future with maturity  $T$ is given by 
\begin{equation}\label{price_VIX}
		\mathcal{P}_T 
	    : = \mathbb{E}\left[\VIX_T\right]
		=\mathbb{E}\left[\left(  \frac{1}{\Delta}\int_T^{T + \Delta}v_0(t)e^{\gamma Z^{T,\Delta}_t+\frac{\gamma^2}{2}\left(\int_{0}^{t-T} K(s)^2 ds - \int_{0}^{t} K(s)^2 ds\right)} dt\right)^{\frac{1}{2}}\right],
\end{equation}
where the process $(Z^{T,\Delta}_t)_{t \in [T,T+\Delta]}$ is given by
\begin{equation}\label{eq:def_V_HTDelta}
	Z^{T,\Delta}_t = \int_0^{T}K(t-s)dW_s, \qquad t \in [T,T+\Delta].
\end{equation}

To develop a functional quantization setup for VIX Futures, 
we need to quantize the process~$Z^{T,\Delta}$, 
which is close, yet slightly different, from the  Gaussian Volterra process~$Z$ in~\eqref{GV}.

\subsection{Properties of $Z^{T}$}\label{Subsec_VH}
To retrieve the same setting as above, we normalize the time interval to $[0,1]$, that is $T+\Delta = 1$.
Then, for 
$T $ fixed, 
we define the process $Z^{T}:= Z^{T,1-T}$ as
\begin{equation}\label{V_HT}
	Z^{T}_t
	:= \int_0^T K(t-s)dW_s
	, \qquad t\in [T,1],
\end{equation}
which is well defined by the square integrability of~$K$.
By definition, the process $Z^{T}$ is centered Gaussian
and It\^o isometry gives its covariance function as
$$
R_{Z^{T}}(t,s) = \int_0^T K(t-u) K(s-u) du, \qquad t,s \in[T,1].
$$
Proceeding as previously, we introduce a Gaussian process with same mean and covariance as those of~$Z^{T}$, 
represented as a series expansion involving standard Gaussian random variables;
from which product functional quantization follows.
It is easy to see that the process~$Z^{T}$ has continuous trajectories.
Indeed, $ (Z_t^T-Z_s^T)^2 \leq \mathbb{E}[|Z_t-Z_s|^2| \mathcal F^W_T] $, by conditional Jensen's inequality since $Z^T_t = \mathbb E [Z_t | \mathcal F^W_T ]$. Then, applying tower property, for any $T \leq s < t \leq 1$,
\begin{eqnarray*}
	\mathbb{E}\left[\left|Z^{T}_t - Z^{T}_s\right|^2\right]
	\leq \mathbb{E}\left[|Z_t - Z_s|^2\right],
\end{eqnarray*}
and therefore the H-H\"older regularity of~$Z$ (Section~\ref{Sect_RL}) implies that of~$Z^T$.


\subsubsection{Series expansion}\label{VHT:series_repr}
Let $\{\xi_n\}_{n\geq1}$ be an i.i.d. sequence of standard Gaussian and $\{\psi_n\}_{n\geq1}$ the orthonormal basis of $L^2[0,1]$ from~\eqref{eq_psi_n}. 
Denote by  $\mathcal{K}^T(\cdot)$ the operator from $L^2[0,1]$ to $ \mathcal{C}[{T},1]$ that associates to each $f \in L^2[0,1]$,
\begin{equation}\label{eq:KT}
	\mathcal{K}^T[f](t)
	:=\int_0^T K(t-s)f(s)ds
	, \quad t \in [T,1].
\end{equation}
We define the process  $Y^{T}$ as (recall the analogous ~\eqref{series}):
\begin{equation}\label{series_T}
	Y^{T}_t := \sum_{n\geq 1} \mathcal{K}^T[\psi_n](t) \xi_n, \qquad t \in [T,1].
\end{equation}

The lemma below follows from the corresponding results in Remark~\ref{gau} and Lemma~\ref{lem:covY}:

\begin{lemma}\label{lem:covY_T}
	The  process $Y^{T}$ is centered, Gaussian and with covariance function 
	\[
		R_{Y^{T}}(s,t)
		:= \mathbb{E}\left[Y^{T}_s Y^{T}_t\right] 
		= \int_0^{T} K(t-u) K(s-u) du, \qquad \text{for all }s, t \in [T,1].
	\]
\end{lemma}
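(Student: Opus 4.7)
The plan is to mirror, almost verbatim, the argument used for Lemma~\ref{lem:covY}, with the only genuinely new input being that the integration window is now $[0,T]$ rather than the full interval, and that the time argument $t$ lives in $[T,1]$. All the machinery — the orthonormality of $\{\psi_n\}$ in $L^2[0,1]$, independence of the $\xi_n$'s, and Parseval's identity — transfers without modification.

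First I would note that, for each fixed $t \in [T,1]$, the function $u \mapsto K(t-u)\mathbf{1}_{[0,T]}(u)$ belongs to $L^2[0,1]$ because $K \in L^2[0,1]$ by Assumption~\ref{Assumpt_Bennedsen}. Hence by Bessel/Parseval,
\[
\sum_{n \geq 1} \mathcal{K}^T[\psi_n](t)^2
=\sum_{n \geq 1}\bigl\langle K(t-\cdot)\mathbf{1}_{[0,T]}(\cdot),\psi_n\bigr\rangle_{L^2[0,1]}^2
=\int_0^T K(t-u)^2 du < \infty,
\]
so the series $\sum_n \mathcal{K}^T[\psi_n](t)\xi_n$ defining $Y^T_t$ converges in $L^2(\mathbb{P})$ by Khintchine-Kolmogorov, and $Y^T_t$ is a centered Gaussian random variable as a limit of centered Gaussians (this is the analogue of Remark~\ref{gau}). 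Jointly, any finite-dimensional vector $(Y^T_{t_1},\dots,Y^T_{t_k})$ is a limit in $L^2(\mathbb{P})$ of centered Gaussian vectors, hence centered Gaussian, so $Y^T$ is a centered Gaussian process.

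For the covariance, I would expand using independence of the $\xi_n$'s and then apply Parseval identically to the proof of Lemma~\ref{lem:covY}:
\begin{align*}
R_{Y^T}(s,t)
&=\mathbb{E}\!\left[\Bigl(\sum_{n\geq 1}\mathcal{K}^T[\psi_n](s)\xi_n\Bigr)\!\Bigl(\sum_{m\geq 1}\mathcal{K}^T[\psi_m](t)\xi_m\Bigr)\right]
=\sum_{n\geq 1}\mathcal{K}^T[\psi_n](s)\,\mathcal{K}^T[\psi_n](t)\\
&=\sum_{n\geq 1}\bigl\langle K(s-\cdot)\mathbf{1}_{[0,T]}(\cdot),\psi_n\bigr\rangle_{L^2[0,1]}
\bigl\langle K(t-\cdot)\mathbf{1}_{[0,T]}(\cdot),\psi_n\bigr\rangle_{L^2[0,1]}\\
&=\bigl\langle K(s-\cdot)\mathbf{1}_{[0,T]}(\cdot),K(t-\cdot)\mathbf{1}_{[0,T]}(\cdot)\bigr\rangle_{L^2[0,1]}
=\int_0^T K(s-u)K(t-u)\,du,
\end{align*}
where the interchange of expectation and infinite sum is justified by the $L^2(\mathbb{P})$ convergence established above.

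There is essentially no obstacle: the only subtle point is justifying the interchange of expectation and double series, which is handled by the $L^2$-convergence coming from $K\in L^2[0,1]$. Compared to Lemma~\ref{lem:covY}, the integration domain shrinks from $[0, s\wedge t]$ to $[0,T]$ simply because the indicator in the $L^2[0,1]$ inner product is $\mathbf{1}_{[0,T]}$ rather than $\mathbf{1}_{[0,s]}$ and $\mathbf{1}_{[0,t]}$, reflecting that the Brownian integral in~\eqref{V_HT} only runs up to $T$.
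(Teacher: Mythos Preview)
Your proposal is correct and is essentially the same approach the paper intends: it explicitly says the lemma ``follows from the corresponding results in Remark~\ref{gau} and Lemma~\ref{lem:covY}'', and you have carried out precisely that adaptation, replacing the indicators $\mathbf{1}_{[0,s]}$, $\mathbf{1}_{[0,t]}$ by $\mathbf{1}_{[0,T]}$ and invoking Parseval for the orthonormal basis $\{\psi_n\}$ of $L^2[0,1]$.
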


To complete the analysis of~$Z^{T}$, we require an analogue version of Assumption~\ref{AssumptionAB}.
\begin{assumption}\label{Assumption_V_H}
Assumption~\ref{AssumptionAB} holds for the sequence
$(\mathcal{K}^T[\psi_n])_{n\geq 1}$ on $[T,1]$
with the constants~$C_1$ and~$C_2$ depending on~$T$.
\end{assumption}


\subsection{The truncated RL case}\label{Sect_trunc_RL}
We again pay special attention to the RL case, 
for which the operator~\eqref{eq:KT} reads, for each $n \in \NN$,
$$
\mathcal{K}_H^T[\psi_n](t):= \int_0^T (t-s)^{H-\half} \psi_n(s) ds,
\qquad\text{for all }t \in [T,1],
$$
and satisfies the following, proved in Appendix~\ref{App_P_Lemma_V_H}:

\begin{lemma}\label{Lemma_V_H}
The functions $\{ \mathcal{K}_H^T[\psi_n] \}_{n \geq 1}$ satisfy Assumption~\ref{Assumption_V_H}.
\end{lemma}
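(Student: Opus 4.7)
The plan is to reduce the truncated case to the untruncated Riemann--Liouville case already handled in the paragraph following~\eqref{eq_rep_KH} via \cite[Lemma 4]{Sharp}. The starting point is the elementary identity
\begin{equation*}
\mathcal{K}_H^T[\psi_n](t) = \mathcal{K}_H[\psi_n](t) - R_n(t),
\qquad R_n(t) := \int_T^t (t-u)^{H-\half}\psi_n(u)\, du,
\end{equation*}
valid for $t \in [T,1]$. Since \cite[Lemma 4]{Sharp} already controls $\mathcal{K}_H[\psi_n]$ on $[0,1]$, the whole work reduces to bounding $R_n$ and its increments in a manner compatible with Assumption~\ref{AssumptionAB}.

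For part~\textbf{(B)}, I would change variables $v = t-u$ in $R_n(t)$, expand
$\psi_n(t-v) = \sqrt{2}\cos(a_n t)\cos(a_n v) + \sqrt{2}\sin(a_n t)\sin(a_n v)$
with $a_n := (2n-1)\pi/2$, and rescale $w = a_n v$ to rewrite
\begin{equation*}
R_n(t) = \sqrt{2}\, a_n^{-(H+\half)}\left[\cos(a_n t)\int_0^{a_n(t-T)} w^{H-\half}\cos w\, dw + \sin(a_n t)\int_0^{a_n(t-T)} w^{H-\half}\sin w\, dw\right].
\end{equation*}
Since $H\in(0,\half)$, the improper integrals $\int_0^\infty w^{H-\half}\cos w\, dw$ and $\int_0^\infty w^{H-\half}\sin w\, dw$ both converge (to $\Gamma(H+\half)\cos(\pi(H+\half)/2)$ and $\Gamma(H+\half)\sin(\pi(H+\half)/2)$ respectively), so their truncations are bounded uniformly in the upper limit. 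This yields $|R_n(t)| \leq C\, n^{-(H+\half)}$ uniformly in $t\in[T,1]$, and combining with the known bound for $\mathcal{K}_H[\psi_n]$ proves~\textbf{(B)}.

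For part~\textbf{(A)}, for $T \leq s < t \leq 1$, I would split
\begin{equation*}
R_n(t) - R_n(s) = \int_s^t (t-u)^{H-\half}\psi_n(u)\, du + \int_T^s \left[(t-u)^{H-\half} - (s-u)^{H-\half}\right]\psi_n(u)\, du.
\end{equation*}
The first summand is bounded by $\sqrt{2}(H+\half)^{-1}(t-s)^{H+\half}$ using only $|\psi_n|\leq\sqrt{2}$. For the second, I would use $(t-u)^{H-\half} - (s-u)^{H-\half} = (H-\half)\int_s^t (r-u)^{H-\frac{3}{2}}\, dr$, apply Fubini, and estimate the inner integral $\int_T^s (r-u)^{H-\frac{3}{2}}\, du \leq (\half - H)^{-1}(r-s)^{H-\half}$; integrating the result over $r\in[s,t]$ gives another $O((t-s)^{H+\half})$ bound, crucially independent of $n$. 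Combined with $|\mathcal{K}_H[\psi_n](t) - \mathcal{K}_H[\psi_n](s)| \leq C_1 n (t-s)^{H+\half}$ from \cite[Lemma 4]{Sharp}, and absorbing the $n$-free constant via $n \geq 1$, this proves~\textbf{(A)}.

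The only genuinely non-routine step is the uniform boundedness in~\textbf{(B)} of the Fresnel-type oscillatory integrals; once that estimate is granted, everything else is elementary calculus and algebraic bookkeeping.
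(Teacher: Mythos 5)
Your proof is correct, and it takes a genuinely different route from the paper's. You decompose $\mathcal{K}_H^T[\psi_n] = \mathcal{K}_H[\psi_n] - R_n$ and import the untruncated bounds (those asserted via \cite[Lemma 4]{Sharp} after~\eqref{eq_rep_KH}), so that only the remainder $R_n$ needs to be controlled; the paper instead works with the truncated kernel directly throughout. For part~\textbf{(A)} the paper substitutes $v=t-u$, distinguishes the cases $t-T\le s$ and $t-T>s$, and bounds three integrals separately, ending with a constant of order $n$ inherited from the Lipschitz constant of $\cos(\cdot/\sqrt{\lambda_n})$; your telescoping of $R_n(t)-R_n(s)$ together with the Fubini estimate yields an $n$-independent H\"older constant for the remainder, which is slightly stronger, needs no case analysis, and is then absorbed into the $C_1 n$ bound trivially. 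For part~\textbf{(B)} the paper routes through Lemma~\ref{lem:Representation} (the ${}_1F_2$/Meijer-G representation) and then proves uniform boundedness of the maps $\zeta_{1/2},\zeta_{3/2}$ by reducing them to Fresnel-type integrals $\int_0^{\pi z}x^{2h-1}\cos x\,dx$; your angle-addition-plus-rescaling argument arrives at the same Fresnel integrals directly and elementarily, bypassing the hypergeometric machinery. The trade-off is that the paper's longer route simultaneously produces the closed-form expression for $\mathcal{K}_H^T[\psi_n]$ (Remark~\ref{K_HT_repre_zeta}) that is needed for the numerical implementation anyway, whereas your route buys brevity and maximal reuse of the untruncated result.
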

A key role in this proof is played by an intermediate lemma, 
proved in Appendix~\ref{App_P_lem:Representation}, which provides a convenient representation for the integral $\int_{0}^{T}(t-u)^{H-\half}\E^{\I\pi u}du$, $t\geq T \geq 0$,
in terms of the generalised Hypergeometric function~${}_{1}{F}_{2}(\cdot)$.
\begin{lemma}\label{lem:Representation}
	For any  $t\geq T \geq 0$, the representation
$$
\int_{0}^{T}(t-u)^{H-\half}\E^{\I\pi u}du
		 = {\E^{\I\pi t}}\left[\Big(\zeta_{\half}(t, h_1) - \zeta_{\half}((t-T), h_1)\Big)
- \I\pi \Big(\zeta_{\frac{3}{2}}(t, h_2) - \zeta_{\frac{3}{2}}((t-T), h_2)\Big)\right]
$$
holds, where 
$h_1:= \half (H +\half)$ and $h_2 = \half+h_1$, $\chi(z) := -\frac{1}{4}\pi^2 z^2$ and
	\begin{equation}\label{def_zeta_khz}
		\zeta_{k}(z,h)
		:= \displaystyle \frac{z^{2h}}{2h}{}_{1}{F}_{2}\left(h; k,1+h; \chi(z)\right),
 \qquad\text{for } k\in\left\{\frac{1}{2}, \frac{3}{2}\right\}.
	\end{equation}
\end{lemma}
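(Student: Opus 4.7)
The plan is to compute the left-hand side by a substitution that reduces the integral to one on an interval starting from zero, then expand the complex exponential in its Taylor series and recognise the resulting power series as the generalised hypergeometric function ${}_{1}F_{2}$.

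First I would perform the change of variables $v=t-u$, which transforms
\begin{equation*}
\int_{0}^{T}(t-u)^{H-\half}\E^{\I\pi u}du
= \E^{\I\pi t}\int_{t-T}^{t} v^{H-\half}\E^{-\I\pi v}dv
= \E^{\I\pi t}\Bigl(I(t)-I(t-T)\Bigr),
\end{equation*}
with $I(z):=\int_{0}^{z} v^{H-\half}\E^{-\I\pi v}dv$. This step transfers the $T$-dependence into the limits and extracts the prefactor $\E^{\I\pi t}$ cleanly; the remaining work is the identification of $I(z)$ with the desired combination $\zeta_{\half}(z,h_{1})-\I\pi\,\zeta_{\frac{3}{2}}(z,h_{2})$.

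Next I would split $\E^{-\I\pi v}=\cos(\pi v)-\I\sin(\pi v)$, expand the cosine and the sine as power series, interchange sum and integral (justified by uniform convergence on $[0,z]$), and integrate termwise to obtain
\begin{align*}
\int_{0}^{z} v^{H-\half}\cos(\pi v)dv
&= \sum_{k\geq 0}\frac{(-1)^{k}\pi^{2k}}{(2k)!}\cdot\frac{z^{H+\half+2k}}{H+\half+2k}, \\
\int_{0}^{z} v^{H-\half}\sin(\pi v)dv
&= \sum_{k\geq 0}\frac{(-1)^{k}\pi^{2k+1}}{(2k+1)!}\cdot\frac{z^{H+\frac{3}{2}+2k}}{H+\frac{3}{2}+2k}.
\end{align*}

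The main technical step — which is where the specific form $\zeta_{k}(z,h)$ is engineered to appear — is to rewrite these series as ${}_{1}F_{2}$ via Pochhammer manipulations. I would use the two identities $(2k)! = 4^{k}k!\,(\half)_{k}$ and $(2k+1)! = 4^{k}k!\,(\frac{3}{2})_{k}$ (both following from the duplication-type product $\prod_{j=0}^{k-1}(2j+1)=(2k)!/(2^{k}k!)$), together with $\frac{1}{k+h}=\frac{(h)_{k}}{h\,(1+h)_{k}}$. With $h_{1}=\half(H+\half)$ so that $H+\half+2k=2(k+h_{1})$, and $h_{2}=\half+h_{1}$ so that $H+\frac{3}{2}+2k=2(k+h_{2})$, these rearrangements convert the cosine series into $\zeta_{\half}(z,h_{1})$ and the sine series into $\pi\,\zeta_{\frac{3}{2}}(z,h_{2})$, since the $k$-th terms become
\begin{equation*}
\frac{z^{2h_{1}}}{2h_{1}}\cdot\frac{(h_{1})_{k}}{(\half)_{k}(1+h_{1})_{k}}\frac{\chi(z)^{k}}{k!}
\quad\text{and}\quad
\frac{\pi\,z^{2h_{2}}}{2h_{2}}\cdot\frac{(h_{2})_{k}}{(\frac{3}{2})_{k}(1+h_{2})_{k}}\frac{\chi(z)^{k}}{k!},
\end{equation*}
matching Definition~\eqref{def_zeta_khz} exactly. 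Combining these with $I(z)=\zeta_{\half}(z,h_{1})-\I\pi\,\zeta_{\frac{3}{2}}(z,h_{2})$ and writing $I(t)-I(t-T)$ yields the announced formula. The delicate step is therefore purely combinatorial — the Pochhammer bookkeeping — which is why the constants $h_{1},h_{2}$ and the lower parameters $\half,\frac{3}{2}$ are tailored precisely to the parities arising from $\cos$ and $\sin$.
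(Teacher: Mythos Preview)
Your argument is correct. Both you and the paper begin with the substitution $v=t-u$ to write the integral as $\E^{\I\pi t}\bigl(I(t)-I(t-T)\bigr)$ with $I(z)=\int_0^z v^{H-\half}\E^{-\I\pi v}dv$, and then split into real and imaginary parts. The routes diverge at the evaluation of the cosine and sine integrals. The paper performs further changes of variable to reduce to integrals over $[0,1]$, invokes an integral representation of ${}_1F_2$ via Meijer-$G$ functions (citing~\cite{Karp}) for the cosine part, and handles the sine part by integration by parts followed by a nontrivial hypergeometric series identity. Your approach is more elementary and self-contained: you expand $\cos$ and $\sin$ as Taylor series, integrate termwise, and recognise the resulting power series directly as ${}_1F_2$ via the Pochhammer identities $(2k)!=4^{k}k!\,(\half)_k$, $(2k+1)!=4^{k}k!\,(\frac{3}{2})_k$ and $\frac{1}{k+h}=\frac{(h)_k}{h(1+h)_k}$. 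This avoids the external reference and the auxiliary hypergeometric manipulation, at the cost of a small amount of combinatorial bookkeeping; the paper's route, in turn, connects the result to a broader family of integral representations.
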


\begin{remark}\label{K_HT_repre_zeta}
The representation in Lemma~\ref{lem:Representation} can be exploited to obtain an explicit formula for $\mathcal{K}_H^T[\psi_n](t)$, $t \in [T,1]$ and $n \in \NN$:
	\begin{eqnarray*}
		& \mathcal{K}_H^T[\psi_n](t)
		=  \frac{\sqrt{2}}{m^{H+\half}}\int_0^{mT}(mt-u)^{H-\half}\cos(\pi u)du
		=  \frac{\sqrt{2}}{m^{H+\half}}\Re\left\{\int_0^{mT}(mt-u)^{H-\half}\E^{\I\pi u}du\right\}&\\
		& = \frac{\sqrt{2}}{m^{H+\half}}\Re\Bigg\{ {\E^{\I\pi mt}}\bigg[\Big(\zeta_{\half}(mt, h_1) - \zeta_{\half}(m(t-T), h_1)\Big)
- \I\pi \Big(\zeta_{\frac{3}{2}}(mt, h_2) - \zeta_{\frac{3}{2}}(m(t-T), h_2)\Big)\bigg] \Bigg\}& \\
		& = \frac{\sqrt{2}}{m^{H+\frac{1}{2}}}\bigg\{ \cos(mt\pi)\Big(\zeta_{\half}(mt, h_1) - \zeta_{\half}(m(t-T), h_1)\Big) +  \pi \sin(mt \pi) \Big(\zeta_{\frac{3}{2}}(mt, h_2) - \zeta_{\frac{3}{2}}(m(t-T), h_2)\Big) \bigg\},&
	\end{eqnarray*}
	with $m:=n-\half$ and $\zeta_{\half}(\cdot)$, $\zeta_{\frac{3}{2}}(\cdot)$  in~\eqref{def_zeta_khz}.
We shall exploit this in our numerical simulations. 
\end{remark}

\subsection{VIX Derivatives Pricing}

We can now introduce the quantization for the process~$Z^{T,\Delta}$, similarly to Definition~\ref{def_quant_Z_H}, recalling the definition of the set~$\cD_m^N$ in~\eqref{eq:SetD}:
\begin{definition}
	A \emph{product functional quantization for~$Z^{T,\Delta}$ of order $N$} is defined as
	\begin{equation}\label{p-quant_TDelta}
		\widehat{Z}^{T,\Delta,\dd}_t 	
		:= \sum_{n=1}^{m} \mathcal{K}^{T,\Delta}[\psi_n^{T,\Delta}](t) \widehat{\xi}_n^{d(n)}, \qquad t \in [T,T + \Delta],
	\end{equation}
	where $\dd \in\cD_m^N$, for some $m \in \NN$, and for every $ n \in \{1,\dots,m\},$ $\widehat{\xi}_n^{d(n)}$ is the (unique) optimal quadratic quantization of the Gaussian variable~$\xi_n$ of order $d(n)$.
\end{definition}
The sequence $\{\psi_n^{T,\Delta}\}_{n \in \NN}$ denotes the orthonormal basis of $L^2[0,T+\Delta]$ given by 
\begin{equation}\label{eq_psi_n_Delta}
	\psi_n^{T,\Delta}(t)=\sqrt{\frac{{2}}{T+\Delta}}\cos\left(\frac{t}{\sqrt{\lambda_n}(T+\Delta)}\right),
	\quad \text{ with }\lambda_n=\frac{4}{(2n-1)^2 \pi^2},
\end{equation}
and the operator $\mathcal{K}^{T,\Delta} : {L}^2[0,T+\Delta] \to  \mathcal{C}[T,T+\Delta]$ is defined for $f \in {L}^2[0,T+\Delta]$ as 
$$
	\mathcal{K}^{T,\Delta}[f](t)
	:=\int_0^T K(t-s)f(s)ds,
	\qquad t \in [T,T+\Delta].
$$
Adapting the proof of Proposition~\ref{Prop_staz_quant} it is possible to prove that these quantizers are stationary, too.
\begin{remark}
The dependence on $\Delta$ is due to the fact that the coefficients in the series expansion 
depend on the time interval $[T,T + \Delta]$.
\end{remark}
In the RL case for each $n \in \NN$, we can write, using Remark~\ref{K_HT_repre_zeta}, for any $t \in [T,T+\Delta]$:
\begin{align*}
	\mathcal{K}_H^{T,\Delta}[\psi_n^{T,\Delta}](t)&
	= \sqrt{\frac{{2}}{T+\Delta}}\int_0^T (t-s)^{H-\frac{1}{2}}\cos\left(\frac{s}{\sqrt{\lambda_n}(T+\Delta)}\right)ds,\\&
	= \frac{\sqrt{2}(T+\Delta)^H}{(n-1/2)^{H+\frac{1}{2}}}\int_0^{\frac{(n-1/2)}{T+\Delta}T}\Big(\frac{(n-1/2)}{T+\Delta}t-u\Big)^{H-\frac{1}{2}}\cos(\pi u)du\\
	&	=  \frac{\sqrt{2}(T+\Delta)^H}{(n-\half)^{H+\frac{1}{2}}}\bigg\{ \cos\Big(\frac{(n-\half)}{T+\Delta}t\pi\Big)\Big(\zeta_{\half}\Big(\frac{(n-\half)}{T+\Delta}t, h_1\Big) - \zeta_{\half}\Big(\frac{(n-\half)}{T+\Delta}(t-T), h_1\Big)\Big) \\&
\qquad+ \pi \sin\Big(\frac{(n-\half)}{T+\Delta}t \pi\Big) \Big(\zeta_{\frac{3}{2}}\Big(\frac{(n-\half)}{T+\Delta}t, h_2\Big) - \zeta_{\frac{3}{2}}\Big(\frac{(n-\half)}{T+\Delta}(t-T), h_2\Big)\Big) \bigg\}.
\end{align*}

We thus exploit $ \widehat Z^{T,\Delta, \dd}$ to obtain an estimation of $\VIX_T$ and of VIX Futures through the following
\begin{align}\label{quant_price_VIX}
		&\qquad \widehat{\VIX}_T^{\dd}  :=  \left(  \frac{1}{\Delta}\int_T^{T + \Delta}v_0(t)
\exp\left\{\gamma \widehat Z^{T,\Delta, \dd}_t + \frac{\gamma^2}{2}\left(\int_0^{t-T}K(s)^2ds-\int_0^tK(s)^2ds\right)\right\}dt\right)^{\half},\\
		&\qquad \widehat{ \mathcal{P}}_T^{\dd} 
		 :=  \mathbb{E}\left[\left(  \frac{1}{\Delta}\int_T^{T + \Delta}v_0(t)
\exp\left\{\gamma \widehat Z^{T,\Delta, \dd}_t + \frac{\gamma^2}{2}\left(\int_0^{t-T}K(s)^2ds-\int_0^tK(s)^2ds\right)\right\}dt\right)^{\half}\right].
\end{align}

\begin{remark}\label{rem_short_eq_for_price}
The expectation above reduces to the following deterministic summation, 
making its computation immediate:
	\begin{eqnarray*}
		\widehat{\mathcal{P}}_T^{\dd} 
		&=&\mathbb{E}\left[\left(  \frac{1}{\Delta}\int_T^{T + \Delta}v_0(t)
e^{\gamma \sum_{n=1}^{m} \mathcal{K}^{T,\Delta}[\psi_n^{T,\Delta}](t) \widehat{\xi}_n^{d(n)} +\frac{\gamma^2}{2}\left(\int_0^{t-T}K(s)^2ds-\int_0^tK(s)^2ds\right) } dt\right)^{\frac{1}{2}} \right]\\
		&=&\sum_{\underline{i} \in I^d} \left(  \frac{1}{\Delta}\int_T^{T + \Delta}v_0(t) e^{\gamma \sum_{n=1}^{m} \mathcal{K}^{T,\Delta}[\psi_n^{T,\Delta}](t) x_{i_n}^{d(n)}+ \frac{\gamma^2}{2}\left(\int_0^{t-T}K(s)^2ds-\int_0^tK(s)^2ds\right)} dt\right)^{\frac{1}{2}} \\
		& & \cdot \prod_{n=1}^m\mathbb{P}(\xi_n \in C_{i_n}(\Gamma^{d(n)})),
	\end{eqnarray*}
	where $\widehat{\xi}_n^{d(n)}$ is the (unique) optimal quadratic quantization of~$\xi_n$ of order $d(n)$, $C_j(\Gamma^{d(n)})$ is the $j$-th Voronoi cell relative to the $d(n)$-quantizer (Definition~\ref{def_quant_normal_rv}), with $j=1, \dots, d(n)$
 and $\underline{i}= (i_1,\dots,i_m) \in \prod_{j=1}^m \{1, \dots, d(j)\}$.
	In the numerical illustrations displayed in Section~\ref{Sect_Num_res}, we exploited Simpson rule to evaluate these integrals. In particular, we used  \emph{simps} function from \emph{scipy.integrate} with $300$ points.
\end{remark}

\subsection{Quantization error of VIX Derivatives}

The following $L^2$-error estimate is a consequence of Assumption~\ref{Assumption_V_H} \textbf{(B)} 
and its proof is omitted since it is analogous to that of  Proposition~\ref{prop:choice_m_d}:

\begin{proposition}\label{prop:choice_m_d_VHTD}
Under Assumption~\ref{Assumption_V_H}, for any $N \geq 1$, there exist $m_T^*(N) \in \NN$,
$C>0$ such that 
	\[
		\mathbb{E} \left[ \left\| \widehat{Z}^{T,\Delta,\dd^*_{T,N}}-Z^{T,\Delta} \right\|^2_{L^2[T,T+\Delta]} \right]^{\half} 
		\leq C \log(N)^{-H},
	\]
for $\dd^*_{T,N} \in \cD_{m_T^*(N)}^{N}$ and  with, for each $n=1, \dots, m_T^*(N)$,
$$
d^*_{T,N}(n)
	= \Big\lfloor  N^{\frac{1}{m_T^*(N)}} n^{-(H+\frac{1}{2})} \left(m_T^*(N)!\right)^{\frac{2H+1}{2m_T^*(N)}}  \Big\rfloor.
$$
Furthermore $m_T^*(N)= \mathcal{O}(\log(N))$.
\end{proposition}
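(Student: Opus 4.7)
The plan is to follow the strategy of Proposition~\ref{prop:choice_m_d} verbatim, with all objects replaced by their $[T,T+\Delta]$-analogues. First I would write the error in terms of the series expansion: since $Z^{T,\Delta}_t = \sum_{n\geq 1} \mathcal{K}^{T,\Delta}[\psi_n^{T,\Delta}](t)\,\xi_n$ in $L^2(\mathbb{P})$ for every $t$, and $\widehat{Z}^{T,\Delta,\dd}_t$ truncates at $n=m$ while replacing each $\xi_n$ by $\widehat{\xi}_n^{d(n)}$, one gets the pointwise decomposition
\begin{equation*}
Z^{T,\Delta}_t - \widehat{Z}^{T,\Delta,\dd}_t
= \sum_{n=1}^{m}\mathcal{K}^{T,\Delta}[\psi_n^{T,\Delta}](t)\bigl(\xi_n - \widehat{\xi}_n^{d(n)}\bigr) + \sum_{n>m}\mathcal{K}^{T,\Delta}[\psi_n^{T,\Delta}](t)\,\xi_n.
\end{equation*}
Integrating in $t \in [T,T+\Delta]$, taking expectations, and using the independence of the $\xi_n$'s together with the stationarity of the one-dimensional quantizers (so that $\xi_n-\widehat{\xi}_n^{d(n)}$ is orthogonal to every $\widehat{\xi}_k^{d(k)}$), the cross terms vanish and we obtain
\begin{equation*}
\mathbb{E}\Bigl[\|\widehat{Z}^{T,\Delta,\dd} - Z^{T,\Delta}\|^2_{L^2[T,T+\Delta]}\Bigr]
= \sum_{n=1}^{m} \|\mathcal{K}^{T,\Delta}[\psi_n^{T,\Delta}]\|^2_{L^2[T,T+\Delta]}\,\mathbb{E}\bigl[|\xi_n - \widehat{\xi}_n^{d(n)}|^2\bigr]
+ \sum_{n>m}\|\mathcal{K}^{T,\Delta}[\psi_n^{T,\Delta}]\|^2_{L^2[T,T+\Delta]}.
\end{equation*}

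Next I would bound each ingredient. Assumption~\ref{Assumption_V_H}\,\textbf{(B)} gives $\|\mathcal{K}^{T,\Delta}[\psi_n^{T,\Delta}]\|_{L^2[T,T+\Delta]} \leq C\sqrt{\Delta}\,n^{-(H+\frac{1}{2})}$, while the classical Zador/Graf--Luschgy estimate for the optimal quadratic quantization of the standard Gaussian yields $\mathbb{E}[|\xi_n - \widehat{\xi}_n^{d(n)}|^2] \leq C\,d(n)^{-2}$. Substituting,
\begin{equation*}
\mathbb{E}\Bigl[\|\widehat{Z}^{T,\Delta,\dd} - Z^{T,\Delta}\|^2_{L^2[T,T+\Delta]}\Bigr]
\leq C\sum_{n=1}^{m}\frac{1}{n^{2H+1}\,d(n)^2}
+ C\sum_{n>m}\frac{1}{n^{2H+1}},
\end{equation*}
and the tail sum is $O(m^{-2H})$.

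The main work is then the constrained minimisation of $\sum_{n=1}^{m} n^{-(2H+1)}d(n)^{-2}$ over $\dd\in\cD_m^N$, for each fixed $m$. This is exactly the step carried out in the proof of Proposition~\ref{prop:choice_m_d}: a Lagrange-multiplier argument (working with $\log d(n)$ to handle the product constraint $\prod d(n)\leq N$) shows that the optimum is attained at $d^*_{T,N}(n) \propto N^{1/m} n^{-(H+\frac{1}{2})}(m!)^{(2H+1)/(2m)}$, giving a first term of order $m^{(2H+1)}\,N^{-2/m}(m!)^{-(2H+1)/m}$. Balancing this against the tail bound $m^{-2H}$ and optimising over $m$ leads, via Stirling, to the choice $m_T^*(N) = \mathcal{O}(\log N)$ and the announced rate $\log(N)^{-H}$.

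The main obstacle I anticipate is purely bookkeeping rather than conceptual: handling the floor function in $d^*_{T,N}(n)$ (some $d^*_{T,N}(n)$ may collapse to $1$ for large $n$, requiring a truncation argument that discards those indices and absorbs them into the tail sum), and keeping track of the constants $C_1,C_2$ from Assumption~\ref{Assumption_V_H} which now depend on $T$ (and $\Delta$). No new analytic ingredient is needed beyond what is used in Proposition~\ref{prop:choice_m_d}, since the series representation, the independence structure of the $\xi_n$'s, and the upper bound on $|\mathcal{K}^{T,\Delta}[\psi_n^{T,\Delta}]|$ all carry over verbatim from $[0,1]$ to $[T,T+\Delta]$.
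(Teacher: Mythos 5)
Your proposal follows exactly the route the paper intends: the proof of this proposition is omitted in the paper precisely because it is the verbatim transcription of the proof of Proposition~\ref{prop:choice_m_d} to $[T,T+\Delta]$, and that is what you reproduce — the same orthogonal decomposition from independence of the $\xi_n$, the same use of Assumption~\textbf{(B)} and of the $d(n)^{-2}$ bound for the one-dimensional Gaussian quantization error, and the same constrained minimisation (the paper solves it via the arithmetic--geometric inequality of Lemma~\ref{lem_arithm_geom} rather than Lagrange multipliers, but the optimiser and optimal value are identical). The only slip is the prefactor of the optimised first term, which is $m\,N^{-2/m}(m!)^{-(2H+1)/m}$ rather than $m^{2H+1}N^{-2/m}(m!)^{-(2H+1)/m}$ (as written, your version would be $O(1)$ for $m\sim\log N$); with the correct prefactor, the balancing against the tail $m^{-2H}$ over the set of $m$ for which all $d^*_{T,N}(n)\geq 1$ — which is how the paper pins down $m_T^*(N)=\mathcal{O}(\log N)$ — goes through exactly as you describe.
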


As a consequence, we have the following error quantification for European options on the VIX:

\begin{theorem}\label{Thm_pricing_err}
Let $F: \R \to \R$ be a globally Lipschitz-continuous function and $\dd\in \NN^{m}$ for some $m \in \NN$. 
There exists $\mathfrak{K}>0$ such that
	\begin{equation}\label{ineq1_Thm_pricing_err}
		\left|\mathbb{E}\left[F\left(\VIX_T\right)\right] - \mathbb{E}\left[F\left(\widehat{\VIX}_T^{\dd}\right)\right]\right|
		\leq \mathfrak{K}\  \mathbb{E}\left[\left\| Z^{T,\Delta}- \widehat{Z}^{T,\Delta,\dd}\right\|_{L^2([T,T+\Delta])}^2\right]^{\half}.
	\end{equation}
	Furthermore, for any $N \geq 1$, there exist $m_T^*(N) \in \NN$ and $\mathfrak{C}>0$ such that, with $\dd^*_{T,N} \in \cD_{m_T^*(N)}^{N}$, 
	\begin{equation}\label{ineq2_Thm_pricing_err}
		\left|\mathbb{E}\left[F\left(\VIX_T\right)\right] - \mathbb{E}\left[F\left(\widehat{\VIX}_T^{\dd^*_{T,N}}\right)\right]\right|
		\leq \mathfrak{C} \log(N)^{-H}.
	\end{equation}
\end{theorem}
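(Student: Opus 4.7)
The plan is to establish (\ref{ineq1_Thm_pricing_err}) first; the rate (\ref{ineq2_Thm_pricing_err}) will then follow immediately by invoking Proposition~\ref{prop:choice_m_d_VHTD}. Using the Lipschitz assumption on $F$, with constant $L_F$, the first reduction is
\[
\left|\mathbb{E}[F(\VIX_T)] - \mathbb{E}[F(\widehat{\VIX}_T^{\dd})]\right| \leq L_F\, \mathbb{E}\big[|\VIX_T - \widehat{\VIX}_T^{\dd}|\big].
\]
Writing $A:=\VIX_T^2$ and $\widehat{A}:=(\widehat{\VIX}_T^{\dd})^2$ for the non-negative quantities sitting under the square root in~\eqref{quant_price_VIX}, the identity $|\sqrt{A}-\sqrt{\widehat{A}}| = |A-\widehat{A}|/(\sqrt{A}+\sqrt{\widehat{A}}) \leq |A-\widehat{A}|/\sqrt{A}$ together with Cauchy--Schwarz gives
\[
\mathbb{E}\big[|\VIX_T - \widehat{\VIX}_T^{\dd}|\big] \leq \mathbb{E}[1/A]^{1/2}\, \mathbb{E}[(A-\widehat{A})^2]^{1/2}.
\]
The next two steps control each factor separately.

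For the first factor, assuming $v_{0}$ is bounded below by some $v_{\min}>0$ on $[T,T+\Delta]$ (standard in rough Bergomi) and that $c(t) := \frac{\gamma^2}{2}(\int_0^{t-T}K(s)^{2}\dd s - \int_0^{t}K(s)^{2}\dd s)$ is bounded there, Jensen's inequality applied to the exponential yields $A \geq c_0\, \E^{\gamma \bar{Z}^{T,\Delta}}$ with $\bar{Z}^{T,\Delta} := \frac{1}{\Delta}\int_T^{T+\Delta} Z^{T,\Delta}_s \dd s$ centered Gaussian, so that $\mathbb{E}[1/A]$ is controlled by a Gaussian moment generating function and in particular is finite.

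For the second factor, I would apply Minkowski's integral inequality, then the mean value theorem pointwise, $|\E^{\gamma Z_t^{T,\Delta}} - \E^{\gamma \widehat{Z}_t^{T,\Delta,\dd}}| \leq \gamma |Z_t^{T,\Delta} - \widehat{Z}_t^{T,\Delta,\dd}|(\E^{\gamma Z_t^{T,\Delta}} + \E^{\gamma \widehat{Z}_t^{T,\Delta,\dd}})$, and Cauchy--Schwarz in the probability to separate the pure quantization error from the exponential factor. The exponential moments are controlled uniformly: $Z_t^{T,\Delta}$ is Gaussian with variance bounded uniformly in $t$, and by the stationarity of $\widehat{Z}^{T,\Delta,\dd}$ (the analogue of Proposition~\ref{Prop_staz_quant}) combined with Jensen, $\mathbb{E}[\E^{p\gamma \widehat{Z}_t^{T,\Delta,\dd}}] \leq \mathbb{E}[\E^{p\gamma Z_t^{T,\Delta}}]$ for any $p \geq 1$. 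A final Cauchy--Schwarz in $t$ converts the resulting pointwise bound into the $L^2$-norm of the quantization error, yielding
\[
\mathbb{E}[(A-\widehat{A})^2]^{1/2} \leq C\, \mathbb{E}\Big[\big\|Z^{T,\Delta} - \widehat{Z}^{T,\Delta,\dd}\big\|^2_{L^2[T,T+\Delta]}\Big]^{1/2}.
\]

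The main obstacle lies in the Cauchy--Schwarz step just above, which on the face of it produces $\mathbb{E}[(Z_t^{T,\Delta}-\widehat{Z}_t^{T,\Delta,\dd})^4]^{1/4}$ rather than $\mathbb{E}[(Z_t^{T,\Delta}-\widehat{Z}_t^{T,\Delta,\dd})^2]^{1/2}$; keeping the full exponent $H$ (rather than $H/2$) in (\ref{ineq2_Thm_pricing_err}) requires the uniform comparison $\mathbb{E}[(Z_t^{T,\Delta}-\widehat{Z}_t^{T,\Delta,\dd})^4] \leq M_4 \big(\mathbb{E}[(Z_t^{T,\Delta}-\widehat{Z}_t^{T,\Delta,\dd})^2]\big)^2$ with $M_4$ independent of $t$ and $\dd$. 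Decomposing
$Z_t^{T,\Delta}-\widehat{Z}_t^{T,\Delta,\dd} = \sum_{n=1}^{m}\mathcal{K}^{T,\Delta}[\psi_{n}^{T,\Delta}](t)(\xi_n - \widehat{\xi}_n^{d(n)}) + \sum_{n>m}\mathcal{K}^{T,\Delta}[\psi_{n}^{T,\Delta}](t)\xi_n$, and expanding the fourth moment using independence and centering of all summands, this reduces to the uniform kurtosis bound $\sup_{N\geq 1} \mathbb{E}[(\xi - \widehat{\xi}^N)^4]/\mathbb{E}[(\xi - \widehat{\xi}^N)^2]^2 < \infty$ for the optimal scalar quadratic quantizer of a standard Gaussian---a consequence of the Zador/Pierce-type $L^p$-asymptotics $\mathbb{E}[|\xi-\widehat{\xi}^N|^p] \asymp N^{-p}$ together with direct verification for small $N$. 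Once this moment comparison is in hand, the chain of inequalities closes, giving (\ref{ineq1_Thm_pricing_err}); then (\ref{ineq2_Thm_pricing_err}) is immediate from Proposition~\ref{prop:choice_m_d_VHTD}.
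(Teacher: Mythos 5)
Your skeleton --- the Lipschitz reduction, the square--root inequality $|\sqrt{A}-\sqrt{\widehat A}|\le |A-\widehat A|/\sqrt{A}$, and the separation of an inverse moment of $\VIX_T^2$ from the quantization error --- is essentially the paper's, and your treatment of $\mathbb{E}[1/A]$ via Jensen applied to the time--average $\bar Z^{T,\Delta}$ is correct (and arguably cleaner than the paper's route through a pathwise lower bound and the Borel--TIS inequality for $\mathbb{E}[\exp\{2\gamma\sup_s(-Z_s^{T,\Delta})\}]$). The genuine gap is in the step you yourself identify as the main obstacle: the uniform kurtosis bound $\sup_{N\ge1}\mathbb{E}[(\xi-\widehat\xi^N)^4]\big/\mathbb{E}[(\xi-\widehat\xi^N)^2]^2<\infty$ for the \emph{optimal quadratic} quantizer of a standard Gaussian is false. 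Zador/Pierce asymptotics give $\mathbb{E}[|\xi-\widehat\xi^N|^p]\asymp N^{-p}$ for the optimal $L^p$-quantizer, not for the $L^p$-error of the optimal $L^2$-quantizer; transferring the rate (the ``distortion mismatch'' problem of Graf, Luschgy and Pag\`es) works only for $p<d+r=3$, and $p=4$ lies beyond the threshold. Concretely, the largest codepoint satisfies $x_N^N\asymp\sqrt{\ln N}$ with $\varphi(x_N^N)\asymp N^{-3}$ up to logarithmic factors ($\varphi$ the standard normal density), so the outermost Voronoi cell alone contributes $\int_{x_N^N}^{\infty}(y-x_N^N)^4\varphi(y)\,dy\gtrsim N^{-3}(\ln N)^{-5/2}$ to the fourth moment, while $\mathbb{E}[(\xi-\widehat\xi^N)^2]^2\asymp N^{-4}$: the kurtosis ratio diverges like $N$ up to logarithms. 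Your reduction of the process-level fourth-moment comparison to this scalar one (via independence and centering of the summands) is fine, but the scalar statement it rests on does not hold, so the chain does not close.

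The repair is not a better moment comparison but a reordering of the two Cauchy--Schwarz applications, which is exactly what the paper does. Writing $Z:=Z^{T,\Delta}$, $\widehat Z:=\widehat Z^{T,\Delta,\dd}$ and $\Hf$, $\Hft$ for the quantities under the square roots, first bound $|\Hf-\Hft|$ \emph{pathwise} by Cauchy--Schwarz in $t$,
\begin{equation}
|\Hf-\Hft|\;\le\;\gamma\left(\int_T^{T+\Delta}h(t)^2\left(e^{\gamma Z_t}+e^{\gamma\widehat Z_t}\right)^2 dt\right)^{\half}\left\|Z-\widehat Z\right\|_{L^2[T,T+\Delta]},
\end{equation}
and only then apply a single Cauchy--Schwarz in $\omega$: this pulls out $\mathbb{E}[\|Z-\widehat Z\|^2_{L^2[T,T+\Delta]}]^{\half}$ directly and lumps everything else into the constant $\mathfrak{K}$, whose finiteness is checked precisely by the exponential-moment arguments you already sketched (Gaussian moment generating functions for $Z$, stationarity of the quantizer plus conditional Jensen for $\widehat Z$). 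With this ordering no moment of $Z_t-\widehat Z_t$ of order higher than two ever appears, and \eqref{ineq1_Thm_pricing_err} follows; \eqref{ineq2_Thm_pricing_err} is then immediate from Proposition~\ref{prop:choice_m_d_VHTD}, as you say.
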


The upper bound in~\eqref{ineq2_Thm_pricing_err} is an immediate consequence of ~\eqref{ineq1_Thm_pricing_err} and Proposition~\ref{prop:choice_m_d_VHTD}. The proof of~\eqref{ineq1_Thm_pricing_err} is much more involved and is postponed to Appendix~\ref{App_Thm_pricing_err}.

\begin{remark}\ 
\begin{itemize}
\item When $F(x)=1$, we obtain the price of VIX Futures and the quantization error
$$
\left|\mathcal{P}_T - \widehat{\mathcal{P}}_T^{\dd}\right| 
		\leq \mathfrak{K} \  \mathbb{E}\left[\left\| Z^{T,\Delta}- \widehat{Z}^{T,\Delta,\dd}\right\|_{L^2([T,T+\Delta])}^2\right]^{\half},
$$
and, for any $N \geq 1$, Theorem~\ref{Thm_pricing_err} yields the existence of~$m_T^*(N) \in \NN$, $\mathfrak{C}>0$ such that	
$$
\left|\mathcal{P}_T - \widehat{\mathcal{P}}_T^{\dd^*_{T,N}}\right| 
\leq \mathfrak{C} \log(N)^{-H}.
$$
\item Since the functions $F(x):=(x-K)_{+}$ and $F(x):=(K-x)_{+}$ are globally Lipschitz continuous, the same bounds apply for European Call  and Put options on the VIX.
\end{itemize}
\end{remark}

\section{Numerical results for the RL case }\label{Sect_Num_res}
We now test the quality of the quantization on the pricing of VIX Futures in the standard rough Bergomi model, considering the RL kernel in Remark~\ref{rem:RLKernelBergomi}.

\subsection{Practical considerations for $m$ and $\bold{d}$}\label{sect_m_d_in_pract}

Proposition~\ref{prop:choice_m_d} provides, for any fixed $N \in \NN,$ some indications on~$m^*(N)$ and~$\dd^*_N\in \mathcal{D}_m^N$ (see~\eqref{eq:SetD}), for which the rate of convergence of the quantization error is $\log(N)^{-H}$.
We present now a numerical algorithm to compute the optimal parameters.
For a given number of trajectories $N \in \NN$, the problem is equivalent to finding $m \in \NN$ 
and $\dd\in \cD_m^N$ such that $\mathbb{E}[ \|Z^H-\widehat{Z}^{H,\dd}\|_{L^2[0,1]}^2]$ is minimal. 
Starting from~\eqref{eq_for_m_d_app} and adding and subtracting the quantity $\sum_{n=1}^m (\int_0^1 \mathcal{K}_H[\psi_n](t)^2 dt)$, we obtain
\begin{align}\label{esti}
\mathbb{E}\left[\left\|Z^H-\widehat{Z}^{H, \dd}\right\|_{L^2[0,1]}^2 \right]
& =\sum_{n=1}^m \left(\int_0^1 \mathcal{K}_H[\psi_n](t)^2 dt \right) [\err^{d(n)}( \xi_n )]^2 +\sum_{k \geq m+1}\int_0^1 \mathcal{K}_H[\psi_k](t)^2 dt\nonumber\\
& =\sum_{n=1}^m \left(\int_0^1 \mathcal{K}_H[\psi_n](t)^2 dt \right) \left\{\left[\err^{d(n)}( \xi_n )\right]^2 - 1\right\} +\sum_{k \geq 1}\int_0^1 \mathcal{K}_H[\psi_k](t)^2 dt,
ix\end{align}
where $\err^{d(n)}(\xi_n)$ denotes the optimal quadratic quantization error for the quadratic quantizer of order~$d(n)$ of the standard Gaussian random variable~$\xi_n$
(see Appendix~\ref{App_P_prop:choice_m_d} for more details).  
Notice that the last term on the right-hand side of~\eqref{esti} does not depend on~$m$, nor on~$\dd$.
We therefore simply look for~$m$ and $\dd$ that minimize
\[
A(m, \dd) := \sum_{n=1}^m \left(\int_0^1 \mathcal{K}_H[\psi_n]^2(t) dt \right) 
\left([\err^{d(n)}( \xi_n )]^2 - 1\right).
\]
This can be easily implemented: the functions $\mathcal{K}_H[\psi_n]$ can be obtained numerically from the Hypergeometric function
and the quadratic errors $\err^{d(n)}(\xi_n)$ are available at \url{www.quantize.maths-fi.com/gaussian_database}, for $d(n)\in\{1,\dots, 5999\}$.
The algorithm therefore reads as follows
\begin{enumerate}
	\item[(i)] fix $m$;
	\item[(ii)] minimize $A(m,\dd)$ over $\dd \in \mathcal{D}_m^N$ and call it~$\widetilde{A}(m)$;
	\item[(iii)] minimize $\widetilde A(m)$ over $m \in \NN$.
\end{enumerate}
The results of the algorithm for some reference values of $N \in \NN$ are available in  Table~\ref{tab:table11}, 
where $\overline{N}_{traj}:= \prod_{i=1}^{\overline{m}(N)}\overline{d}_N(i)$ represents the number of trajectories actually computed in the optimal case.
In Table~\ref{tab:table2}, we compute the rate optimal parameters derived in Proposition~\ref{prop:choice_m_d}:
the column `Relative error' contains the normalized difference between the $L^2$-quantization error made with the optimal choice of $\overline{m}(N)$ and $\overline{\dd}_N$ in Table~\ref{tab:table11} and the $L^2$-quantization error made with
$m^*(N)$ and ${\dd}^*_N$ of the corresponding line of the table, namely
$\frac{|\|Z^H-\widehat{Z}^{H, \overline{\dd}_N}\|_{L^2[0,1]}- \|Z^H-\widehat{Z}^{H, \dd^*_N}\|_{L^2[0,1]}|}{\|Z^H-\widehat{Z}^{H, \overline{\dd}_N}\|_{L^2[0,1]}}$.
In the column ${N}^*_{traj}:= \prod_{i=1}^{{m}^*(N)}{d}^*_N(i)$ we display the number of trajectories actually computed in the rate-optimal case.
The optimal quadratic vector quantization of a standard Gaussian of order~$1$ is the random variable identically equal to zero and so when $d(i)=1$ the corresponding term is uninfluential in the representation.

\begin{table}[h!]
  \begin{center}
  \caption{Optimal parameters.}
  \label{tab:table11}
    \begin{tabular}{c|c|c|c} 
      \textbf{$N$}  & $\overline{m}(N)$   & $\overline{\dd}_N$ & \textbf{$\overline{N}_{traj}$} \\
      \hline
      $10$  & $2$ & $5$ - $2$ & $10$ \\
      $10^2$  &  $4$  & $8$ - $3$  - $2$ - $2$ & $96$ \\
      $10^3$  &  $6$  & $10$ - $4$ - $3$ - $2$ - $2$ - $2$ & $960$\\
      $10^4$  &  $8$  & $10$ - $5$ - $4$ - $3$ - $2$ - $2$ - $2$ - $2$ & $9600$\\
      $10^5$ & $10$  & $14$ - $6$ - $4$ -  $3$ - $3$ - $2$ - $2$ - $2$ - $2$ - $2$ & $96768$ \\
      $10^6$  & $12$  & $14$ - $6$ - $5$ - $4$ - $3$ - $3$ - $2$ - $2$ - $2$ - $2$ - $2$ - $2$ & $967680$\\
    \end{tabular}
  \end{center}
\end{table}

\begin{table}[h!]
  \begin{center}
    \caption{Rate-optimal parameters.}
    \label{tab:table2}
    \begin{tabular}{c|c|c|c|c} 
      \textbf{$N$}  & $m^*(N)=\lfloor\log(N)\rfloor$ & \textbf{Relative error}  & $\dd^*_N$ & \textbf{$N_{traj}^*$} \\
      \hline
      $10$  & $2$ & $2.75\%$ & $3$ - $2$ & $6$ \\
      $10^2$ &  $4$ & $1.30\%$ & $5$ - $3$  - $2$ - $2$ & $60$ \\
      $10^3$  &  $6$ & $1.09\%$ & $6$ - $4$ - $3$ - $2$ - $2$ - $2$ & $576$\\
      $10^4$ &  $9$ & $3.08\%$ & $6$ - $4$ - $3$ - $2$ - $2$ - $2$ - $2$ - $1$ - $1$ & $1152$ \\
      $10^5$ & $11$ & $3.65\%$ & $7$ - $4$ - $3$ -  $3$ - $2$ - $2$ - $2$ - $2$ - $1$ - $1$ - $1$ & $4032$ \\
      $10^6$ & $13$ & $2.80\%$ & $8$ - $5$ - $4$ -  $3$ - $3$ - $2$ - $2$ - $2$ - $2$ - $2$ - $1$ - $1$ - $1$ & $46080$ \\
    \end{tabular}
  \end{center}
\end{table}

\begin{table}[h!]
  \begin{center}
    \label{tab:table1}
    \begin{tabular}{c|c|c|c|c} 
      \textbf{$N$} & $m^*(N)=\lfloor\log(N)\rfloor$ - 1 & \textbf{Relative error}  & $\dd^*_N$ & \textbf{$N_{traj}^*$} \\
      \hline
      $10$ & $1$ & $2.78\%$ & $10$ & $10$ \\
      $10^2$ &  $3$ & $1.13\%$ & $6$ - $4$  - $3$ & $72$ \\
      $10^3$ &  $5$ & $1.22\%$ & $7$ - $4$ - $3$ - $3$ - $2$  & $504$ \\
      $10^4$ &  $8$ & $1.35\%$ & $7$ - $4$ - $3$ - $3$ - $2$ - $2$ - $2$ - $2$ & $4032$ \\
      $10^5$ & $10$ & $2.29\%$ & $7$ - $5$ - $4$ -  $3$ - $2$ - $2$ - $2$ - $2$ - $2$ - $1$ & $13440$ \\
      $10^6$ & $12$ & $2.25\%$ & $8$ - $5$ - $4$ -  $3$ - $3$ - $2$ - $2$ - $2$ - $2$ - $2$ - $2$ - $1$ & $92160$  \\
    \end{tabular}
  \end{center}
\end{table}

\begin{table}[h!]
  \begin{center}
    \label{tab:table1}
    \begin{tabular}{c|c|c|c|c} 
      \textbf{$N$}  & $m^*(N)=\lfloor\log(N)\rfloor$ - 2 & \textbf{Relative error}  & $\dd^*_N$ & \textbf{$N_{traj}^*$}\\
      \hline
      $10^2$ &  $2$ & $2.53\%$ & $12$ - $8$ & $96$ \\
      $10^3$ &  $4$ & $1.44\%$ & $9$ - $5$ - $4$ - $3$  & $540$  \\
      $10^4$ &  $7$ & $1.46\%$ & $7$ - $5$ - $4$ - $3$ - $2$ - $2$ - $2$ & $3360$ \\
      $10^5$ & $9$ & $1.57 \%$ & $8$ - $5$ - $4$ -  $3$ - $3$ - $2$ - $2$ - $2$ - $2$  & $23040$\\
      $10^6$ & $11$ & $1.48 \%$ & $9$ - $6$ - $4$ -  $3$ - $3$ - $3$ - $2$ - $2$ - $2$ - $2$ - $2$ & $186624$ \\
    \end{tabular}
  \end{center}
\end{table}

\subsection{The functional quantizers}

The computations in Section~\ref{Sect_RL} and~\ref{section_quant} for the RL process, respectively the ones in Section~\ref{Subsec_VH} and~\ref{Sect_trunc_RL} for~$Z^{H,T}$, 
provide a way to obtain the functional quantizers of the processes. 

\subsubsection{Quantizers of the RL process}

For the RL process, Definition~\ref{def_quantizer_Z} shows that its quantizer 
is a weighted Cartesian product of grids of the one-dimensional standard Gaussian random variables. 
The time-dependent weights $\mathcal{K}_H[\psi_n](\cdot)$ are computed using~\eqref{eq_rep_KH},
and for a fixed number of trajectories~$N$, suitable $\overline{m}(N)$ and $\overline{\dd}_N\in\cD_{\overline{m}(N)}^{N}$ are chosen according to the algorithm in Section~\ref{sect_m_d_in_pract}. 
Not surprisingly, Figures~\ref{fig_quant_5} show that as the paths of the process get smoother ($H$ increases) the trajectories become less fluctuating and shrink around zero.
For $H=0.5$, where the RL process reduces to the standard Brownian motion, 
we recover the well-known quantizer from~\cite[Figures 7-8]{Pages}.
This is consistent as in that case 
$\mathcal{K}_H[\psi_n](t)= \sqrt{\lambda_n}\sqrt{2}\sin\left(\frac{t}{\sqrt{\lambda_n}}\right)$, and so~$Y^H$ is the Karhuenen-Lo\`eve expansion for the Brownian motion~\cite[Section 7.1]{Pages}.

\begin{figure}[htbp]
\begin{minipage}[b]{0.47\textwidth}
\centering
\includegraphics[scale=0.5]{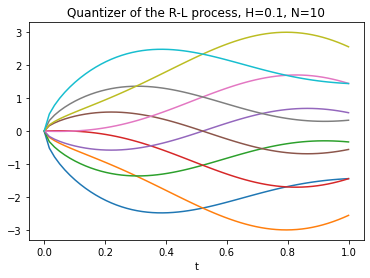}
\end{minipage}
\begin{minipage}[b]{0.47\textwidth}
\centering
\includegraphics[scale=0.5]{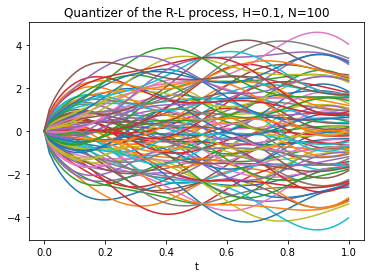}
\end{minipage}
\begin{minipage}[b]{0.47\textwidth}
\centering
\includegraphics[scale=0.5]{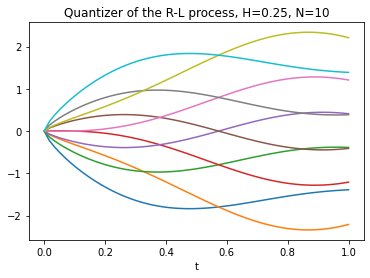}
\end{minipage}
\begin{minipage}[b]{0.47\textwidth}
\centering
\includegraphics[scale=0.5]{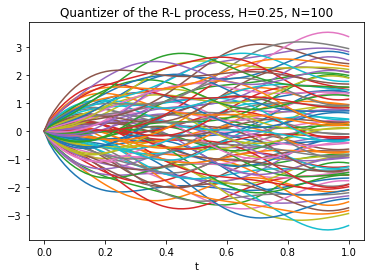}
\end{minipage}
\begin{minipage}[b]{0.47\textwidth}
\centering
\includegraphics[scale=0.5]{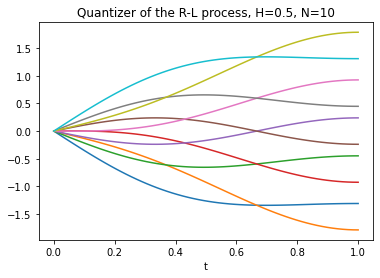}
\end{minipage}
\begin{minipage}[b]{0.47\textwidth}
\centering
\includegraphics[scale=0.5]{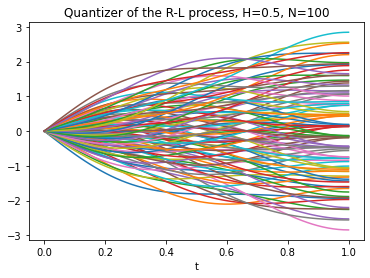}
\end{minipage}
\vspace{-0.3cm}
\caption{Product functional quantizations of the RL process with N-quantizers, 
for $H\in \{0.1, 0.25, 0.5\}$, for $N=10$ and $N=100$.}
\label{fig_quant_5}
\end{figure}

\subsubsection{Quantizers of $Z^{H,T}$}

A quantizer for $Z^{H,T}$ is defined analogously to that of $Z^H$ using Definition~\ref{def_quantizer_Z}. 
The weights $\mathcal{K}_H^T[\psi_n](\cdot)$ in the summation are available in closed form,
as shown in Remark~\ref{K_HT_repre_zeta}.
It is therefore possible to compute the $N$-product functional quantizer, for any $N \in \NN$, 
as Figure ~\ref{fig_quant_proc_V} displays.

\begin{figure}[htbp]
\begin{minipage}[b]{0.47\textwidth}
\centering
\includegraphics[scale=0.47]{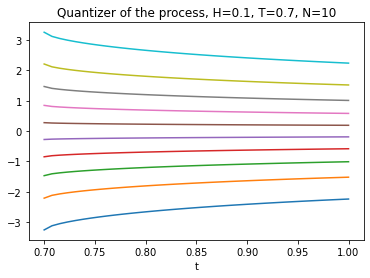}
\end{minipage}
\begin{minipage}[b]{0.47\textwidth}
\centering
\includegraphics[scale=0.47]{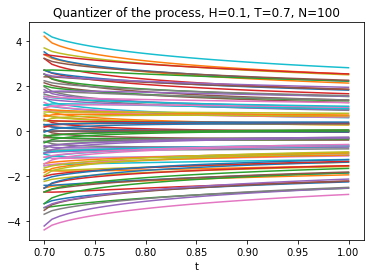}
\end{minipage}
\vspace{-0.3cm}
\caption{Product functional quantization of~$Z^{H,T}$ via N-quantizers, with $H=0.1$, $T=0.7$, for $N\in\{10, 100\}$.}
\label{fig_quant_proc_V}
\end{figure}

\newpage

\newpage

\subsection{Pricing and comparison with Monte Carlo}

In this section we show and comment some plots related to the estimation of prices of derivatives on the VIX and realized variance. 
We set the values $H=0.1$  and $\nu = 1.18778$ for the parameters and investigate three different initial forward variance curves~$v_0(\cdot)$, as in~\cite{Muguruza}:

\begin{itemize}
	\item[] Scenario 1. $v_0(t)= 0.234^2$;
	
	\item[] Scenario 2. $v_0(t)= 0.234^2(1 + t)^2$;
	
	\item[] Scenario 3. $v_0(t)= 0.234^2 \sqrt{1 + t}$.
\end{itemize}
The choice of such $\nu$ is a consequence of the choice $\eta=1.9$, consistently with~\cite{Bennedsen}, and of the relationship $\nu= \eta \frac{\sqrt{2 H}}{2 C_H}$.
In all these cases, $v_0$ is an increasing function of time, whose value at zero is close to the square of the reference value of $0.25$.

\subsubsection{VIX Futures Pricing}

One of the most recent and effective way to compute the price of VIX Futures is a Monte-Carlo-simulation method based on  Cholesky decomposition, for which we refer to~\cite[Section 3.3.2]{Muguruza}. It can be considered as a good approximation of the true price when the number~$M$ of computed paths is large. 
In fact, in~\cite{Muguruza} the authors tested three simulation-based methods (Hybrid scheme + forward Euler, Truncated Cholesky, SVD decomposition)  
and `all three methods seem to approximate the prices similarly well'. 
We thus consider the truncated Cholesky approach as a benchmark  
and take $M=10^6$ trajectories and $300$ equidistant point for the time grid. 

\begin{figure}[htbp]
\begin{minipage}[b]{0.47\textwidth}
\centering
\includegraphics[scale=0.5]{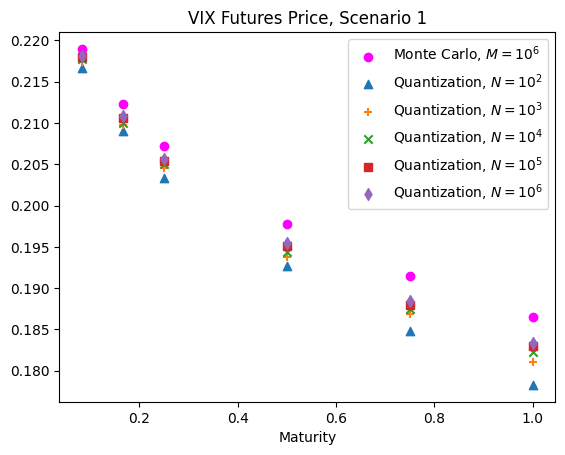}
\end{minipage}
\begin{minipage}[b]{0.47\textwidth}
\centering
\includegraphics[scale=0.5]{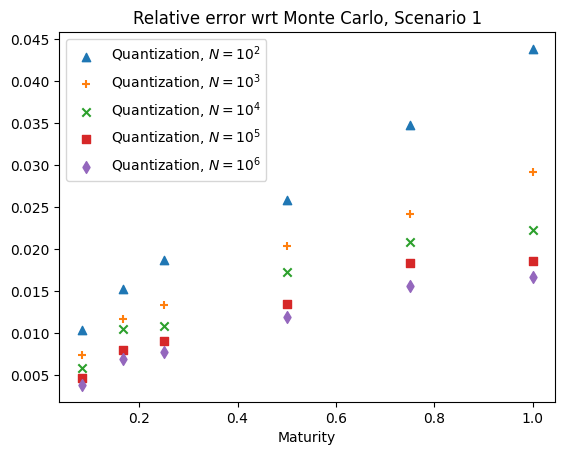}
\end{minipage}
\begin{minipage}[b]{0.47\textwidth}
\centering
\includegraphics[scale=0.5]{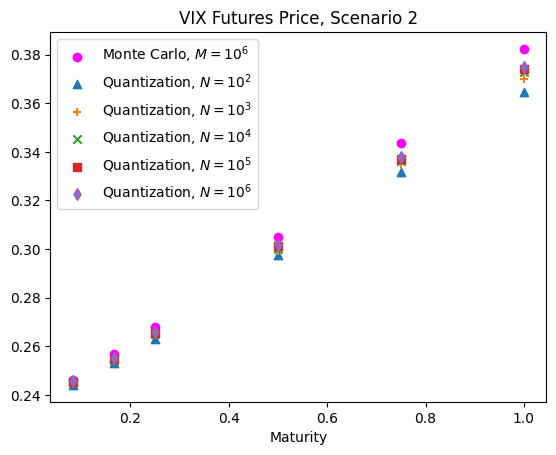}
\end{minipage}
\begin{minipage}[b]{0.47\textwidth}
\centering
\includegraphics[scale=0.5]{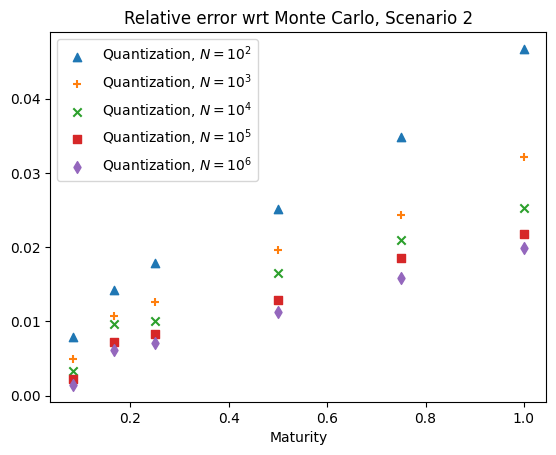}
\end{minipage}
\begin{minipage}[b]{0.47\textwidth}
\centering
\includegraphics[scale=0.5]{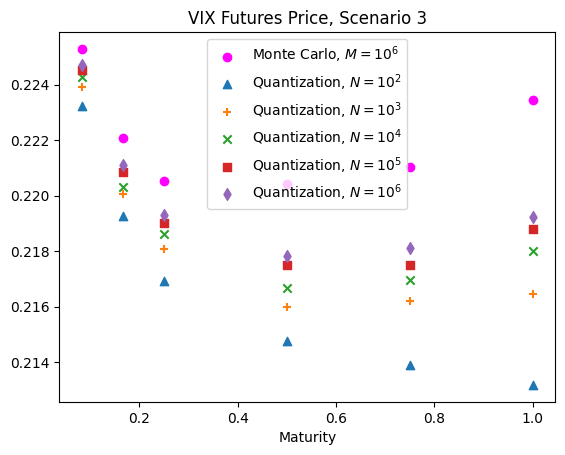}
\end{minipage}
\begin{minipage}[b]{0.47\textwidth}
\centering
\includegraphics[scale=0.5]{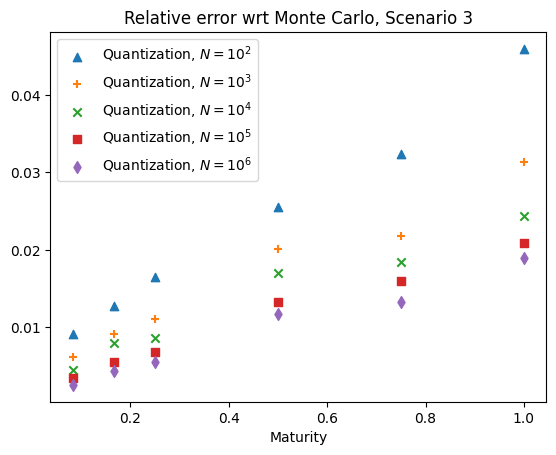}
\end{minipage}
\caption{VIX Futures prices (left) and relative error (right) computed with quantization and with Monte-Carlo as a function of the maturity~$T$, for different numbers of trajectories, for each forward variance curve scenario.}\label{fig_pr_sce3}
\end{figure}

In Figure~\ref{fig_pr_sce3},
we plot the VIX Futures prices as a function of the maturity~$T$, where $T$ ranges in $\{1, 2, 3, 6, 9, 12\}$ months
(consistently with actual quotations) on the left, and the corresponding relative error w.r.t. the Monte Carlo benchmark on the right.
It is clear that the quantization approximates the benchmark from below and that the accuracy increases with the number of trajectories.
\\ 
We highlight that the quantization scheme for VIX Futures can be sped up considerably by storing ahead the quantized trajectories for $Z^{H,T,\Delta}$, so that we only need to compute the integrations and summations in Remark~\ref{rem_short_eq_for_price}, which are extremely fast. 

\begin{table}[htbp!]
	\begin{center}
	\caption{Grid organization times (in seconds) as a function of the maturity (rows, in months) and of the number of trajectories (columns).}
	\label{table_grid_organization}
		\begin{tabular}{ |p{0.5cm}||p{2cm}|p{2cm}|p{2cm}|p{2cm}|p{2cm}|  }
		\hline
		\hline
		\multicolumn{6}{|c|}{\textbf{Grid organisation time}} \\
		\hline
		\hline
		  & $10^2$ & $10^3$ & $10^4$ & $10^5$ & $10^6$\\
		\hline
		\hline
		 1 &0.474 & 0.491 & 0.99 & 4.113 & 37.183 \\
		\hline
		 2 & 0.476 & 0.487 & 0.752 & 4.294 & 39.134 \\
		\hline 
		 3 & 0.617 & 0.536 & 0.826 & 4.197 & 37.744 \\
		\hline  
		 6 & 0.474 & 0.475 & 0.787 & 4.432 & 37.847 \\
		\hline  
		 9 & 0.459 & 0.6   & 0.858 & 3.73  & 41.988 \\
		\hline 
		 12 & 0.498 & 0.647 & 1.016 & 3.995 & 38.045 \\
		\hline
		\end{tabular}
		\end{center}
	\end{table}

Furthermore, the grid organization time itself is not that significant. 
In Table~\ref{table_grid_organization} we display the grid organization times (in seconds) as a function of the maturity (rows) expressed in months and of the number of trajectories (columns). From this table one might deduce that the time needed for the organization of the grids is suitable to be performed once per day (say every morning) as it should be for actual pricing purposes.
It is interesting to note that the estimations obtained with quantization (which is an exact method) are consistent in that they mimick the trend of benchmark prices over time even for very small values of~$N$. 
However, as a consequence of the variance in the estimations, 
the Monte Carlo prices are almost useless for small values of~$M$.
Moreover, improving the estimations with Monte Carlo requires to increase the number of points in the time grid with clear impact on computational time, while this is not the case with quantization since the trajectories in the quantizers are smooth.
Indeed, the trajectories in the quantizers are not only smooth but also almost constant over time, hence reducing the number of time steps to get the desired level of accuracy. 
Notice that here we may refer also to the issue of complexity related to discretization: a quadrature formula over $n$ points has a cost $\Oo(n)$, while the simulation with a Cholesky method over the same grid has cost $\Oo(n^2)$.
Finally, our quantization method does not require much RAM.
Indeed, all the simulations performed with quantization can be easily run on a personal laptop\footnote{The personal computer used to run the quantization codes has the following technical specifications: RAM: 8.00 GB, SSD memory: 512 GB, Processor: AMD Ryzen 7 4700U with Radeon Graphics 2.00 GHz.}, while this is not the case for the Monte Carlo scheme proposed here\footnote{The computer used to run the Monte Carlo codes is a virtual machine (OpenStack/Nova/KVM/Qemu, \url{www.openstack.org}) with the following technical specifications: RAM: 32.00 GB, CPU: 8 virtual cores, Hypervisor CPU: Intel(R) Xeon(R) CPU E5-2650 v3 @ 2.30GHz, RAM 128GB, Storage: CEPH cluster (\url{www.ceph.com}).}. 
For the sake of completeness, we also recall that combining Monte Carlo pricing of VIX futures/options with an efficient control variate speeds up the computations significantly~\cite{HJT2020}.

	\begin{figure}[htbp!]
	\begin{minipage}[b]{0.47\textwidth}
	\centering
	\includegraphics[scale=0.5]{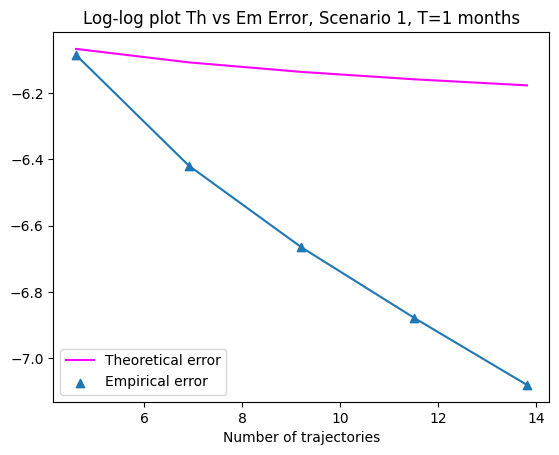}
	\end{minipage}
	\begin{minipage}[b]{0.47\textwidth}
	\centering
	\includegraphics[scale=0.5]{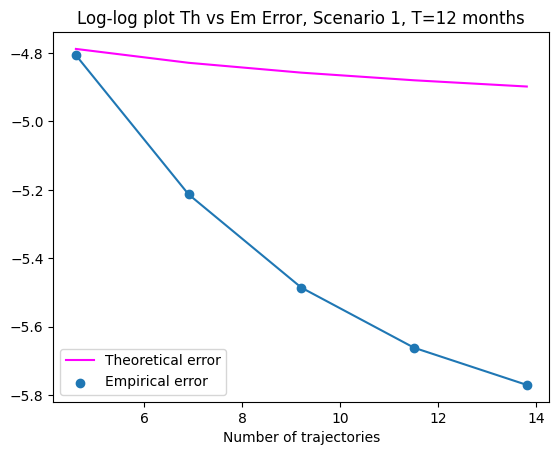}
	\end{minipage}
	\caption{Log-log (natural logarithm) plot of the empirical absolute error with the theoretically predicted one for Scenario 1, with $T \in \{ 1, 12\}$ months.}
	\label{fig_log_log_plot_th_vs_em_err}
	\end{figure}
	
	In Figure~\ref{fig_log_log_plot_th_vs_em_err}, we show some plots comparing the behaviour of the empirical error with the theoretically predicted one. 
	We have decided to display only a couple of maturities for the first scenario since the other plots are very similar.
	The figures display in a clear way that the order of convergence of the empirical error should be bigger than the theoretically predicted one: in particular, we expect it to be $\Oo(\log(N)^{-1})$.

\subsubsection{VIX Options Pricing}

	To complete the discussion on VIX Options pricing, we present in Figure~\ref{fig_call_VIX} the approximation of the prices of ATM Call Options on the VIX obtained via quantization as a function of the maturity $T$ and for different numbers of trajectories against the same price computed via Monte Carlo simulations with $M=10^6$ trajectories and $300$ equidistant point for the time grid, as a benchmarch.
	Each plot represents a different scenario for the initial forward variance curve. 
	For all scenarios, as the number~$N$ of trajectories goes to infinity, the prices  in Figure~\ref{fig_call_VIX} are clearly converging, and the limiting curve is increasing in the maturity, as it should be.

	\begin{figure}[htbp]
	\begin{minipage}[b]{0.47\textwidth}
	\centering
all	\includegraphics[scale=0.5]{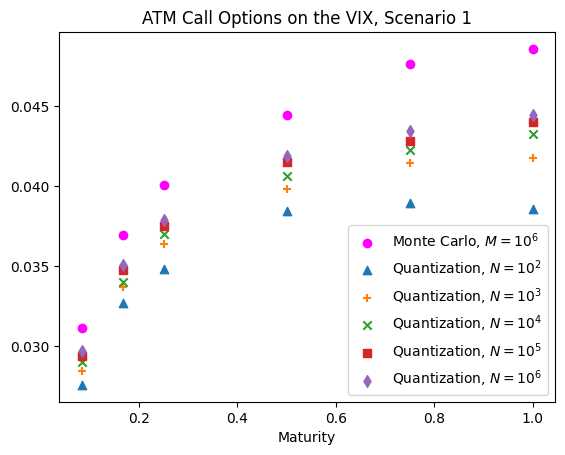}
	\end{minipage}
	\begin{minipage}[b]{0.47\textwidth}
	\centering
	\includegraphics[scale=0.5]{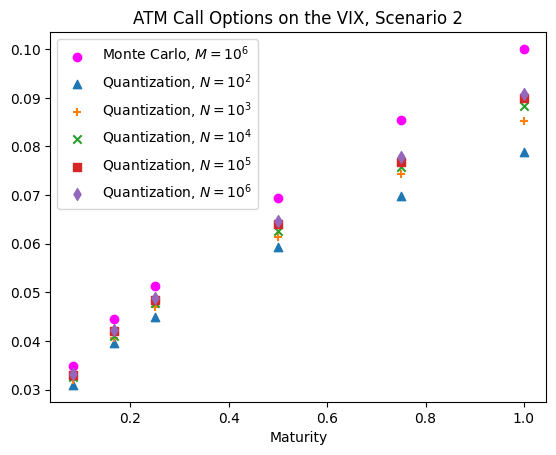}
	\end{minipage}
	\begin{minipage}[b]{0.47\textwidth}
	\centering
	\includegraphics[scale=0.5]{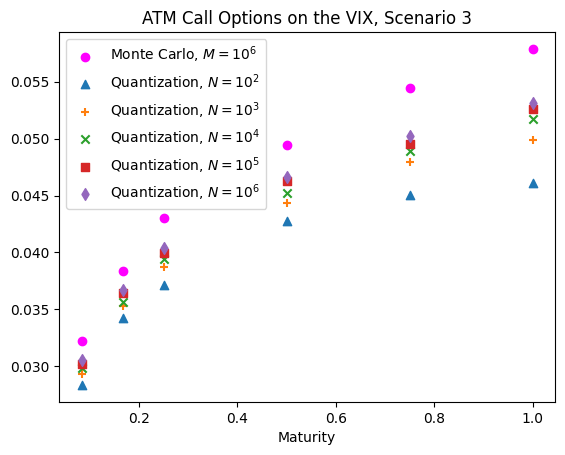}
	\end{minipage}
	\caption{Prices of ATM Call Options on the VIX via quantization.}
	\label{fig_call_VIX}
	\end{figure}

\subsubsection{Pricing of Continuously Monitored Options on Realized Variance}

    Product functional quantization of the process $(Z_t^H)_{t \in [0,T]}$ can be exploited for (meaningful) pricing purposes, too.
    We first price variance swaps, whose price is given by the following expression
     \begin{displaymath}
     	\mathfrak{S}_T :=\mathbb{E}\left[ \frac{1}{T}\int_0^T \mathcal{V}_t \D t \bigg| \mathcal{F}_0\right].
     \end{displaymath} 
     
     Let us recall that, in the rough Bergomi model,
     \begin{displaymath}
     \mathcal{V}_t = v_0(t)\exp \Big(2\nu C_H Z_t^H-\frac{\nu^2C_H^2}{H}t^{2H}\Big), 
     \end{displaymath}
     where $C_H= \sqrt{\frac{2H\Gamma(3/2-H)}{\Gamma(H+1/2)\Gamma(2-2H)}}$, $ \nu>0$ is an endogenous  constant and $v_0(t)$ being the initial forward variance curve.
	Thus, exploiting the fact that, for any fixed $t \in [0,T]$, $Z^H_t$ is distributed according to a centred Gaussian random variable with variance $\int_0^t (t-s)^{2H-1} ds = \frac{t^{2H}}{2H}$, the quantity $\mathfrak{S}_T$ can be explicitly computed: 
	\begin{displaymath}
	\mathfrak{S}_T = \frac{1}{T}\int_0^T v_0(t) \D t.
	\end{displaymath}
	This is particularly handy and provides us a simple benchmark.
	The price $\mathfrak{S}_T$ is, then, approximated via quantization through
	\begin{align}
	\widehat{\mathfrak{S}}_T^{\bold{d}}= \sum_{\underline{i} \in I^d} \left(\frac{1}{T}\int_0^T v_0(t) \exp \left(2\nu C_H \sum_{n=1}^m \mathcal{K}_H[\psi_n](t)x_{i_n}^{d(n)} - \frac{\nu^2 C_H^2}{H}t^{2H}\right) dt \right) \prod_{n=1}^m \mathbb{P}(\xi_n \in C_{i_n}(\Gamma^{d(n)})).
	\end{align}
	
	Numerical results are presented in Figure~\ref{fig_price_swaps_real_var}.
	On the left-hand side we display a table with the approximations (depending on $N$, the number of trajectories) of the price of a swap on the realized variance in Scenario 1, for $T=1$, and the true value $v_0=0.234^2$. On the right-hand side a log-log (natural logarithm) plot of the error against the function $c \log(N)^{-H}$, with $c$ being a suitable positive constant.
	For variance swaps the error is not performing very well. 
	It is indeed very close to the upper bound $c\log(N)^{-H}$ that we have computed theoretically. 
	One possible theoretical motivation for this behaviour lies in the difference between \emph{strong} and \emph{weak} error rates.
	Weak error and strong error do not necessarily share the same order of convergence, being the weak error faster in general.
	See~\cite{Bayer_rates_1,Bayer_rates_2, Gassiat} for recent developments on the topic in the rough volatility framework.
	For pricing purposes, we are interested in weak error rates. 	
	Indeed, the pricing error should in principle have the following form $\mathbb{E}[f ( Z^H ) ]- \mathbb{E}[f( \widehat{Z}^H )]$, where $\widehat{Z}^H $ is the process that we are using to approximate the original  $Z^H $ and $f$ is a functional that comes from the payoff function and that we can interpret as a test function. Thus, the functional $f$ has a smoothing effect. 
	On the other hand, the upper bound for the quantization error we have computed is a strong error rate.
	This theoretical discrepancy motivates the findings in Figure~\ref{fig_log_log_plot_th_vs_em_err} when pricing VIX Futures and other options on the VIX:  the empirical error seems to converge with order $\mathcal{O}(\log(N)^{-1})$, while the predicted order is $\mathcal{O}(\log(N)^{-H})$.
	The different empirical rates that are seen in  Figure~\ref{fig_log_log_plot_th_vs_em_err}  for VIX futures (roughly $\mathcal O(\log(N)^{-1})$)) and in Figure~\ref{fig_price_swaps_real_var} for variance swaps (much closer to $\mathcal O(\log(N)^{-H})$) could be also related to the different degree of pathwise regularity of the processes $Z$ and $Z^T$ . While $t \rightarrow Z_t = \int_0^t K(t - s) dWs$ is a.s. $(H - \epsilon)$-H\"{o}lder, for fixed $T$, the trajectories $t \rightarrow Z_t^T =
\int_0^T K(t - s) dWs$ of $Z^T$ are much smoother when $t \in (T, T + \Delta) $ and $t$ is bounded away from $T$.
When pricing VIX derivatives, we are quantizing almost everywhere a smooth Gaussian process (hence error rate of order $\log(N)^{-1})$, while when pricing derivatives on realized variance, we are applying quantization to a rough Gaussian process (hence error rate of order $\mathcal O(\log(N)^{-H})$), resulting in a deteriorated accuracy for
the prices of realized volatility derivatives such as the variance swaps in Figure~\ref{fig_price_swaps_real_var}.

	Furthermore, it can be easily shown that, for any $\bold{d} \in \mathcal{D}_m^N$ and for any $m, N \in \mathbb{N}$, with $m < N$,  $\widehat{\mathfrak{S}}_T^{\bold{d}}$ always provides a lower bound for the true price ${\mathfrak{S}}_T$.
	Indeed, since the quantizers $\widehat{Z}^{H,\bold{d}}$ of the process $Z^H$ are stationary (cfr. Proposition~\ref{Prop_staz_quant}), an application  of Remark~\ref{rem_staz_quant} to the convex function $f(x)= \exp(2\nu C_H x )$ together with the positivity of $v_0(t)\exp(-\frac{\nu^2C_H^2 t^{2H}}{H})$, for any $t \in [0,T]$, yields
	\begin{align*}
		\widehat{\mathfrak{S}}_T^{\bold{d}} 
		& = \mathbb{E}\left[ \frac{1}{T}\int_0^T v_0(t)\exp\left(-\frac{\nu^2C_H^2 t^{2H}}{H}\right)\exp\left(2\nu C_H \widehat{Z}_T^{H, \bold{d}}\right) \D t \bigg| \mathcal{F}_0\right]\\
		& = \frac{1}{T}\int_0^T v_0(t)\exp\left(-\frac{\nu^2C_H^2 t^{2H}}{H}\right)\mathbb{E}_0\left[\exp\left(2\nu C_H \widehat{Z}_T^{H, \bold{d}}\right)\right] \D t\\
		& \leq \frac{1}{T}\int_0^T v_0(t)\exp\left(-\frac{\nu^2C_H^2 t^{2H}}{H}\right)\mathbb{E}_0\left[\exp\left(2\nu C_H {Z}_T^{H}\right)\right] \D t
		= \mathfrak{S}_T.
	\end{align*}


\begin{figure}[htbp]
	
	\begin{minipage}[t]{0.47\textwidth}
	\vspace{-4 cm}

	\begin{tabular}{ |p{4cm}|p{1.2cm}|  }
		\hline
		 True price &  $0.0548$\\
		\hline
		\hline
		 Quantization,  $N=10^2$ & $0.0230$\\
		\hline
		 Quantization,  $N=10^3$ & $0.0246$\\
		\hline 
		 Quantization,  $N=10^4$ & $0.0257$\\
		\hline  
		 Quantization,  $N=10^5$ & $0.0266$\\
		\hline  
		 Quantization,  $N=10^6$ & $0.0273$\\
		\hline 
	\end{tabular}
	
	\end{minipage}
	\begin{minipage}[b]{0.47\textwidth}
	\centering
	\includegraphics[scale=0.5]{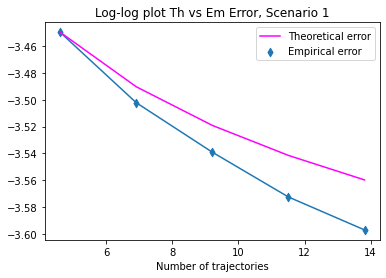}
	\end{minipage}
	\caption{Prices and errors for variance swaps.}
	\label{fig_price_swaps_real_var}
\end{figure}
	
	To complete this section, we plot  in Figure~\ref{fig_price_call_cm_real_var} approximated prices of European Call Options on the realized variance via quantization with $N \in \{10^2, 10^3,10^4,10^5, 10^6\}$ trajectories and via Monte Carlo with $M=10^6$ trajectories, as a benchmark. 
	In order to take advantage of the trajectories obtained, we compute the price of a realized variance Call option with strike $K$ and maturity $T=1$ as
$$    
\mathcal{C}(K,T)=\mathbb E\left[ \left(\frac{1}{T}\int_0^T \mathcal{V}_t \D t - K\right)_{+}\bigg| \mathcal F_0\right],
$$
and we approximate it via quantization through
$$
 \widehat{\mathcal{C}}^{\bold{d}}(K,T)= \sum_{\underline{i} \in I^d} \Bigg(\frac{1}{T}\int_0^T v_0(t) \exp \left(2\nu C_H \sum_{n=1}^m \mathcal{K}_H[\psi_n](t)x_{i_n}^{d(n)} - \frac{\nu^2 C_H^2}{H}t^{2H}\right) dt - K \Bigg)_{+} \prod_{n=1}^m \mathbb{P}(\xi_n \in C_{i_n}(\Gamma^{d(n)})).
$$	
	
	The three plots in Figure~\ref{fig_price_call_cm_real_var} display the behaviour of the price of a European Call on the realized variance as a function of the strike price $K$ (close to the ATM value) for the three scenarios considered before.
	
	\begin{figure}[htbp]
	\begin{minipage}[b]{0.47\textwidth}
	\centering
	\includegraphics[scale=0.5]{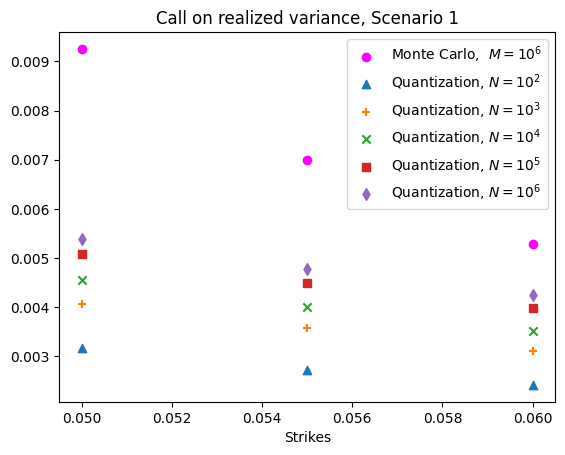}
	\end{minipage}
	\begin{minipage}[b]{0.47\textwidth}
	\centering
	\includegraphics[scale=0.5]{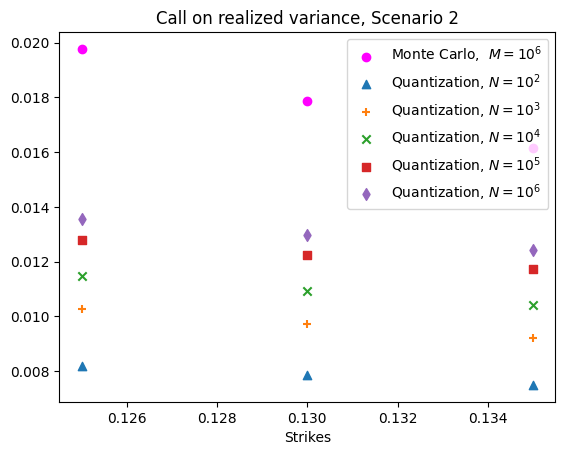}
	\end{minipage}
	\begin{minipage}[b]{0.47\textwidth}
	\centering
	\includegraphics[scale=0.5]{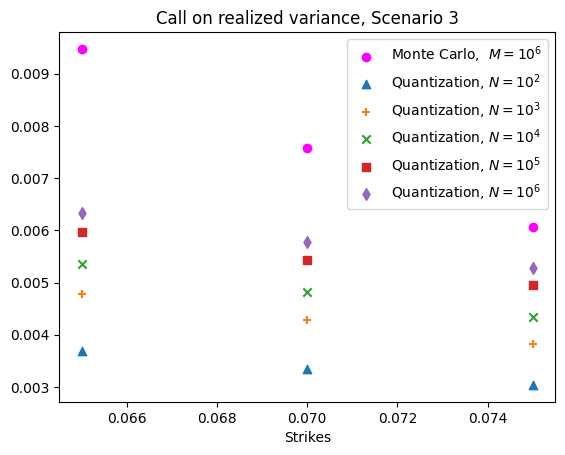}
	\end{minipage}
	\caption{Prices of European Call Option on realized variance computed via Monte Carlo with $M=10^6$ trajectories and via quantization with $N\in \{10^2, 10^3, 10^4, 10^5, 10^6\}$ trajectories, as a function of $K$.}
	\label{fig_price_call_cm_real_var}
	\end{figure}

\subsubsection{Quantization and MC comparison}

	In order to make a fair comparison between quantization and Monte Carlo simulations, we present a figure to display, for each methodology, the computational work needed for a given error tolerance for the pricing of VIX Futures.
	The plots in Figure~\ref{fig_true_err_comp} should be read as follows. 
	First, for any $M, N \in \{10^2, 10^3,10^4,10^5, 10^6\}$, we have computed the corresponding pricing errors: 
	$
	\varepsilon^{MC}(M):= | \textrm{Price}^{MC}(M) - \textrm{RefPrice}|$ and $  \varepsilon^{Q}(N) :=  | \textrm{Price}^{Q}(N) - \textrm{RefPrice}|
	$
	 where $\textrm{Price}^{MC}(M)$ is the Monte Carlo price obtained via truncated Cholesky with $M$ trajectories, $\textrm{Price}^{Q}(N) $ is the price computed via quantization with $N$ trajectories and $\textrm{RefPrice}$ comes from the lowerbound in Equation $(3.4)$ in \cite{Muguruza} and the associated computational time in seconds $t^{MC}(M)$ and $t^{Q}(N)$, respectively for Monte Carlo simulation and quantization.
Then, each point in the plot is associated either to a value of~$M$ in case of Monte Carlo (the circles in Figure~\ref{fig_true_err_comp}), or $N$ in case of quantization  (the triangles in Figure~\ref{fig_true_err_comp}), and its $x$-coordinate provides the absolute value of the associated pricing error, while its $y$-coordinate represents  the associated computational cost in seconds.
	
These plots lead to the following observations:
	\begin{itemize}
	\item For quantization, which is an exact method, the error is strictly monotone in the number of trajectories.
	
	\item When a small number of trajectories is considered, quantization provides a lower error with respect to Monte Carlo, at a comparable cost.
	
	\item For large numbers of trajectories Monte Carlo overcomes quantization both in terms of accuracy and of computational time.
	
	\end{itemize}
	To conclude, quantization can always be run with an arbitrary number of trajectories and furthermore for $N \in \{10^2, 10^3,10^4\}$ it leads to a lower error with respect to Monte Carlo, at a comparable computational cost, as it is visible from Figure \ref{fig_true_err_comp}. 
	This makes quantization particularly suitable to be used when dealing with standard machines, i.e., laptops with a RAM memory smaller or equal to $16$GB.

	\begin{figure}[htbp]
	\begin{minipage}[b]{0.47\textwidth}
	\centering
	\includegraphics[scale=0.5]{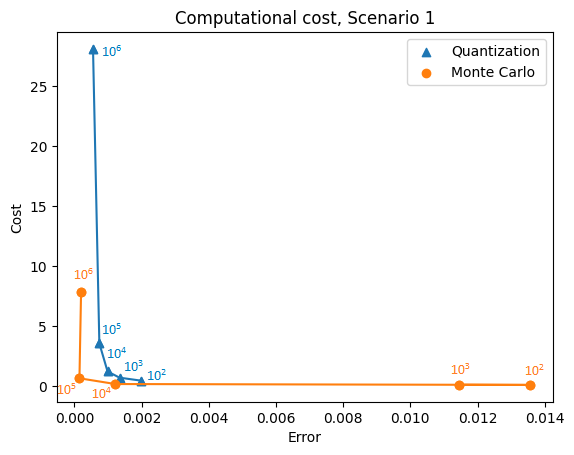}
	\end{minipage}
	\begin{minipage}[b]{0.47\textwidth}
	\centering
	\includegraphics[scale=0.5]{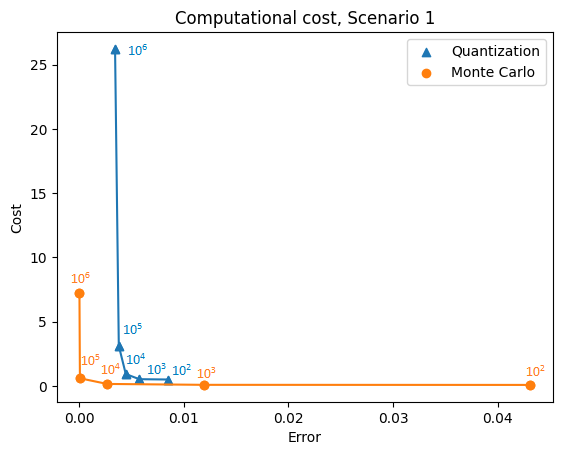}
	\end{minipage}
	\caption{Computational costs for quantization vs Monte Carlo for Scenario $1$, with $T=1$ month (left-hand side) and $T=12$ months (right-hand side). The number of trajectories, $M$ for Monte Carlo and $N$ for quantization, corresponding to a specific dot is displayed above it.}
	\label{fig_true_err_comp}
	\end{figure}


\section{Conclusion}
In this paper we provide, on the theoretical side, a precise and detailed result  on the convergence of product functional quantizers  of \emph{Gaussian Volterra processes}, showing that the $L^2$-error is of order $\log(N)^{-H}$, with~$N$ the number of trajectories and~$H$ the regularity index.

Furthermore, we explicitly characterize the rate optimal parameters, $m^*_N$ and $\bold{d}^*_N$, and we  compare them with the corresponding optimal parameters, $\overline{m}_N$ and $\overline{\bold{d}}_N$, computed numerically.

In the rough Bergomi model, we apply product functional quantization to the pricing of VIX options, with precise rates of convergence, and of options on realized variance, comparing those -- whenever possible -- to standard Monte Carlo methods.

The thorough numerical analysis carried out in the paper shows that unfortunately, despite the conceptual promise of functional quantization, while the results on the VIX are very promising, other types of path-dependent options seem to require machine resources way beyond the current requirements of standard Monte Carlo schemes, as shown precisely in the case of variance swaps.
While product functional quantization is an exact method, the analysis provided here does not however promise a bright future in the context of rough volatility. 
		It may nevertheless be of practical interest when machine resources are limited and indeed the results for VIX Futures pricing are strongly encouraging in this respect.
		Functional quantization for rough volatility can however be salvaged when used as a control variate tool to reduce the variance in classical Monte Carlo simulations.

\newpage

\appendix

\section{Proofs}

\subsection{Proof of Proposition~\ref{prop:choice_m_d}}\label{App_P_prop:choice_m_d}

Consider a fixed $N \geq 1$ and $(m,\dd)$ for $\dd\in\cD_m^N$. We have
\begin{align}\label{eq_for_m_d_app}
\mathbb{E}\left[\left\|Z - \widehat{Z}^{\dd} \right\|_{L^2[0,1]}^2\right]
&  =\mathbb{E}\left[\left\|\sum_{n\geq 1} \mathcal{K}[\psi_n](\cdot) \xi_n- \sum_{n=1}^{m} \mathcal{K}[\psi_n](\cdot)\widehat{\xi}_n^{d(n)} \right\|_{L^2[0,1]}^2 \right]\nonumber\\
& =\mathbb{E}\left[\left\|\sum_{n= 1}^m \mathcal{K}[\psi_n](\cdot) (\xi_n-\widehat{\xi}_n^{d(n)})+\sum_{k \geq m+1} \mathcal{K}[\psi_k](\cdot)\xi_k \right\|_{L^2[0,1]}^2 \right]\nonumber\\
& =\mathbb{E}\left[ \int_0^1\left| \sum_{n= 1}^m \mathcal{K}[\psi_n](t) (\xi_n-\widehat{\xi}_n^{d(n)})+\sum_{k \geq m+1} \mathcal{K}[\psi_k](t)\xi_k \right|^2 dt\right]\nonumber\\
& = \int_0^1 \left(\sum_{n= 1}^m \mathcal{K}[\psi_n]^2(t) \mathbb{E}\left[|\xi_n-\widehat{\xi}_n^{d(n)}|^2\right] +\sum_{k \geq m+1} \mathcal{K}[\psi_k]^2(t)\right) dt\nonumber \\
& = \int_0^1 \left( \sum_{n= 1}^m \mathcal{K}[\psi_n]^2(t) \err^{d(n)}( \xi_n )^2
			+\sum_{k \geq m+1} \mathcal{K}[\psi_k]^2(t) \right)  dt,
\end{align}
using Fubini's Theorem and the fact that $\{\xi_n\}_{n\geq 1}$ is a sequence of i.i.d. Gaussian and where
$\err^{d(n)}(\xi_n):= \inf_{(\alpha_1, \dots, \alpha_{d(n)}) \in \R^{d(n)}} \sqrt{\mathbb{E}[\min_{1 \leq i \leq d(n)} | \xi_n - \alpha_i|^2]}$.  
The Extended Pierce Lemma~\cite[Theorem 1(b)]{Pages}
ensures that $\err^{d(n)}( \xi_n  )\leq  \frac{L}{d(n)}$ for a suitable positive constant~$L$.
Exploiting this error bound and the property \textbf{(B)} for $\mathcal{K}[\psi_n]$ in Assumption~\ref{AssumptionAB}, we obtain
\begin{align}\label{est}
\mathbb{E}\left[ \|Z-\widehat{Z}^{\dd}\|_{L^2[0,1]}^2 \right]
& =\sum_{n=1}^m \left(\int_0^1 \mathcal{K}[\psi_n]^2(t) dt \right) \err^{d(n)}( \xi_n )^2
 + \sum_{k \geq m+1}\int_0^1 \mathcal{K}[\psi_k]^2(t) dt\\
& \leq C_2^2\left\{\sum_{n=1}^m n^{-(2H+1)} \err^{d(n)}( \xi_n )^2 +\sum_{k \geq m+1} k^{-(2H+1)}\right\}\\
& \leq C_2^2 \left\{\sum_{n=1}^m n^{-(2H+1)} \frac{L^2}{d(n)^2} + \sum_{k \geq m+1} k^{-(2H+1)}\right\}\\
& \leq \widetilde{C} \left( \sum_{n=1}^m  \frac{1}{n^{2H+1}d(n)^2} +  \sum_{k \geq m+1} k^{-(2H+1)}\right),
\end{align}
with $\widetilde{C} = \max \{L^2 C_2^2 , C_2^2  \}$.
	Inspired by ~\cite[Section 4.1]{Pages29}, we now look for an ``optimal'' choice of~$m \in \NN$
and $\dd\in \cD_m^N$. 
This reduces the error in approximating~$Z$ with a product quantization of the form in~\eqref{p-quant}.
Define the optimal product functional quantization ~$\widehat{Z}^{N,\star}$ of order~$N$
as the $\widehat{Z}^{\dd}$ which realizes the minimal error:
$$
\mathbb{E}\left[\left\|Z-\widehat{Z}^{N,\star}\right\|_{L^2[0,1]}^2 \right]
=\min \left\{\mathbb{E}\left[\left\|Z-\widehat{Z}^{\dd}\right\|_{L^2[0,1]}^2 \right], m \in \NN, \dd\in\cD_m^N\right\}.
$$
From~\eqref{est} we deduce
\begin{equation}\label{firstes}
\mathbb{E}\left[\left\|Z-\widehat{Z}^{N,\star}\right\|_{L^2[0,1]}^2 \right]
\leq \widetilde C \inf_{m \in \NN} \left\{ \sum_{k\geq m+1} \frac{1}{k^{2H+1}}
 + \inf \left\{\sum_{n=1}^m  \frac{1}{n^{2H+1}d(n)^2}, \dd\in\cD_m^N\right\} \right\}.
\end{equation}
	For any fixed $m \in \NN$ we associate to the internal minimization problem the one we get by relaxing the hypothesis that
	$d(n ) \in \NN$:
	\begin{equation}
\mathfrak{I} := \inf \Big\{  \sum_{n=1}^m  \frac{1}{n^{2H+1}z(n)^2}, \{z(n)\}_{n=1,\dots,m} \in (0, \infty): \prod_{n=1}^m z(n) \leq N\Big\}.
	\end{equation} 
For this infimum, we derive a simple solution exploiting the arithmetic-geometric inequality
using Lemma~\ref{lem_arithm_geom}. 
Setting $\widetilde{z}(n):= \gamma_{N,m}n^{-(H+\half)}$, with $\gamma_{N,m}:= N^{\frac{1}{m}} \Big( \prod_{j=1}^m j^{-(2H+1)}\Big)^{-\frac{1}{2m}},$ $ n=1, \dots, m$, we get
$$
\mathfrak{I} = \sum_{n=1}^m  \frac{1}{n^{2H+1}\widetilde{z}(n)^2}
			= N^{-\frac{2}{m}}m \Big( \prod_{n=1}^m n^{-(2H+1)}\Big)^{\frac{1}{m}},
$$
and notice that the sequence $\{\widetilde{z}(n)\}$ is decreasing.
Since ultimately the vector $\dd$ consists of integers, 
we use $\widetilde{d}(n)=\lfloor \widetilde{z}(n) \rfloor$, $n =1, \dots, m$. 
In fact, this choice guarantees that
$$
\prod_{n=1}^m \widetilde{d}(n) =\prod_{n=1}^m \lfloor \widetilde{z}(n)\rfloor  \leq \prod_{n=1}^m \widetilde{z}(n)=N.
$$	
Furthermore, setting $\widetilde{d}(j)=\lfloor \widetilde{z}(j)\rfloor $
for each $j\in \{1,\dots,m\}$, we obtain
$$
\frac{\widetilde{d}(j)+1}{(j^{-(2H+1)})^{\half}}
=j^{H+\frac{1}{2}}(\lfloor \widetilde{z}(j)\rfloor +1)
			\geq j^{H+\frac{1}{2}}\widetilde{z}(j)
			=\frac{j^{H+\half} N^{\frac{1}{m}}}{j^{H+\half}}\left\{\prod_{n=1}^m \frac{1}{n^{2H+1}}\right\}^{-\frac{1}{2m}}
			=N^{\frac{1}{m}} \left\{ \prod_{n=1}^m \frac{1}{n^{2H+1}}\right\}^{-\frac{1}{2m}}.
$$
Ordering the terms, we have
	$(\widetilde{d}(j)+1)^2 N^{-\frac{2}{m}} \Big( \prod_{n=1}^m n^{-(2H+1)}\Big)^{\frac{1}{m}}\geq j^{-(2H+1)}$,
for each $j \in \{ 1, \dots,m\}$.
From this we deduce the following inequality (notice that the left-hand side term is defined only if $\widetilde{d}(1),\dots,\widetilde{d}(m)>0$):
	\begin{align}\label{secondes}
		\sum_{j=1}^m j^{-(2H+1)}\widetilde{d}(j)^{-2} &
		\leq  \sum_{j=1}^m  \Big(\frac{\widetilde{d}(j)+1}{\widetilde{d}(j)}\Big)^2 N^{-\frac{2}{m}} \Big( \prod_{n=1}^m n^{-(2H+1)}\Big)^{\frac{1}{m}}\\&
		=  N^{-\frac{2}{m}} \Big( \prod_{n=1}^m n^{-(2H+1)}\Big)^{\frac{1}{m}} \sum_{j=1}^m  \Big(\frac{\widetilde{d}(j)+1}{\widetilde{d}(j)}\Big)^2\\&
		\leq  4m N^{-\frac{2}{m}} \Big( \prod_{n=1}^m n^{-(2H+1)}\Big)^{\frac{1}{m}}.
	\end{align}
Hence, we are able to make a first error estimation, placing in the internal minimization of the right-hand side of~\eqref{firstes} the result of inequality in~\eqref{secondes}.
{\small	
	\begin{align}\label{thirdes}
		\mathbb{E}\left[\left\|Z-\widehat{Z}^{N,\star}\right\|_{L^2[0,1]}^2 \right]
		& \leq  \widetilde{C} \inf \left\{ \sum_{k\geq m+1} \frac{1}{k^{2H+1}} + 4m N^{-\frac{2}{m}} \left( \prod_{n=1}^m n^{-(2H+1)}\right)^{\frac{1}{m}}, m \in I(N) \right\}\\
		& \leq C' \inf \left\{ \sum_{k\geq m+1} \frac{1}{k^{2H+1}} + m N^{-\frac{2}{m}} \left(\prod_{n=1}^m n^{-(2H+1)}\right)^{\frac{1}{m}}, m \in I(N) \right\},
	\end{align}
	}
	where $C'=4\widetilde{C}$ and  the set
\begin{equation}\label{eq_def_I(N)}
I(N) :=\{ m \in \NN: N^{\frac{2}{m}} m^{-(2H+1)}\Big( \prod_{n=1}^m n^{-(2H+1)}\Big)^{-\frac{1}{m}}\geq 1 \},
\end{equation}
which represents all $m$'s such that all $\widetilde{d}(1), \dots, \widetilde{d}(m)$ are positive integers. This is to avoid the case where $\prod_{i=1}^m \widetilde{d}(i)\leq N$ holds only because one of the factors is zero.
	In fact, for all $ n \in\{1, \dots,m \}$, $\widetilde{d}(n)=\lfloor \widetilde{z}(n)\rfloor$ is a positive integer if and only  if $\widetilde{z}(n)\geq1 $.	
	 Thanks to  the monotonicity of $\{z(n)\}_{n=1,\dots,m}$,  we only need to check that 
	$$
		\widetilde{z}(m) =  N^{\frac{1}{m}} m^{-(H+\frac{1}{2})}\Big( \prod_{n=1}^m n^{-(2H+1)}\Big)^{-\frac{1}{2m}}\geq 1.
	$$
	First, let us show that $I(N)$, defined in~\eqref{eq_def_I(N)} for each $N\geq1$, is a non-empty finite set with maximum given by $m^*(N)$ of order $\log(N)$.
	We can rewrite it as
	$I(N)=\{ m\geq 1: a_m \leq \log(N) \}$,
	where 
	\begin{equation}
		a_n
		= \frac{1}{2}\log \left(  \prod_{j=1}^n  \frac{n^{2H+1}}{j^{2H+1}} \right).
	\end{equation}
	
	We can now verify that the sequence $a_n$ is increasing in $n\in \NN$: 
	\begin{eqnarray*}
			& & a_n \leq a_{n+1}\\
			&\Longleftrightarrow     & \sum_{j=1}^n \log\left(j^{-(2H+1)}\right)-n\log\left(n^{-(2H+1)}\right) \leq  
\sum_{j=1}^{n+1} \log\left(j^{-(2H+1)}\right) -(n+1)\log\left((n+1)^{-(2H+1)}\right)\\
			&\Longleftrightarrow   & -n\log\left(n^{-(2H+1)}\right) \leq \log\left((n+1)^{-(2H+1)}\right)
 - (n+1)\log\left((n+1)^{-(2H+1)}\right)\\
		    &\Longleftrightarrow   & \log\left(n^{-(2H+1)}\right)
\geq \log\left((n+1)^{-(2H+1)}\right),
	\end{eqnarray*}
which is obviously true. Furthermore the sequence $(a_n)_{n}$ diverges to infinity since
$$
\prod_{j=1}^n \frac{n^{(2H+1)}}{j^{(2H+1)}} 
			= {n^{(2H+1)n}} \prod_{j=1}^n \frac{1}{j^{(2H+1)}} 
			\geq {n^{(2H+1)n}} \prod_{j=2}^n \frac{1}{j^{(2H+1)}} 
			\geq {n^{(2H+1)n}} \frac{1}{n^{(2H+1)(n-1)}}
			\geq {n^{(2H+1)}}.
$$
and $H \in (0, \half)$.
We immediately deduce that $I(N)$ is finite and, since $\{1\}\subset I(N)$, it is also non-empty. 
Hence 	$I(N)=\{1, \dots, m^*(N)\}$. 
Moreover, for all $N \geq 1$, $a_{m^*(N)}\leq \log(N) < a_{m^*(N)+1}$, 
which implies that $m^*(N) = \mathcal{O}(\log(N))$.
	\\
	Now, the error estimation in \eqref{thirdes} can  be further simplified exploiting the fact that, for each $ m \in I(N)$, 
$$
m N^{-\frac{2}{m}} \left( \prod_{n=1}^m n^{-(2H+1)}\right)^{\frac{1}{m}}
= m m^{-(2H+1)}\left(m^{-(2H+1)} N^{\frac{2}{m}} \left( \prod_{n=1}^m n^{-(2H+1)}\right)^{-\frac{1}{m}}\right)^{-1}
\leq m^{-2H}.
$$
The last inequality is a consequence of the fact that $\left( \prod_{n=1}^m n^{-(2H+1)}\right)^{-\frac{1}{m}}\geq 1$ by definition.
	Hence, 
	\begin{equation}\label{fourthes}
			\mathbb{E}\Big[ \|Z-\widehat{Z}^{N,\star}\|_{L^2[0,1]}^2 \Big]
			\leq {C'}\inf \Bigg\{ \sum_{k\geq m+1} \frac{1}{k^{2H+1}}+ m^{-2H} , m \in I(N) \Bigg\},
	\end{equation}
	for some suitable constant $C'>0$.

Consider now the sequence $\{b_n\}_{n \in \NN}$, given by $b_n=\sum_{k\geq n+1} \frac{1}{k^{2H+1}}+ n^{-2H}$. 
For $n\geq 1$,
$$
b_{n+1} - b_n 
= \sum_{k\geq n+2} \frac{1}{k^{2H+1}}+ \frac{1}{(n+1)^{2H}} -\left[\sum_{k\geq n+1} \frac{1}{k^{2H+1}}+ \frac{1}{n^{2H}}\right]
 = - \frac{1}{(n+1)^{2H}} + \frac{1}{(n+1)^{2H+1}}- \frac{1}{n^{2H}}\leq 0,
$$
so that the sequence is decreasing and the infimum in~\eqref{fourthes} is attained at $m=m^*(N)$. 
Therefore,
	\begin{eqnarray*}
			\mathbb{E}\left[ \|Z-\widehat{Z}^{N,\star}\|_{L^2[0,1]}^2 \right]&
			\leq & {C'}\inf \left\{ \sum_{k\geq m+1} \frac{1}{k^{2H+1}}+ m^{-2H} , m \in I(N) \right\}\\
			& = & C'\left( \sum_{k\geq m^*(N)+1} \frac{1}{k^{2H+1}}+ m^*(N)^{-2H} \right)
			 \leq C'\left({m^*(N)^{-2H-1+1}}+ m^*(N)^{-2H} \right)\\
			& = & 2C' m^*(N)^{-2H} \leq C \log(N)^{-2H}.
	\end{eqnarray*}

\subsection{Proof of Remark~\ref{Cont_v_Y}}\label{App_P_Cont_v_Y}
	This can be proved specializing the computations done in~\cite[page 656]{Sharp}. 
	Consider an arbitrary index $n \geq 1.$ For all $t,s \in[0,1]$, exploiting Assumption~\ref{AssumptionAB}, we have that, for any $\rho \in [0,1]$,
	\begin{eqnarray*}
			\left|\mathcal{K}[\psi_n](t)-\mathcal{K}[\psi_n](s)\right|&
			= &\big|\mathcal{K}[\psi_n](t)-\mathcal{K}[\psi_n](s)\big|^\rho\big|\mathcal{K}[\psi_n](t)-\mathcal{K}[\psi_n](s)\big|^{1-\rho}\\&
			\leq & \left( \sup_{u,v \in [0,1], u\neq v }\frac{| \mathcal{K}[\psi_n] (u)- \mathcal{K}[\psi_n] (v) |}{|u-v|^{H+\frac{1}{2}}} |t-s|^{H+\frac{1}{2}} \right)^\rho \left( 2 \sup_{t \in [0,1]}\mathcal{K}[\psi_n](t)\right)^{1-\rho}\\&
			\leq & (C_1n)^\rho(2 C_2 n^{-(H+\frac{1}{2})})^{1-\rho}|t-s|^{\rho(H+\frac{1}{2})}
			=  C_\rho n^{\rho (H+\frac{3}{2})-(H+\frac{1}{2})}|t-s|^{\rho(H+\frac{1}{2})}, 
	\end{eqnarray*}
	where $C_{\rho}:=C_1^\rho(2  C_2)^{1-\rho}<\infty.$ 
	Therefore
	\begin{equation}\label{sop}
		 \left[\mathcal{K}[\psi_n]\right]_{\rho(H+\frac{1}{2})} 
		 = \sup_{t\neq s \in [0,1]}\frac{\left|\mathcal{K}[\psi_n](t)-\mathcal{K}[\psi_n](s)\right|}{|t-s|^{\rho(H+\frac{1}{2})}} 
		 \leq C_\rho n^{\rho (H+\frac{3}{2})-(H+\frac{1}{2})}.
	\end{equation}
	Notice that $\rho (H+\frac{3}{2})-(H+\frac{1}{2}) < -\frac{1}{2}$
when $\rho \in [0,\frac{H}{H+3/2}]$ so that~\eqref{sop} implies
	\begin{equation}
		\sum_{n=1}^\infty  \left[\mathcal{K}[\psi_n]\right]_{\rho(H+\frac{1}{2})} ^2 
		\leq C_\rho^2 \sum_{n=1}^\infty  n^{2\rho (H+\frac{3}{2})-2(H+\frac{1}{2})}
		\leq C_\rho^2 \sum_{n=1}^\infty  n^{-(1+\varepsilon)}
		=K
		< \infty.
	\end{equation}
	In particular, 
$$
\mathbb{E}\left[|Y_t-Y_s|^2\right]
			=\sum_{n=1}^\infty\left|\mathcal{K}[\psi_n](t)-\mathcal{K}[\psi_n](s)\right|^2
			\leq \sum_{n=1}^\infty  \left[\mathcal{K}[\psi_n]\right]_{\rho(H+\frac{1}{2})} ^2 |t-s|^{2\rho(H+\frac{1}{2})}
\leq K |t-s|^{2\rho(H+\frac{1}{2})}.
$$
	\\
As noticed in Remark~\ref{gau} the process $Y$ is centered Gaussian. 
Hence, for each $t, s \in [0,1]$ so is $Y_t-Y_s$. 
Proposition~\ref{Prop_mom_gauss} therefore implies that, for any $r \in \NN$,			
	\begin{equation}
		\mathbb{E}\left[|Y_t-Y_s|^{2r}\right]
		=\mathbb{E}\left[|Y_t-Y_s|^2\right]^{r}(2r-1)!!
		\leq K' |t-s|^{2r\rho(H+\frac{1}{2})},
	\end{equation}
	where $K'= K^{r}(2r-1)!! $, 
yielding existence of a continuous version of~$Y$ since choosing $r\in \NN$ such that $ 2r\rho(H+\frac{1}{2}) >1$, Kolmogorov continuity theorem~\cite[Theorem 3.23]{Kallenberg} applies directly.

\subsection{Proof of Lemma~\ref{lem:Representation}}
\label{App_P_lem:Representation}
Let $\hp:=H+\half$.
Using~\cite[Corollary 1, Equation (12)]{Karp} (with $\psi = b_2 + b_1-a>1/2$), the identity
	$$
	{}_{1}{F}_{2}(a, b_1,b_2, -r)
	 = \frac{\Gamma(b_1)\Gamma(b_2)}{\Gamma(a)\sqrt{\pi}}\int_{0}^{1} G_{2,2}^{2,0}\left([b_1,b_2], \left[a, \half\right], u\right)\cos\left(2\sqrt{ru}\right)\frac{du}{u},
	$$
holds for all $r>0$, where~$G$ denotes the Meijer-G function, generally defined through the so-called Mellin-Barnes type integral~\cite[Equation (1), Section 5.2]{Luke}) as
	\begin{equation}
		G_{p,q}^{\,m,n}\!\left([a_{1},\dots ,a_{p}],[b_{1},\dots ,b_{q}],z\right)
		={\frac {1}{2\pi i}}\int _{L}{\frac {\prod _{j=1}^{m}\Gamma (b_{j}-s)\prod _{j=1}^{n}\Gamma (1-a_{j}+s)}{\prod _{j=m+1}^{q}\Gamma (1-b_{j}+s)\prod _{j=n+1}^{p}\Gamma (a_{j}-s)}}\,z^{s}\,ds.
	\end{equation}
	This representation holds if $z \neq 0$, $0 \leq m \leq q$ and $0 \leq n \leq p$, 
for integers $m, n, p, q$, and $a_k -  b_j \neq 1, 2, 3, \dots$, for $k = 1, 2,\dots, n$ and $j = 1, 2, \dots, m$. 
The last constraint is set to prevent any pole of any $\Gamma(b_j- s), j = 1, 2, \dots, m,$ from  coinciding with any pole of any $\Gamma (1 -a_k + s), k = 1, 2, \dots, n$.
	With $a>0$, $b_2 = 1+a$ and $b_1=\half$,
since $G_{2,2}^{2,0}\left(\left[\half,a+1\right], \left[a, \half\right], u\right) = u^a$, we can therefore write
	\begin{equation}\label{eq:1F2Cos}
		\int_{0}^{1}u^{a-1} \cos\left(2\sqrt{ru}\right)du
		 = \frac{1}{a}{}_{1}{F}_{2}\left(a;\half,a+1; -r\right).
	\end{equation}
Similarly, using integration by parts and properties of generalised Hypergeometric functions,
	\begin{align}\label{eq:1F2Sin}
			\int_{0}^{1}u^{a-1} \sin\left(2\sqrt{ru}\right)du &
			=  \frac{\sin(2\sqrt{r})}{a}-\frac{\sqrt{r}}{a}\int_0^1u^{a-\frac{1}{2}}\cos(2 \sqrt{ru})du\\&
			=  \frac{\sin(2\sqrt{r})}{a}-\frac{\sqrt{r}}{a(a+\half)}{}_{1}{F}_{2}\left(a+\half;\half, a+\frac{3}{2}; -r\right)\\&
			=  \frac{2\sqrt{r}}{a+\half}{}_{1}{F}_{2}\left(a+\half;\frac{3}{2},a+\frac{3}{2}; -r\right),
	\end{align}
	where the last step follows from the definition of generalized sine function $\sin(z)= z \, {}_0 F_{1}(\frac{3}{2},-\frac{1}{4}z^2)$.
Indeed, exploiting~\eqref{eq:Fpq}, we have
	\begin{eqnarray*}
			\frac{\sin(2\sqrt{r})}{a}& - &\frac{\sqrt{r}}{a(a+\half)}{}_{1}{F}_{2}\left(a+\half;\half, a+\frac{3}{2}; -r\right)\\&
			= & \frac{2\sqrt{r}}{a} {}_0 F_{1}\left(\frac{3}{2},-r\right)-\frac{\sqrt{r}}{a(a+\half)}{}_{1}{F}_{2}\left(a+\half;\half, a+\frac{3}{2}; -r\right)\\&
			= & \frac{2\sqrt{r}}{a\left(a+\frac{1}{2}\right)}\left[ \left(a+ \frac{1}{2}\right){}_0 F_{1}\left(\frac{3}{2};-r\right)-\frac{1}{2}{}_{1}{F}_{2}\left(a+\half;\half, a+\frac{3}{2}; -r\right)\right]\\&
			= &\frac{2\sqrt{r}}{a\left(a+\frac{1}{2}\right)}\left[ \left(a+ \frac{1}{2}\right) \sum_{k=0}^{\infty} \frac{(-r)^k}{k!(3/2)_k} - \frac{1}{2} \sum_{k=0}^{\infty}\frac{(a+1/2)_k}{k!(1/2)_k (a+3/2)_k}(-r)^k\right]\\&
			= & \frac{2\sqrt{r}}{a\left(a+\frac{1}{2}\right)}\sum_{k=0}^\infty \frac{1}{k!}\left[  \frac{(a+1/2)}{(3/2)_k}-\frac{1/2(a+1/2)_k}{(1/2)_k (a+3/2)_k}\right](-r)^k\\&
			= & \frac{2\sqrt{r}}{a\left(a+\frac{1}{2}\right)}\sum_{k=0}^\infty \frac{1}{k!}\left[  \frac{a(a+1/2)_k}{(3/2)_k (a+3/2)_k}\right](-r)^k\\&
			= & \frac{2\sqrt{r}}{\left(a+\frac{1}{2}\right)}\sum_{k=0}^\infty \frac{1}{k!} \frac{(a+1/2)_k}{(3/2)_k (a+3/2)_k}(-r)^k
			=  \frac{2\sqrt{r}}{\left(a+\frac{1}{2}\right)}{}_1F_2\left( a+\half;\frac{3}{2},a+\frac{3}{2}; -r\right).
	\end{eqnarray*}
Letting $\alpha:=H-\half$, $\tau:=t-T$, and mapping $v := t-u$, $w:=\frac{v}{t}$ and $y:=w^2$,  we write
	\begin{align}\label{eq:MainIntegral}
		\int_{0}^{T}(t-u)^{\alpha}\E^{\I\pi u}du &
		= \E^{\I\pi t}\int_{(t-T)}^{t}v^{\alpha}\E^{-\I\pi v}dv
		= \E^{\I\pi t}\left[\int_{0}^{t}v^{\alpha}\E^{-\I\pi v}dv - \int_{0}^{\tau}v^{\alpha}\E^{-\I\pi v}dv\right]\nonumber\\&
		= \E^{\I\pi t}\left[t^{1+\alpha} \int_{0}^{1}w^{\alpha}\E^{-\I\pi wt} dw
 - \tau^{1+\alpha}\int_{0}^{1}w^{\alpha}\E^{-\I\pi w\tau} dw\right]\nonumber\\& 
= \frac{\E^{\I\pi t}}{2}\left[t^{1+\alpha}\int_{0}^{1}y^{\frac{\alpha-1}{2}}\E^{-\I\pi t\sqrt{y}} dy - \tau^{1+\alpha}\int_{0}^{1}y^{\frac{\alpha-1}{2}}\E^{-\I\pi y\tau \sqrt{y}}dy\right]\nonumber\\&  
		= \frac{\E^{\I\pi t}}{2}\left[I(t) - I(\tau)	\right],
	\end{align}
	where $I(z):= z^{1+\alpha}\int_{0}^{1}v^{\frac{\alpha-1}{2}}\E^{-\I\pi z \sqrt{v}}dv$.\\
	We  therefore write, for $z\in\{t, \tau\}$,
using~\eqref{eq:1F2Cos}-\eqref{eq:1F2Sin}, $\pi z = 2\sqrt{r}$, and identifying $a-1 = \frac{\alpha-1}{2}$, 
	\begin{align*}
		I(z) & =  z^{1+\alpha}\int_{0}^{1}v^{\frac{\alpha-1}{2}}\E^{-\I\pi z\sqrt{v}} dv
		= z^{1+\alpha}\int_{0}^{1}v^{\frac{\alpha-1}{2}}\cos(\pi z\sqrt{v})dv - \I z^{1+\alpha}\int_{0}^{1}v^{\frac{\alpha-1}{2}}\sin(\pi z\sqrt{v})dv\\&  
		= \frac{2 z^{1+\alpha}}{\hp}{}_{1}{F}_{2}\left(\frac{\hp}{2};\half,1+\frac{\hp}{2}; -r\right) - \I z^{\hp} \frac{4\sqrt{r}}{1+\hp}{}_{1}{F}_{2}\left(\half+\frac{\hp}{2}; \frac{3}{2},\frac{3}{2}+\frac{\hp}{2}; -r\right)\\&  
		= \frac{ z^{\hp}}{h_1}{}_{1}{F}_{2}\left(h_1; \half,1+h_1; -\frac{\pi^2 z^2}{4}\right) - \I \frac{\pi z^{1+\hp}}{h_2}{}_{1}{F}_{2}\left(h_2; \frac{3}{2},1+h_2; -\frac{\pi^2 z^2}{4}\right),
	\end{align*}
	since $\alpha = H-\half = \hp - 1$, $h_1=\frac{\hp}{2}$ and $h_2=\half + h_1$.
	Plugging these into~\eqref{eq:MainIntegral}, we obtain
	{\small\begin{align*}
		\int_{0}^{T}(t-u)^{\alpha}\E^{\I\pi u} du & = 			\frac{\E^{\I\pi t}}{2}\left[I(t) - I(\tau)\right]\\& 
		= \frac{\E^{\I\pi t}}{2}\Bigg[\frac{ z^{\hp}}{h_1}{}_{1}{F}_{2}\left(h_1;\half,1+h_1; -\frac{\pi^2 z^2}{4}\right) - \I \frac{\pi z^{1+\hp}}{h_2}{}_{1}{F}_{2}\left(h_2;\frac{3}{2},1+h_2; -\frac{\pi^2 z^2}{4}\right)\Bigg]_{z=t}\\&
		\quad - \frac{\E^{\I\pi t}}{2}\left[\frac{ z^{\hp}}{h_1}{}_{1}{F}_{2}\left(h_1;\half,1+h_1; -\frac{\pi^2 z^2}{4}\right)- \I \frac{\pi z^{1+\hp}}{h_2}{}_{1}{F}_{2}\left(h_2;\frac{3}{2},1+h_2; -\frac{\pi^2 z^2}{4}\right)\right]_{z=\tau}\\& 
		= \frac{\E^{\I\pi t}}{2h_1}\left[(t)^{\hp}{}_{1}{F}_{2}\left(h_1;\half,1+h_1; -\frac{\pi^2 t^2}{4}\right) - (\tau)^{\hp}{}_{1}{F}_{2}\left(h_1; \half,1+h_1; -\frac{\pi^2 \tau^2}{4}\right)\right]\\&
		\quad -\I\frac{\pi\E^{\I\pi t}}{2h_2}\left[(t)^{1+\hp}{}_{1}{F}_{2}\left(h_2;\frac{3}{2},1+h_2; -\frac{\pi^2  t^2}{4}\right)- (\tau)^{1+\hp}{}_{1}{F}_{2}\left(h_2; \frac{3}{2},1+h_2; -\frac{\pi^2 \tau^2}{4}\right)\right]\\& 
		= {\E^{\I\pi t}}\left[\zeta_{\frac{1}{2}}(t, h_1) - \zeta_{\frac{1}{2}}(\tau, h_1)- \I\pi \left(\zeta_{\frac{3}{2}}(t, h_2) - \zeta_{\frac{3}{2}}(\tau, h_2)\right)\right],
	\end{align*}}
where $\chi(z) := -\frac{1}{4}\pi^2 z^2$ and~$\zeta_{\frac{1}{2}}$ and~$\zeta_{\frac{3}{2}}$ as defined in the lemma.

\subsection{Proof of Lemma~\ref{Lemma_V_H}}\label{App_P_Lemma_V_H}
We first prove~\textbf{(A)}. For each $n \in \NN$ and all $t \in [T,1]$, recall that 
	\[
		\mathcal{K}_H^T[\psi_n](t)
		=\sqrt{2}\int_0^T (t-u)^{H-\frac{1}{2}} \cos\left(\frac{u}{\sqrt{\lambda_n}}\right)du
 =\sqrt{2}\int_{t-T}^t v^{H-\frac{1}{2}} \cos\left(\frac{t-v}{\sqrt{\lambda_n}}\right)dv,
	\]
with the change  of variables $v=t-u$.
Assume ${T} \leq s < t \leq 1$. Two situations are possible:
	\begin{itemize}
		\item If $0 \leq s-T < {t-T} \leq s < t \leq 1$, we have
		\begin{eqnarray*}
			\left| \mathcal{K}_H^T[\psi_n](t)-\mathcal{K}_H^T[\psi_n](s) \right| &
			= & \sqrt{2}\left| \int_{t-T}^t v^{H-\frac{1}{2}} 	\cos\left(\frac{t-v}{\sqrt{\lambda_n}}\right)dv-\int_{s-T}^s v^{H-\frac{1}{2}} \cos\left(\frac{s-v}{\sqrt{\lambda_n}}\right)dv\right| \\
			& \leq & \sqrt{2}\Bigg(\left| \int_{t-T}^s v^{H-\frac{1}{2}} \left(\cos\left(\frac{t-v}{\sqrt{\lambda_n}}\right)-\cos\left(\frac{s-v}{\sqrt{\lambda_n}}\right)\right)dv\right| \\
			& \qquad & + \left| \int_{s}^t v^{H-\frac{1}{2}} \cos\left(\frac{t-v}{\sqrt{\lambda_n}}\right)dv\right| 
+ \left| \int_{s-T}^{t-T} v^{H-\frac{1}{2}} \cos\left(\frac{s-v}{\sqrt{\lambda_n}}\right)dv\right| \Bigg) \\
			& \leq & \sqrt{2}\Bigg( \int_{t-T}^s v^{H-\frac{1}{2}} \left|\cos\left(\frac{t-v}{\sqrt{\lambda_n}}\right)- \cos\left(\frac{s-v}{\sqrt{\lambda_n}}\right)\right|dv\\
			& \qquad & + \int_{s}^t v^{H-\frac{1}{2}} dv + \int_{s-T}^{t-T} v^{H-\frac{1}{2}} dv \Bigg) \\
			& \leq &  \sqrt{2}\Bigg( \int_{t-T}^s v^{H-\frac{1}{2}} \left|\frac{t-s}{\sqrt{\lambda_n}}\right|dv + K |t-s|^{H+\frac{1}{2}} + K |t-s|^{H+\frac{1}{2}}\Bigg) \\
			& \leq & \sqrt{2}\Bigg(\frac{|t-s|}{\sqrt{\lambda_n}} \int_{t-T}^s v^{H-\frac{1}{2}} dv +2 K |t-s|^{H+\frac{1}{2}} \Bigg) \\
			& \leq & \sqrt{2}\Bigg(\frac{|t-s|}{\sqrt{\lambda_n}} \| (\cdot)^{H-\frac{1}{2}}\|_{L^1[0,1]} +2 K |t-s|^{H+\frac{1}{2}} \Bigg) 
			\leq \widetilde{C}_1^T |t-s|^{H+\frac{1}{2}},
		\end{eqnarray*}
with
$ \widetilde{C}_1^T=\max\left\{2\sqrt{2}K,\sqrt{ \frac{{2}}{\lambda_n}} \| (\cdot)^{H-\frac{1}{2}}\|_{L^1[0,1]}\right\}=\max\left\{2\sqrt{2}K, \frac{\sqrt{2}(2n-1)\pi}{2} \| (\cdot)^{H-\frac{1}{2}}\|_{L^1[0,1]}\right\}$,
since $\cos(\cdot)$ is Lipschitz on any compact and $\int_{0}^{\cdot} v^{H-\frac{1}{2}} dv$ is $\left(H+\frac{1}{2}\right)$-H\"{o}lder continuous.

		\item If $0 \leq s-T \leq s \leq t-T \leq t \leq 1$, 
		\begin{eqnarray*}
			\left| \mathcal{K}_H^T[\psi_n](t)-\mathcal{K}_H^T[\psi_n](s) \right|
			& = & \sqrt{2}\left| \int_{t-T}^t v^{H-\frac{1}{2}} \cos\left(\frac{t-v}{\sqrt{\lambda_n}}\right)dv-\int_{s-T}^s v^{H-\frac{1}{2}} \cos\left(\frac{s-v}{\sqrt{\lambda_n}}\right)dv\right| \\
			& = & \sqrt{2}\Bigg| \int_{t-T}^t v^{H-\frac{1}{2}} \cos\left(\frac{t-v}{\sqrt{\lambda_n}}\right)dv-\int_{s-T}^s v^{H-\frac{1}{2}} \cos\left(\frac{s-v}{\sqrt{\lambda_n}}\right)dv \\
			& \qquad & + \int_s^{t-T}v^{H-\frac{1}{2}}\cos\left(\frac{t-v}{\sqrt{\lambda_n}} \right) dv -  \int_s^{t-T}v^{H-\frac{1}{2}}\cos\left(\frac{t-v}{\sqrt{\lambda_n}} \right) dv \\
& & +  \int_s^{t-T}v^{H-\frac{1}{2}}\cos\left(\frac{s-v}{\sqrt{\lambda_n}} \right) dv -  \int_s^{t-T}v^{H-\frac{1}{2}}\cos\left(\frac{s-v}{\sqrt{\lambda_n}} \right) dv  \Bigg| \\
			& \leq & \sqrt{2}\Bigg(\left| \int_s^{t-T} v^{H-\frac{1}{2}} \left(\cos\left(\frac{t-v}{\sqrt{\lambda_n}}\right)-\cos\left(\frac{s-v}{\sqrt{\lambda_n}}\right)\right)dv\right| \\
			& \qquad & + \left| \int_{s}^t v^{H-\frac{1}{2}} \cos\left(\frac{t-v}{\sqrt{\lambda_n}}\right)dv\right| 
+ \left| \int_{s-T}^{t-T} v^{H-\frac{1}{2}} \cos\left(\frac{s-v}{\sqrt{\lambda_n}}\right)dv\right| \Bigg) 	\\
			& \leq &  \dots 
			\leq  \widetilde{C}_1^T |t-s|^{H+\frac{1}{2}},
		\end{eqnarray*}
		where the dots correspond to the same computations as in the previous case  and leads to the same estimation with the same constant~$\widetilde{C}_1^T$.
	\end{itemize}
	This proves \textbf{(A)}.

To prove \textbf{(B)}, recall that, for $T \in [0,1]$ and $n\in\NN$,  the function $ \mathcal{K}^T_H[ \psi_n] :[T,1]\to \R$ reads
\begin{align}\label{eq:main}
\mathcal{K}^T_H [\psi_n](t) 
 & = \sqrt{2}\int_{0}^{T}(t-s)^{H-\half}\cos\left(\left(n-\half\right)\pi s\right) ds\nonumber\\
 &  = \frac{\sqrt{2}}{m^{H+\half}}\int_{0}^{mT}(mt-u)^{H-\half}\cos\left(\pi u\right) du 
		=: \Phi_m(t).
\end{align}
with the change of variable $u=(n-\half)s=:ms$.
	Denote from now on $\NNt:=\{m = n-\half, n\in\NN\}$.
	From~\eqref{eq:main}, we deduce, for each $m\in\NNt$ and $t\in[T,1]$,
	\begin{equation}\label{eq_def_phi_little}
		m^{H+\half}\Phi_m(t) 
		= \sqrt{2}\int_{0}^{mT}(mt-u)^{H-\half}\cos\left(\pi u\right) du
		=:\sqrt{2}\phi_m(t).
	\end{equation}
	To end the proof of \textbf{(B)}, it therefore suffices to show that $(\phi_m(t))_{m\in\NNt, t\in[T,1]}$ is uniformly bounded since, in that case we have
	\begin{eqnarray*}
		\| \mathcal{K}_H^T[\psi_n]\|_{\infty} 
		& = & \sup_{t \in [T,1]}| \mathcal{K}_H^T[\psi_n](t) |
		= \sup_{t \in [T,1]}| \Phi_{n-\frac{1}{2}}(t)|
		= \frac{\sqrt{2}}{(n-\frac{1}{2})^{H+\frac{1}{2}}}\sup_{t \in [T,1]}| \phi_{n-\frac{1}{2}}(t)| \\
		& \leq & \frac{\sqrt{2}}{(n-\frac{1}{2})^{H+\frac{1}{2}}}\sup_{t \in [T,1], m \in \widetilde{\NN}}| \phi_{m}(t)|
		\leq  \frac{\sqrt{2}}{(n-\frac{1}{2})^{H+\frac{1}{2}}} C
		\leq C_2^T n ^{-(H+\frac{1}{2})},
	\end{eqnarray*}
	for some $C_2^T>0$, proving~\textbf{(B)}.
	The following guarantees the uniform boundedness of~$\phi_x$ in~\eqref{eq_def_phi_little}.

	\begin{proposition}
		For any $T\in [0,1]$, there exists $C>0$
such that $|\phi_x(t)| \leq C$ for all $x\geq 0$, $t \in [T,1]$.
	\end{proposition}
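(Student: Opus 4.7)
The plan is to reduce $\phi_x(t)$ to a combination of evaluations of two antiderivatives of oscillatory integrands, and then prove those antiderivatives are bounded on $[0,\infty)$.

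First, I would perform the change of variable $w = xt - u$ in
$\phi_x(t) = \int_0^{xT}(xt-u)^{H-\frac{1}{2}}\cos(\pi u)\,du$
and use the identity $\cos(\pi(xt - w)) = \cos(\pi xt)\cos(\pi w) + \sin(\pi xt)\sin(\pi w)$ to obtain
\begin{equation*}
\phi_x(t) = \cos(\pi xt)\int_{x(t-T)}^{xt} w^{H-\frac{1}{2}}\cos(\pi w)\,dw + \sin(\pi xt)\int_{x(t-T)}^{xt} w^{H-\frac{1}{2}}\sin(\pi w)\,dw.
\end{equation*}
This cleanly separates the $x$- and $t$-dependent phase factors from two integrals whose integrands depend only on $w$ and whose endpoints are arbitrary non-negative reals.

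Second, I would introduce
$F(z) := \int_0^z w^{H-\frac{1}{2}}\cos(\pi w)\,dw$
and
$G(z) := \int_0^z w^{H-\frac{1}{2}}\sin(\pi w)\,dw$
and show that both are continuous on $[0,\infty)$ with finite limits at infinity, hence bounded. Continuity at $0$ is immediate since $H>0>-\frac{1}{2}$ makes the integrand locally integrable near the origin. For the behaviour at infinity, a single integration by parts on $[1,z]$ gives
\begin{equation*}
\int_1^z w^{H-\frac{1}{2}}\cos(\pi w)\,dw = \frac{\sin(\pi z)}{\pi}z^{H-\frac{1}{2}} + \frac{\frac{1}{2}-H}{\pi}\int_1^z \sin(\pi w)\, w^{H-\frac{3}{2}}\,dw,
\end{equation*}
whose boundary term vanishes as $z\to\infty$ (because $H<\frac{1}{2}$) and whose remainder converges absolutely (because $H-\frac{3}{2}<-1$). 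The same argument handles $G$, so $\|F\|_\infty,\|G\|_\infty<\infty$.

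Finally, bounding the trigonometric prefactors by one and $|F(b) - F(a)| \leq 2\|F\|_\infty$ (similarly for $G$) gives
$|\phi_x(t)| \leq 2\|F\|_\infty + 2\|G\|_\infty$,
uniformly in $x \geq 0$ and $t \in [T,1]$. The only delicate step is the uniform boundedness of $F$ and $G$, which is a classical Dirichlet-type estimate handled by the integration by parts above; everything else is algebraic rewriting. Notice that this argument does not need Lemma~\ref{lem:Representation} at all, although one could equivalently read the same identity off of that lemma by using the representations $\zeta_{\frac{1}{2}}(z,h_1)=F(z)$ and $\pi\,\zeta_{\frac{3}{2}}(z,h_2)=G(z)$ that follow directly from \eqref{eq:1F2Cos}--\eqref{eq:1F2Sin}.
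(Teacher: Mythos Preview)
Your proof is correct and follows essentially the same strategy as the paper: express $\phi_x(t)$ as a linear combination (with bounded trigonometric coefficients) of differences $F(xt)-F(x(t-T))$ and $G(xt)-G(x(t-T))$, then show that $F$ and $G$ are continuous on $[0,\infty)$ with finite limits at infinity. The paper reaches the same decomposition via Lemma~\ref{lem:Representation}, writing $\phi_x(t)$ in terms of $\zeta_{1/2}(\cdot,h_1)$ and $\zeta_{3/2}(\cdot,h_2)$, and then recovers the integral representations $\zeta_{1/2}(z,h_1)=\pi^{-2h_1}\int_0^{\pi z}x^{2h_1-1}\cos(x)\,dx$ and $\zeta_{3/2}(z,h_2)=\pi^{-2h_2}\int_0^{\pi z}x^{2h_2-2}\sin(x)\,dx$ to check boundedness. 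As you observe, these are exactly your $F$ and $G/\pi$, so the two arguments coincide at the decisive step. Your route is more elementary in that it avoids the hypergeometric detour entirely and supplies the explicit Dirichlet-type integration-by-parts estimate for convergence at infinity, whereas the paper's route has the side benefit of identifying the limiting constants $c_{1/2}$ and $c_{3/2}$ in closed form via $\Gamma$-function identities; neither advantage is needed for the proposition itself.
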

	\begin{proof}
		For $x>0$, we write
		$$
		\phi_x(t) 
		= \int_{0}^{xT}(xt-u)^{H-\half}\cos\left(\pi u\right) du
		= \Re\left\{\int_{0}^{xT}(xt-u)^{H-\half}\E^{\I\pi u} du\right\}.
		$$
		Using the representation in Lemma~\ref{lem:Representation}, we are thus left to prove that the maps $\zeta_{\frac{1}{2}}(\cdot, h_1)$ and $\zeta_{\frac{3}{2}}(\cdot, h_2)$, defined in~\eqref{def_zeta_khz}, are bounded on $[0,\infty)$ by, say $L_{\frac{1}{2}}$ and $L_{\frac{3}{2}}$. Indeed, in this case,
		{\footnotesize
		\begin{eqnarray*}
			\sup_{x>0, t \in [T,1]}|\phi_x(t)|
			& = & \sup_{x>0, t \in [T,1]}\left| \int_0^{xT}(xt-u)^{H-\frac{1}{2}} \E^{\I \pi u} du \right|\\
			& \leq & \sup_{x>0, t \in [T,1]}\Bigg| \frac{\E^{\I\pi xt}}{2}\Bigg[\Big(\zeta_{\frac{1}{2}}(xt, h_1) - \zeta_{\frac{1}{2}}(x(t-T), h_1)\Big)
- \I\pi \Big(\zeta_{\frac{3}{2}}(xt, h_2) - \zeta_{\frac{3}{2}}(x(t-T), h_2)\Big)\Bigg] \Bigg|\\
			& \leq & \half \sup_{y,z \in [0, \infty)}\Bigg|\Big(\zeta_{\frac{1}{2}}(y, h_1) - \zeta_{\frac{1}{2}}(z, h_1)\Big)
- \I\pi \Big(\zeta_{\frac{3}{2}}(y, h_2) - \zeta_{\frac{3}{2}}(z, h_2)\Big)\Bigg|\\
			& \leq & \pi  \left\{\sup_{y \in [0, \infty)}\left| \zeta_{\frac{1}{2}}(y, h_1)\right| +\sup_{y \in [0, \infty)} \left| \zeta_{\frac{3}{2}}(y, h_2) \right|\right\}\leq L_{\frac{1}{2}}+L_{\frac{3}{2}} =C < +\infty.
		\end{eqnarray*}}
		The maps $\zeta_{\frac{1}{2}}(\cdot, h_1)$ and $\zeta_{\frac{3}{2}}(\cdot, h_2)$ are both clearly continuous. Moreover, as $z$ tends to infinity $\zeta_{k}(z, h)$ converges to a constant $c_{k}$, for $(k,h)\in(\{\half, \frac{3}{2}\}, \{h_1,h_2\})$. 
The identities
$$
\frac{{}_1 F_{2}\left(h;\half,1+h;-x\right)}{h}
			= \int_0^1 \frac{\cos(2\sqrt{x u})}{u^{1-h}}du
\quad\text{and}\quad
\frac{{}_1 F_{2}\left(h;\frac{3}{2},1+h;-x\right)}{h}
= \frac{1}{2 \sqrt{x}} \int_0^1 \frac{\sin(2\sqrt{x u})}{u^{3/2 - h}} du
$$
hold (this can be checked with \texttt{Wolfram Mathematica} for example) and therefore,
		\begin{eqnarray*}
			\zeta_{\frac{1}{2}}(z,h_1)
			& = & \frac{z^{2 h_1}}{2 h_1} {}_1 F_{2}\left(h_1;\half,1+h_1;-\frac{\pi^2z^2}{4}\right)
			= \frac{z^{2 h_1}}{2} \int_0^1 u^{h_1 -1} \cos( \pi z \sqrt{u}) du\\
			& = & \frac{z^{2 h_1}}{2} \int_0^{\pi z} \frac{x^{2(h_1 -1)}}{(\pi z)^{2(h_1 -1)}} \cos( x)\frac{2x}{\pi^2 z^2} dx
			= \frac{1}{\pi^{2 h_1}} \int_0^{\pi z} x^{2h_1 -1} \cos( x) dx,
		\end{eqnarray*}
		where, in the second line, we used the change of variables $x=\pi z \sqrt{u}$. 
		In particular, as $z$ tends to infinity, this converges to
$ \pi^{-2 h_1} \int_0^{+\infty} x^{2h_1 -1} \cos( x) dx = \frac{\cos(\pi h_1)}{\pi^{2 h_1}}\Gamma(2h_1) =: c_{1/2}\approx 0.440433$.
		\\
		Analogously, for $k=\frac{3}{2}$, 
		\begin{eqnarray*}
			\zeta_{\frac{3}{2}}(z,h_2)
			& = & \frac{z^{2 h_2}}{2 h_2} {}_1 F_{2}\left(h_2;\frac{3}{2},1+h_2;-\frac{\pi^2z^2}{4}\right)
			= \frac{z^{2 h_2}}{2 \pi z} \int_0^1 u^{h_2 -3/2} \sin( \pi z \sqrt{u}) du\\
			& = & \frac{z^{2 h_2-1}}{2 \pi} \int_0^{\pi z} \frac{x^{2h_2 -3)}}{(\pi z)^{2h_2 -3)}} \sin( x)\frac{2x}{\pi^2 z^2} dx 
			= \frac{1}{\pi^{2 h_2}} \int_0^{\pi z} x^{2(h_2 -1)} \sin( x) dx,
		\end{eqnarray*}
		with the same change of variables as before.
This converges to
$ \pi^{-2 h_2} \int_0^{+\infty} x^{2h_2 -2} \sin( x) dx = \frac{-\cos(\pi h_2)}{\pi^{2 h_2}}\Gamma(2h_2-1) =: c_{3/2}\approx 0.193$ as~$z$ tends to infinity. 
		For $k>0$, $\zeta_{k}(z,h) = z^{2h}(1 + \Oo(z^2))$ at zero. 
		Since $H \in (0,\half)$, the two functions are continuous and bounded
and the proposition follows.
	\end{proof}

\subsection{Proof of Theorem~\ref{Thm_pricing_err}}
\label{App_Thm_pricing_err}
We only provide the proof of~\eqref{ineq1_Thm_pricing_err} since, as already noticed, 
that of~\eqref{ineq2_Thm_pricing_err} follows immediately.
Suppose that $F: \R \to \R$ is Lipschitz continuous with constant~$M$. 
By Definitions~\eqref{Eq:VIX^2_dyn} and~\eqref{quant_price_VIX}, we have
\begin{eqnarray*}
		& & \Big|  \mathbb{E}\left[F\left(\VIX_T\right)\right]
 - \mathbb{E}\left[F\left(\widehat{\VIX}_T^{\dd}\right)\right]\Big| \\
		& = & \Bigg| \mathbb{E}\left[F\left(\left|\frac{1}{\Delta}\int_T^{T + \Delta}v_0(t)
\exp\left\{\gamma Z^{T,\Delta}_t+\frac{\gamma^2}{2}\left(\int_0^{t-T}K(s)^2ds - \int_0^{t}K(s)^2ds\right)\right\} dt\right|^{\half}\right) \right] \\
		&\qquad &-\mathbb{E}\left[F \left(\left|\frac{1}{\Delta}\int_T^{T + \Delta}v_0(t)\exp\left\{\gamma\widehat Z^{T,\Delta,\dd}_t + \frac{\gamma^2}{2}\left(\int_0^{t-T}K(s)^2ds - \int_0^{t}K(s)^2ds\right)\right\}dt\right|^{\half}\right)\right] \Bigg|.
\end{eqnarray*}
For clarity, let
$Z := Z^{T,\Delta}$, $\widehat{Z} := \widehat{Z}^{T,\Delta,\dd}$,
$\Hf := \int_T^{T+\Delta}h(t)e^{\gamma Z_t}dt$ and 
$\Hft := \int_T^{T+\Delta}h(t)e^{\gamma \widehat{Z}_t}dt$,
with
$$
h(t):= \frac{v_0(t)}{\Delta} \exp\left\{\frac{\gamma^2}{2}\left(\int_0^{t-T}K(s)^2ds - \int_0^{t}K(s)^2ds\right)\right\}, \qquad \text{for }t \in [T, T+\Delta].
$$
We can therefore write, using the Lipschitz property of~$F$ (with constant~$M$) and Lemma~\ref{lem_local_lip}, 
\begin{align*}
	\left|\mathbb{E}\left[F\left(\VIX_T\right)\right]  - \mathbb{E}\left[F\left(\widehat{\VIX}_T^{\dd}\right)\right]\right|
	& 	= \left|\mathbb{E}\left[F\left(\Hf^{\half}\right)\right]-\mathbb{E}\left[F\left(\Hft^{\half}\right)\right]\right|
	\leq \mathbb{E}\left[\left|F\left(\Hf^{\half}\right)- F\left(\Hft^{\half}\right)\right|\right]\\
	& 	\leq M \mathbb{E}\left[\left|\Hf^{\half}-\Hft^{\half}\right|\right]
	\leq M \mathbb{E}\Bigg[\left(\frac{1}{\Hf}+\frac{1}{\Hft}\right)\left|\Hf-\Hft\right|\Bigg]\\
	& =: M \mathbb{E}\left[A \left|\Hf-\Hft\right|\right] 
	\leq M \mathbb{E}\Bigg[A \int_T^{T+\Delta}h(t)\left|e^{\gamma Z_t}-e^{\gamma  \widehat{Z}_t}\right|dt \Bigg]\\&
	\leq M \mathbb{E}\Bigg[A \int_T^{T+\Delta}h(t)\gamma 
\left(e^{\gamma Z_t}+e^{\gamma \widehat{Z}_t}\right)\left|Z_t- \widehat{Z}_t\right|dt \Bigg].
\end{align*}
Now, an application of H\"{o}lder's inequality yields
\begin{align}
	\left|\mathbb{E}\left[F\left(\VIX_T\right)\right]  - \mathbb{E}\left[F\left(\widehat{\VIX}_T^{\dd}\right)\right]\right|
&
	\leq M \mathbb{E}\left[\gamma A \left|\int_T^{T+\Delta}h(t)^2
\left(e^{\gamma Z_t}+e^{\gamma \widehat{Z}_t}\right)^2 dt\right|^{\half}\left|\int_T^{T+\Delta}\left|Z_t- \widehat{Z}_t\right|^2dt\right|^{\half} \right]\\&
	\leq M \mathbb{E}\left[(\gamma A)^2 \int_T^{T+\Delta}h(t)^2
\left(e^{\gamma Z_t}+e^{\gamma \widehat{Z}_t}\right)^2 dt\right]^{\half}\mathbb{E}\left[\int_T^{T+\Delta}\left|Z_t- \widehat{Z}_t\right|^2dt \right]^{\half}\\&
	= \mathfrak{K} \ \mathbb{E}\left[\int_T^{T+\Delta}\left|Z_t- \widehat{Z}_t\right|^2dt \right]^{\half},
\end{align}
where
$\mathfrak{K}:= M \mathbb{E}[\gamma^2 A^2 
\int_T^{T+\Delta}h(t)^2(e^{\gamma Z_t}+ e^{\gamma  \widehat{Z}_t})^2 dt]^{\half}
$.
It remains to show that $\mathfrak{K}$ is a strictly positive finite constant. 
This follows from the fact that $\{Z_t\}_{t \in [T, T+\Delta]}$ does not explode in finite time (and so does not its quantization $\widehat{Z}$ either).
The identity $(a+b)^2\leq 2(a^2+b^2)$ and H\"{o}lder's inequality imply
\begin{eqnarray*}
	\mathfrak{K}^2 
	& \leq & 4M^2 \gamma^2\mathbb{E}\left[\left(\frac{1}{\Hf}+\frac{1}{\Hft}\right)\int_T^{T+\Delta}h(t)^2
\left(e^{2\gamma Z_t}+ e^{2\gamma  \widehat{Z}_t}\right) dt\right]\\
	& \leq & 4M^2\gamma^2\mathbb{E}\left[\left|\frac{1}{\Hf}+\frac{1}{\Hft}\right|^2\right]^{\half}
\mathbb{E}\left[\left|\int_T^{T+\Delta}h(t)^2
\left(e^{2\gamma Z_t}+ e^{2\gamma  \widehat{Z}_t}\right) dt\right|^2\right]^{\half}\\
	& \leq & 16M^2\gamma^2\mathbb{E}\left[\frac{1}{\Hf^2} + \frac{1}{\Hft^2}\right]^{\half}
\mathbb{E}\left[\left|\int_T^{T+\Delta}h(t)^2e^{2\gamma Z_t}dt\right|^2 + \left|\int_T^{T+\Delta}h(t)^2e^{2\gamma  \widehat{Z}_t} dt\right|^2\right]^{\half}\\
	& = & : 16M^2\gamma^2 (A_1 +A_2)^{\half} (B_1 +B_2)^{\half}.
\end{eqnarray*}
We only need to show that $A_1, A_2, B_1$ and $B_2$ are finite. 
Since~$h$ is a positive continuous function on the compact interval $[T,T+\Delta]$, we have
\begin{align}\label{eq_pr_VV_h}
	\Hf 
	& \geq \int_T^{T+\Delta}\inf_{s \in [T,T+\Delta]}\left(h(s)e^{\gamma Z_s}\right) dt 
	\geq \Delta\inf_{s \in [T,T+\Delta]} h(s)e^{\gamma Z_s} \\
	& \geq  \Delta\inf_{t \in [T,T+\Delta]}h(t)\inf_{s \in [T,T+\Delta]}e^{\gamma Z_s}
	\geq\Delta \widetilde{h}\exp\left\{\gamma \inf_{s \in [T,T+\Delta]}Z_s\right\},
\end{align}
with $\widetilde{h}:= \inf_{t \in [T,T+\Delta]}h(t) > 0$.
The inequality~\eqref{eq_pr_VV_h} implies 
\begin{eqnarray*}
	A_1  
	& = & \mathbb{E}\left[ \Hf^{-2} \right]
	\leq \frac{\mathbb{E}\left[\exp\left\{-2\gamma \inf_{s \in [T,T+\Delta]}Z_s\right\}\right]}{\Delta^2 \widetilde{h}^2}
	= \frac{\mathbb{E}\left[\exp\left\{2\gamma \sup_{s \in [T,T+\Delta]}(-Z_s)\right\}\right]}{\Delta^2 \widetilde{h}^2}
\\
	& = & \frac{1}{\Delta^2 \widetilde{h}^2} \mathbb{E}\left[\exp\left\{2\gamma \sup_{s \in [T,T+\Delta]}Z_s\right\}\right],
\end{eqnarray*}
since $-Z$ and $Z$ have the same law.
The process $Z=(Z_t)_{t \in [T,T+\Delta]}$ is a continuous centered Gaussian process defined on a compact set. Thus, by Theorem 1.5.4 in~\cite{Adler_1}, it is almost surely bounded there. 
Furthermore, exploiting Lemma~\ref{lem_exp_pos_rv} and \emph{Borel-TIS} inequality~\cite[Theorem 2.1.1]{Adler_1}, we have
\begin{align}\label{eq_sup_w_BTIS}
	\nonumber& \mathbb{E}  \left[ e^{2\gamma \sup_{s \in [T,T+\Delta]}Z_s}\right]
	 = : \mathbb{E}\left[ e^{2\gamma \|Z\|}\right]
	=\int_0^{+\infty}\mathbb{P}\left( e^{2\gamma \|Z\|}>u\right) du
	=\int_0^{+\infty}\mathbb{P}\left( \|Z\|> \frac{\log(u)}{2\gamma }\right) du\\
	\nonumber& = \int_0^{e^{2\gamma \mathbb{E}[\|Z\|]}}du+\int_{e^{2\gamma \mathbb{E}[\|V\|]}}^{+\infty}\mathbb{P}\left( \|Z\|> \frac{\log(u)}{2\gamma }\right) du
	= e^{2\gamma \mathbb{E}[\|Z\|]}+\int_{e^{2\gamma \mathbb{E}[\|V\|]}}^{+\infty} e^{-\frac{1}{2}\left( \frac{\frac{1}{2\gamma }\log(u)-\mathbb{E}[\|Z\|]}{\sigma_T}\right)^2} du\\
	& \leq  e^{2\gamma \mathbb{E}[\|Z\|]}
 + \int_{0}^{+\infty} e^{-\frac{1}{2}\left( \frac{\frac{1}{2\gamma}\log(u)-\mathbb{E}[\|Z\|]}{\sigma_T}\right)^2} du,
\end{align}
with $\|Z\|:= \sup_{s \in [T,T+\Delta]}Z_s$ and $\sigma_T^2:= \sup_{t \in [T,T+\Delta]}\mathbb{E}[Z_t^2]$. 
The change of variable $\frac{\log(u)}{2\gamma }=v$ in the last term in~\eqref{eq_sup_w_BTIS} yields 
\[
	\int_{0}^{+\infty} e^{-\frac{1}{2}\left( \frac{\frac{1}{2\gamma}\log(u)-\mathbb{E}[\|Z\|]}{\sigma_T}\right)^2} du
	=  2\gamma\int_{\R}e^{-\frac{1}{2}\left( \frac{v-\mathbb{E}[\|Z\|]}{\sigma_T}\right)^2} e^{2\gamma v}dv
	= \sqrt{2\pi}2\gamma  \mathbb{E}[e^{2\gamma Y}],
\]
since $Y \sim \mathcal{N}(\mathbb{E}[\|Z\|], \sigma_T)$, and hence $A_1$ is finite. 
Now, notice that, in analogy to the last line of the proof of Proposition~\ref{Prop_staz_quant}, for any $t \in [T,T+\Delta]$, we have
\begin{align}\label{Eq:staz_over_t}
	\mathbb{E}\left[Z_t \Big| (\widehat{Z}_s)_{s \in [T, T+\Delta]}\right]&
	= \mathbb{E}\left[\mathbb{E}\left[Z_t \Big| \{\widehat{\xi}_n^{d(n)}\}_{n=1,\dots,m}\right] \Big| (\widehat{Z}_s)_{s \in [T, T+\Delta]}\right]
	= \mathbb{E}\left[\widehat{Z}_t \Big| (\widehat{Z}_s)_{s \in [T, T+\Delta]}\right]
	= \widehat{Z}_t,
\end{align}
since the sigma-algebra generated by $(\widehat{Z}_s)_{s \in [T, T+\Delta]}$ is included in the sigma-algebra generated by $\{\widehat{\xi}_n^{d(n)}\}_{n=1,\dots,m}$.
Now, exploiting, in sequence, \eqref{Eq:staz_over_t}, the conditional version of $\sup_{t \in [T_1,T_2]} \mathbb{E}[f_t] \leq \mathbb{E}[\sup_{t \in [T_1,T_2]} f_t]$, conditional Jensen's inequality together with the convexity of $x\mapsto e^{\gamma x}$, for $\gamma > 0$
 and the tower property, we obtain
\begin{align}
	\mathbb{E}\left[ \exp \left\{\gamma \sup_{t \in [T, T+\Delta]}\widehat{Z}_t \right\}\right]& \nonumber
	=\mathbb{E}\left[ \exp \left\{\gamma \sup_{t \in [T, T+\Delta]}\mathbb{E}\left[Z_t \Big| (\widehat{Z}_s)_{s \in [T, T+\Delta]}\right] \right\}\right]\\
	&\nonumber \leq \mathbb{E}\left[ \exp \left\{\gamma \mathbb{E}\left[\sup_{t \in [T, T+\Delta]} Z_t \Big| (\widehat{Z}_s)_{s \in [T, T+\Delta]}\right] \right\}\right]\\
	&\nonumber \leq \mathbb{E}\left[ \mathbb{E}\left[\exp \left\{\gamma \sup_{t \in [T, T+\Delta]} Z_t  \right\}\Big| (\widehat{Z}_s)_{s \in [T, T+\Delta]}\right]\right]\\
	&	= \mathbb{E}\left[\exp \left\{\gamma \sup_{t \in [T, T+\Delta]} Z_t  \right\}\right].
\end{align}
Thus, we have
\begin{equation*}
	A_2 
	= \mathbb{E}\left[\widehat{\mathfrak{H}}^{-2}\right]
	\leq \frac{1}{\Delta^2 \widetilde{h}^2}\mathbb{E}\left[ \exp \left\{\gamma \sup_{t \in [T, T+\Delta]}\widehat{Z}_t \right\}\right]
	\leq \frac{1}{\Delta^2 \widetilde{h}^2}\mathbb{E}\left[\exp \left\{\gamma \sup_{t \in [T, T+\Delta]} Z_t  \right\}\right],
\end{equation*}
which is finite because of the proof of the finiteness of $A_1$, above.

Exploiting Fubini's theorem we rewrite $B_1$ as
\begin{equation}
	B_1  = \mathbb{E}\left[\left(\int_T^{T+\Delta}h(t)^2 e^{2\gamma  Z_t}dt\right)^2\right]
	= \int_T^{T+\Delta}\int_T^{T+\Delta}h(t)^2 h(s)^2 \mathbb{E}\left[e^{2\gamma ( Z_t +Z_s)}\right] dt ds.
\end{equation}
Since $(Z_t)_{t \in [T,T+\Delta]}$ is centered Gaussian with covariance $\mathbb{E}[Z_tZ_s]=\int_0^T K(t-u)K(s-u) du$, then 
$(Z_t+Z_s) \sim \mathcal{N}(0, g(t,s))$, with $g(t,s):= \mathbb{E}[(Z_t+Z_s)^2]=\int_0^T (K(t-u) + K(s-u))^2 du $ and therefore
\begin{equation}
	B_1  =\int_T^{T+\Delta}\int_T^{T+\Delta}h(t)^2 h(s)^2 e^{2\gamma^2 g(t,s)} dt ds
\end{equation}
is finite since both~$h$ and~$g$ are continuous on  compact intervals.
Finally, for $B_2$ we have

\begin{eqnarray*}
	B_2
	& = &\mathbb{E}\left[\left(\int_T^{T+\Delta}h(t)^2 e^{2\gamma  \widehat{Z}_t}dt\right)^2\right]
	= \int_T^{T+\Delta}\int_T^{T+\Delta}h(t)^2 h(s)^2 \mathbb{E}\left[e^{2\gamma ( \widehat{Z}_t + \widehat{Z}_s)}\right] dt ds\\
	& \leq & \int_T^{T+\Delta}\int_T^{T+\Delta}h(t)^2 h(s)^2 \mathbb{E}\left[e^{2\gamma ( {Z}_t + {Z}_s)}\right] dt ds
	= B_1,
\end{eqnarray*}
where we have used the fact that for all $t,s \in [T,T+\Delta]$, $(\widehat{Z}_t +\widehat{Z}_s)$ is a stationary quantizer for $(Z_t+Z_s)$ and so $\mathbb{E}[e^{2\gamma (\widehat{Z}_t +\widehat{Z}_s)}] \leq \mathbb{E}[e^{2\gamma ({Z}_t +{Z}_s)}]$ since $f(x)=e^{2\gamma x}$ is a convex function (see Remark~\ref{rem_staz_quant} in Section~\ref{Sect_staz}). 
Therefore $B_2$ is finite and the proof follows.

\section{Some useful results}\label{Some_us_res}
We recall some important results used throughout the text. Straightforward proofs are omitted.

\begin{proposition}\label{Prop_mom_gauss}
	For a Gaussian random variable $Z\sim \mathcal{N}(\mu, \sigma)$,
	\begin{equation*}
		\mathbb{E}\left[|Z-\mu|^p\right]
		=\left\{ 
\begin{array}{ll}
(p-1)!!\sigma^p, & \text{if } p\text{ is even},\\
0, & \text{if } p\text{ is odd.}
\end{array}
\right.
	\end{equation*}
\end{proposition}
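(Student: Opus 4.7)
The plan is to reduce to the standard normal case and treat the two parities of~$p$ separately by elementary Gaussian calculus. First, I would write $Z = \mu + \sigma X$ with $X \sim \mathcal{N}(0,1)$, so that $|Z-\mu|^p = \sigma^p |X|^p$ and $(Z-\mu)^p = \sigma^p X^p$. Pulling $\sigma^p$ out of the expectation reduces the problem to computing the corresponding moment of a standard Gaussian. (As the statement stands, with $(p-1)!!\sigma^p$ in the even case and $0$ in the odd case, the right-hand side matches the centered moments $\mathbb{E}[(Z-\mu)^p]$; since the formula is used in the proof of Lemma~\ref{Cont_v_Y} only for an even exponent $2r$, this makes no difference to the application.)

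For odd $p=2k+1$, I would invoke symmetry: the density $\varphi(x) = (2\pi)^{-1/2}\E^{-x^2/2}$ is even, so $x\mapsto x^{2k+1}\varphi(x)$ is odd and integrable, yielding $\mathbb{E}[X^{2k+1}] = 0$ directly. This handles the odd branch.

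For even $p=2k$, I would set up the recursion $\mathbb{E}[X^{2k}] = (2k-1)\mathbb{E}[X^{2k-2}]$ by integration by parts: writing
\[
\mathbb{E}[X^{2k}] = \frac{1}{\sqrt{2\pi}}\int_{\R} x^{2k-1} \cdot x\E^{-x^2/2}dx
\]
and using $x\E^{-x^2/2} = -\frac{d}{dx}\E^{-x^2/2}$, the boundary terms vanish by rapid decay of the Gaussian and one picks up a factor $(2k-1)$ times $\mathbb{E}[X^{2k-2}]$. Starting from $\mathbb{E}[X^0]=1$ and iterating $k$ times gives $\mathbb{E}[X^{2k}] = (2k-1)(2k-3)\cdots 3\cdot 1 = (2k-1)!!$. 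Multiplying back by $\sigma^{2k}$ closes the proof.

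The only mild technical point is the integration-by-parts step, but with the standard Gaussian tail this is routine and there is no real obstacle; an alternative would be to expand the moment generating function $\mathbb{E}[\E^{tX}]=\E^{t^2/2}$ as a power series in~$t$ and match the coefficient of $t^{2k}/(2k)!$, which also delivers $(2k-1)!!$ without any limiting argument.
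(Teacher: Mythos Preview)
Your argument is correct and entirely standard. The paper itself does not prove this proposition: it is listed in Appendix~\ref{Some_us_res} under ``Straightforward proofs are omitted,'' so there is nothing to compare against beyond noting that your reduction to the standard normal, symmetry for odd exponents, and the integration-by-parts recursion (or the MGF expansion) is exactly the kind of routine computation the authors had in mind. Your side remark is also well taken: as stated with $|Z-\mu|^p$, the odd case should give $\sigma^p\sqrt{2/\pi}\,(p-1)!!$ rather than~$0$; the formula as written matches $\mathbb{E}[(Z-\mu)^p]$, and since the proposition is only invoked in the proof of Lemma~\ref{Cont_v_Y} with the even exponent~$2r$, the discrepancy is harmless for the paper's purposes.
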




We recall~\cite[Problem 8.5]{Steele}, correcting a small error, 
used in the proof of Proposition~\ref{prop:choice_m_d}:
\begin{lemma}\label{lem_arithm_geom}
Let $m, N \in \NN$ and $p_1,\dots,p_m$ positive real numbers. 
Then
	\[
		\inf\left\{ \sum_{n=1}^m\frac{p_n}{x_n^2}: \quad x_1, \dots,x_m \in (0,\infty),\quad\prod_{n=1}^mx_n \leq N  \right\}
 = m N^{-\frac{2}{m}} \left( \prod_{j=1}^m p_j \right)^{\frac{1}{m}},
	\]
where the infimum is attained for 
$x_n=N^{\frac{1}{m}}p_n^{\frac{1}{2}}\left( \prod_{j=1}^m p_j \right)^{-\frac{1}{2m}}$, for all $n \in \{1,\dots,m\}$. 
\end{lemma}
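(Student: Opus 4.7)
The plan is to exploit the classical arithmetic--geometric mean inequality on the $m$ positive terms $p_n/x_n^2$, $n=1,\ldots,m$, and then use the product constraint $\prod_{n=1}^m x_n \leq N$ to bound the resulting geometric mean from below.

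First I would write, by AM--GM applied to the positive reals $\{p_n/x_n^2\}_{n=1}^m$,
\begin{equation*}
\sum_{n=1}^m \frac{p_n}{x_n^2} \;\geq\; m \left( \prod_{n=1}^m \frac{p_n}{x_n^2} \right)^{\!1/m} \;=\; m \left(\prod_{n=1}^m p_n\right)^{\!1/m} \left(\prod_{n=1}^m x_n\right)^{\!-2/m}.
\end{equation*}
The product constraint $\prod_{n=1}^m x_n \leq N$ gives $\left(\prod_{n=1}^m x_n\right)^{-2/m} \geq N^{-2/m}$, and hence
\begin{equation*}
\sum_{n=1}^m \frac{p_n}{x_n^2} \;\geq\; m\, N^{-2/m} \left(\prod_{n=1}^m p_n\right)^{\!1/m},
\end{equation*}
which yields the claimed lower bound for the infimum.

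Next I would check that the proposed candidate $x_n^\star := N^{1/m} p_n^{1/2} \left(\prod_{j=1}^m p_j\right)^{-1/(2m)}$ is admissible and attains this bound. Admissibility follows from
\begin{equation*}
\prod_{n=1}^m x_n^\star \;=\; N \left(\prod_{n=1}^m p_n\right)^{\!1/2} \left(\prod_{j=1}^m p_j\right)^{\!-1/2} \;=\; N,
\end{equation*}
so the constraint holds with equality. Substituting $x_n^\star$ into the objective gives
\begin{equation*}
\sum_{n=1}^m \frac{p_n}{(x_n^\star)^2} \;=\; \sum_{n=1}^m \frac{p_n}{N^{2/m} p_n \left(\prod_j p_j\right)^{-1/m}} \;=\; N^{-2/m}\left(\prod_{j=1}^m p_j\right)^{\!1/m} \sum_{n=1}^m 1 \;=\; m\, N^{-2/m}\left(\prod_{j=1}^m p_j\right)^{\!1/m},
\end{equation*}
matching the lower bound. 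The AM--GM inequality is tight precisely when all the $p_n/x_n^2$ are equal, and one verifies that the choice $x_n^\star$ indeed makes these quantities equal to $N^{-2/m}(\prod_j p_j)^{1/m}$ for each $n$, so the infimum is attained and is in fact a minimum.

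There is no real obstacle here: the only subtle point is that the constraint in the infimum is an inequality $\prod x_n \leq N$ rather than an equality, but this is handled by noting that both sides of the AM--GM step are monotone in $\prod x_n$, so the infimum is saturated on the boundary $\prod x_n = N$, and the constructed minimizer lies on this boundary.
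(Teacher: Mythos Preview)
Your proof is correct and follows essentially the same approach as the paper: apply AM--GM to the terms $p_n/x_n^2$, use the constraint $\prod_n x_n\le N$ to bound the geometric mean below, and then verify that the proposed $x_n^\star$ is feasible and attains the bound. Your additional remark on the equality case of AM--GM and on why the infimum is achieved on the boundary $\prod_n x_n=N$ is a nice clarification not spelled out in the paper.
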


\begin{proof}
The general arithmetic-geometric inequalities imply
	\[
		\frac{1}{m}\sum_{n=1}^m \frac{p_n}{x_n^2} 
		\geq \left( \prod_{n=1}^m \frac{p_n}{x_n^2}\right)^{\frac{1}{m}}	
		=\left( \prod_{n=1}^m p_n\right)^{\frac{1}{m}} \left( \prod_{n=1}^m  \frac{1}{x_n^2}\right)^{\frac{1}{m}}
		\geq \left( \prod_{n=1}^m p_n\right)^{\frac{1}{m}} N^{-\frac{2}{m}},
	\]
since $\prod_{n=1}^m x_n \geq N$ by assumption.
The right-hand side does not depend on $x_1,\dots,x_m$, so
	\[
\inf\left\{ \sum_{n=1}^m\frac{p_n}{x_n^2}: \quad x_1, \dots,x_m \in (0,\infty),\quad\prod_{n=1}^mx_n \leq N  \right\} \geq  m \left( \prod_{n=1}^m p_n\right)^{\frac{1}{m}} N^{-\frac{2}{m}}.
	\]
Choosing $\widetilde{x}_n=N^{\frac{1}{m}}p_n^{\frac{1}{2}}\left( \prod_{j=1}^m p_j \right)^{-\frac{1}{2m}}$,  for all $n \in \{1,\dots,m\}$, we obtain
	\begin{eqnarray*}
		m \left( \prod_{n=1}^m \frac{ p_n}{N^2}\right)^{\frac{1}{m}} 
		=\sum_{n=1}^m\frac{p_n}{\widetilde{x}_n^2}
		\geq\inf\left\{ \sum_{n=1}^m\frac{p_n}{x_n^2}:  x_1, \dots,x_m \in (0,\infty),\prod_{n=1}^mx_n \leq N  \right\} 
		\geq m \left( \prod_{n=1}^m \frac{ p_n}{N^2}\right)^{\frac{1}{m}} ,
	\end{eqnarray*}
which concludes the proof.
\end{proof}

\begin{lemma}\label{lem_local_lip}
	The following hold:
	\begin{enumerate}
		\item[(i)] For any $x,y > 0$, $|\sqrt{x}-\sqrt{y}| \leq \left(\frac{1}{\sqrt{x}}+\frac{1}{\sqrt{y}}\right)|x-y|$.
	
		\item[(ii)] Set $C> 0$. For any $x,y \in \R$, $|e^{Cx}-e^{Cy}| \leq C\left(e^{Cx}+e^{Cy}\right)|x-y|$.
	\end{enumerate}
\end{lemma}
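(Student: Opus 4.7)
The plan is to prove each item separately, both via elementary manipulations; there is no substantive obstacle here, so the main aim is to keep the two short arguments clean.

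For item (i), I will use the conjugate-factor identity
\[
(\sqrt{x}-\sqrt{y})(\sqrt{x}+\sqrt{y}) = x - y,
\]
valid for any $x,y>0$, which rearranges to $|\sqrt{x}-\sqrt{y}| = |x-y|/(\sqrt{x}+\sqrt{y})$. Then I just need the crude bound
\[
\frac{1}{\sqrt{x}+\sqrt{y}} \le \frac{1}{\sqrt{x}} + \frac{1}{\sqrt{y}},
\]
which is immediate since the left-hand side is dominated by $\min(1/\sqrt{x},1/\sqrt{y})$. Combining the two steps yields the desired inequality.

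For item (ii), I would apply the mean value theorem to the smooth function $f(t):=\E^{Ct}$. Assuming without loss of generality $x \le y$, there exists $\xi\in[x,y]$ with
\[
\E^{Cy}-\E^{Cx} = C\E^{C\xi}(y-x),
\]
and since $\E^{C\xi}\le \max(\E^{Cx},\E^{Cy})\le \E^{Cx}+\E^{Cy}$ (because both summands are positive), we obtain
\[
|\E^{Cx}-\E^{Cy}| \le C\bigl(\E^{Cx}+\E^{Cy}\bigr)|x-y|,
\]
as required. The case $y<x$ is symmetric.

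Both assertions are routine technical inequalities, included in the appendix essentially as black-box tools used in the proof of Theorem~\ref{Thm_pricing_err} (where (i) handles the difference $\Hf^{1/2}-\Hft^{1/2}$ and (ii) handles $\E^{\gamma Z_t}-\E^{\gamma \widehat{Z}_t}$). There is no real obstacle to anticipate; the only minor care is to keep the bounds \emph{symmetric} in $x$ and $y$ (hence the sum rather than the tighter max) so that they plug directly into the subsequent Hölder/Cauchy--Schwarz estimates without breaking symmetry.
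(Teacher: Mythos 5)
Your proof is correct: the conjugate-factor identity plus the trivial bound $1/(\sqrt{x}+\sqrt{y})\le 1/\sqrt{x}+1/\sqrt{y}$ handles (i), and the mean value theorem with $\E^{C\xi}\le\max(\E^{Cx},\E^{Cy})\le \E^{Cx}+\E^{Cy}$ handles (ii). The paper itself omits the proof of this lemma as straightforward, so there is nothing to compare against; your argument is exactly the standard one.
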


%

\begin{lemma}\label{lem_exp_pos_rv}
For a positive random variable $X$ on $(\Omega, \cF,\mathbb{P})$, 
$\mathbb{E}[X]=\int_0^{+\infty}\mathbb{P}(X > u) du$.
\end{lemma}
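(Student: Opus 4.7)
The plan is to use the classical layer-cake representation, which follows from a single application of Fubini-Tonelli. The key observation is that for any positive real number $x$, we can write $x = \int_0^{+\infty} \mathbf{1}_{\{u < x\}} du$, which is just the elementary identity $x = \int_0^x du$ rewritten as an improper integral over $(0,\infty)$ with an indicator.

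Applied pointwise to the random variable $X(\omega)>0$, this gives
\[
X(\omega) = \int_0^{+\infty} \mathbf{1}_{\{u < X(\omega)\}} du.
\]
I would then take expectations on both sides and exchange the order of integration. Since the integrand $(\omega, u) \mapsto \mathbf{1}_{\{u < X(\omega)\}}$ is nonnegative and jointly measurable (as $X$ is a random variable and the set $\{(u,\omega) : u < X(\omega)\}$ is measurable in the product $\sigma$-algebra), Tonelli's theorem applies without any integrability hypothesis and yields
\[
\mathbb{E}[X] = \int_\Omega \int_0^{+\infty} \mathbf{1}_{\{u < X(\omega)\}}\, du\, d\mathbb{P}(\omega) = \int_0^{+\infty} \int_\Omega \mathbf{1}_{\{u < X(\omega)\}}\, d\mathbb{P}(\omega)\, du = \int_0^{+\infty} \mathbb{P}(X > u)\, du.
\]
Note that replacing $\mathbb{P}(X > u)$ by $\mathbb{P}(X \geq u)$ changes the integrand only on the (at most countable) set of atoms of the law of $X$, which has Lebesgue measure zero, so the identity is insensitive to that choice.

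There is essentially no obstacle here: the only subtlety worth flagging is that the identity holds in $[0,+\infty]$ regardless of whether $\mathbb{E}[X]$ is finite, so no integrability assumption is needed to invoke Tonelli (as opposed to Fubini). If $\mathbb{E}[X] = +\infty$ the tail integral also diverges, and conversely, which is exactly the content of the statement.
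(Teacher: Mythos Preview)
Your proof is correct and is the standard layer-cake argument via Tonelli. The paper itself omits the proof of this lemma, stating at the start of Appendix~\ref{Some_us_res} that ``straightforward proofs are omitted'', so there is nothing to compare against; your argument is exactly the kind of routine justification the authors had in mind.
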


\section*{Acknowledgements}
The authors would like to thank Andrea Pallavicini and Emanuela Rosazza Gianin for fruitful discussions.
The second author was supported by the Grant BIRD190200 ``Term Structure Dynamics in Interest Rate and Energy Markets: Modelling and Numerics''.

\end{document}